\documentclass[12pt]{article}
\usepackage{graphicx}
\usepackage{float}
\usepackage[flushleft]{caption2}
\usepackage{amsmath,amsfonts,amssymb}
\usepackage{amsthm}
\usepackage{mathrsfs}
\usepackage{enumitem}
\usepackage{tikz}
\usepackage[title]{appendix}
\usepackage{indentfirst}
\usepackage{epstopdf}
\usepackage{hyperref}
\usepackage{float}
\usepackage{pgfplots}
\usetikzlibrary{calc}  
\hypersetup{hidelinks}

\DeclareMathOperator*{\re}{Re}
\DeclareMathOperator*{\im}{Im}

\usepackage[sort&compress,numbers]{natbib}
\usetikzlibrary{arrows}
\newtheorem{lemma}{Lemma}[section]
\newtheorem{theorem}{Theorem}[section]

\newtheorem{corollary}{Corollary}[section]

\newtheorem{proposition}{Proposition}[section]

\setitemize[1]{itemsep=0pt,partopsep=0pt,parsep=\parskip,topsep=5pt}
\numberwithin{equation}{section}
\title{\LARGE\bf  {Asymptotic behavior analysis of solutions to the Heisenberg ferromagnet equation with the Schwartz initial data} }
\begin{document}
\author{\leftline{\hspace{0.6 cm}}\\ Xumeng Zhou$^{a}$, Xianguo Geng$^{b,c}$\footnote{\footnotesize
 Corresponding author. {\sl E-mail address}: xggeng@zzu.edu.cn}, Bo Xue$^{a}$\\
\leftline{\hspace{0.6 cm}{\small{$^{a}$ School of Mathematics and Statistics, Henan University, Kaifeng, Henan 475004, }}}\\
\leftline{\hspace{0.6 cm}{\small{\quad People's Republic of China}}}\\
\leftline{\hspace{0.6 cm}{\small{\sl $^{b}$ School of Mathematics and Statistics, North China University of Water Resources }}}\\
	\leftline{\hspace{0.6 cm}{\small{\quad and Electric Power, Zhengzhou, Henan 450011, People's Republic of China}}}\\
\leftline{\hspace{0.6 cm}{\small{\sl $^{c}$ School of Mathematics and Statistics, Zhengzhou University, 100 Kexue Road, Zhengzhou, }}}\\
	\leftline{\hspace{0.6 cm}{\small{\sl \quad Henan 450001, People's Republic of China}}}}

\date{}
\maketitle

\begin{abstract}
This work aims to investigate the long-time asymptotic behavior of solutions to the Cauchy problem for the Heisenberg ferromagnet equation with the Schwartz initial data. Utilizing the gauge transformations, spectral analysis and the inverse scattering method, we prove that the solutions to the Heisenberg ferromagnet equation can be expressed in terms of the solutions to a matrix Riemann-Hilbert problem formulated in the complex $k$-plane. Various Deift-Zhou contour deformations and the motivation behind them are given. By applying the nonlinear steepest descent method to the associated matrix-valued Riemann-Hilbert problem, we obtain the exact leading-order asymptotic formulas and uniform error estimates for solutions to the Cauchy problem of the Heisenberg ferromagnet equation.	\\ 

\noindent{\bf Keywords:} Heisenberg ferromagnet equation, Riemann-Hilbert problem, Nonlinear steepest descent method, Behavior analysis of solutions\\
\noindent{\bf 2020 Mathematics Subject Classification:} 35Q51; 35B40; 35Q15; 37K15\\
\end{abstract}

\section{Introduction}
	The Heisenberg ferromagnet (HF) equation is a fundamental integrable model used to describe the nonlinear evolution of one-dimensional spin waves. Its physical origin can be traced back to the quantum-mechanical exchange interaction model for ferromagnets proposed by Heisenberg in 1928 \cite{Heisenberg W}. 
This model characterizes the coupling interaction between adjacent spins and the motion of the magnetization vector of the isotropic ferromagnets. Later, it was derived in the form of nonlinear partial differential equations under the continuous medium approximation, and has become a classic research object in the interdisciplinary fields such as condensed matter physics, nonlinear optics, plasma physics, and integrable systems \cite{Takhtajan-Heisenberg,ZT-1979}. In the 1980s, Takhtajan extended the inverse scattering transform (IST) method to the continuous Heisenberg spin chain, providing a general framework for solving the equations of motion and giving an infinite number of conservation laws of the HF equation, thus laying the foundation for subsequent asymptotic analysis.
For this equation, the Darboux transformation was constructed, and its solition solutions and rogue waves were obtained \cite{Darboux-Heisenberg,Zha}. Demontis et al. derived a new general closed-form expression for the soliton solutions of the HF equation satisfying the in-plane asymptotic conditions via the inverse scattering transform and the matrix triple method \cite{CCHF}. The geometric relation associated with the HF model was studied by resorting to the motion of curves in Minkowski space \cite{Zha}.

The study of integrable systems underwent a revolutionary transformation with the advent of the IST \cite{IST-Gardner,Zakharov}. This method was established in the 1960s and provided a systematic approach for solving integrable nonlinear evolution equations. However, the complex analytical methods involved in IST have led to the development of a more rigorous mathematical reconstruction method, namely the Riemann-Hilbert (RH) problem. By recasting the scattering data as a boundary value problem in the complex plane, the RH method not only yields exact solutions for models such as the HF equation, but also provides a unified framework for the theory of integrable systems. Although the RH method has achieved success in constructing exact solutions, for a long time, its direct application to analyzing the long-time asymptotic behavior has been hindered by the notorious difficulty of estimating oscillatory integrals. In 1993, Deift and Zhou introduced the nonlinear steepest descent method in the RH problem, achieving a crucial breakthrough \cite{DeiftZ}. This pioneering framework transforms the originally intractable original RH problem into a series of solvable problems through three key steps: (1) Contour deformation, aligning the jump contour with the paths of steepest descent for the phase function; (2) Rational-oscillatory factorization of the jump matrix, to separate the dominant solitonic components; (3) Local rescaling near critical points, to resolve fine-scale behavior analytically. This method successfully decouples the soliton-dominated terms from the radiative decaying terms, providing a powerful and universal tool for the long-time asymptotic analysis of soliton equations. The universality of this method has enabled it to rapidly become an important tool for studying classical integrable equations associated with \(2\times2\) matrix spectral problems, such as the KdV equation, modified KdV (mKdV) equation, nonlinear Schr\"odinger (NLS) equation, sine-Gordon equation, Camassa-Holm equation and so on \cite{Andreiev-ELT2016,Minakov-2011,Liunan-mKdV,Biondini-M2017,Boutet-LS2021,Boutet-LS2022,Boutet-KS2022,Cheng-VZ1999,Deift-P2011,Kitaev-V1999}, and has systematically revealed the decay and oscillation behaviors of the solutions of these equations over long time scales. As the research progressed, the application scope of this method continued to expand, gradually extending to complex scenarios such as higher-order, multi-component and discrete systems. In the case of high-order matrix spectral problems, this method has been successfully extended to certain soliton equations associated with \(3\times3\) and even \(4\times4\) matrix spectral problems, such as the coupled nonlinear Schr\"odinger equation, the Sasa-Satsuma equation, the spin-1 Gross-Pitaevskii equation, etc \cite{Boutet-LS2019,Boutet-S2014,Boutet-SZ2016,Geng-L2018,Liu-GX2018,Geng-WC2021,Geng-WC2022}. Even when facing more complex RH problem structures and a larger number of stationary points, it can still accurately characterize the asymptotic behavior of solutions \cite{Hirota-F,coupled Hirota,TianSF,GuoBL,ZQMW-GI}. In the direction of discrete integrable systems, researchers have applied this method to models like the discrete focusing mKdV equation \cite{discrete-mKdV}, and have obtained the long-time asymptotic behavior of the solutions of this discrete integrable systems. 

In this paper, we study the long-time asymptotic behavior of solutions to the Cauchy problem for the HF equation:
\begin{equation}\label{HF}
\begin{cases}
    u_t= -\frac{1}{2}(u_xw-uw_x)_x,  \\
    v_t= \frac{1}{2}(v_xw-vw_x)_x,   \\
    u(x,0)=u_0(x), \quad v(x,0)=v_0(x),
\end{cases}
\end{equation}
where $u$ and $v$ are two potentials, and $w=\sqrt{1-uv}$. The initial values $u_0(x)$  and $v_0(x)$ lie in the Schwartz space
\begin{equation*}
  \mathscr{S}(\mathbb{R})=\{f(x)\in C^\infty(\mathbb{R}):\sup_{x\in\mathbb{R}}|x^\alpha\partial^\beta f(x)|<\infty,\forall\alpha,\beta\in\mathbb{N}\}.
\end{equation*}
Due to the existence of energy-dependent potentials and the constraint among potentials, the spectral analysis of the matrix eigenvalue problem is extremely difficult. The long-time asymptotic behavior of the solution to the Cauchy problem of the HF equation remains an unsolved challenge. The method we employ here is a combination of the inverse scattering transform and Deift-Zhou nonlinear steepest descent method. Resorting to the spectral analysis of Lax pair, the introduced transformations of field variables and transformations of independent variables, we first perform asymptotic analysis in the region $k \to \infty$ ($k$ is the spectral parameter) and then establish a local asymptotic expansion near $k=0$ to achieve the precise reconstruction of the potentials. Furthermore, due to the inherent lack of symmetry in the HF equation, we introduce an additional scattering matrix and complete the scattering relationship.
On this basis, the Deift-Zhou nonlinear steepest descent method is systematically applied to the constructed RH problem to rigorously derive the long-time asymptotic behavior of the solution, including the complete higher-order asymptotic correction terms. At this point, the long-time asymptotic behavior of the HF equation under the constraint of potentials has not yet been fully obtained. 

To address these critical gaps, the main contributions of this paper are summarized as follows: (1) We propose a novel two-step asymptotic analysis strategy ($k \to \infty$ followed by $k=0$ expansion) to solve the reconstruction problem of the potentials, that cannot be directly solved by traditional RH problems; (2) We introduce an auxiliary scattering matrix to address the deficiency of symmetry and provide a rigorous mathematical foundation for the subsequent asymptotic analysis; (3) We obtain a complete long-time asymptotic expansion that includes higher-order correction terms, systematically revealing the coupling mechanism in the radiation region; (4) We fill the research gap on the long-time asymptotic behavior of the HF equation under the constraint of potentials, thereby enriching and improving the asymptotic theory of multi-component integrable systems.

The main conclusions of this paper are as follows:
\begin{theorem}\label{the result}
    Let $u(x,t)$ and $v(x,t)$ be the solution to the Cauchy problem for the HF equation \eqref{HF} with initial data $u_0(x), v_0(x) \in \mathscr{S}(\mathbb{R})$. Assume that the spectral function $a(k)$ have no zeros in the upper half-line. As $t \to \infty$, the solution to the Cauchy problem for the HF equation \eqref{HF} admits the following asymptotic expression:

\begin{equation}
\begin{aligned}
  u(x,t)=& \frac{\mathrm{d}}{\mathrm{d}x}\left[ \frac{\alpha_u(\hat{k}_0)}{\sqrt{t}}e^{\Theta}+O\left(\frac{\log t}{t} \right) \right],  \\ 
  v(x,t)=& \frac{\mathrm{d}}{\mathrm{d}x}\left[ \frac{\alpha_v(\hat{k}_0)}{\sqrt{t}}e^{\tilde{\Theta}}+O\left(\frac{\log t}{t} \right) \right], 
  \end{aligned} 
\end{equation}
where
\begin{equation*}
\begin{aligned}
\alpha_u(\hat{k}_0)=&-\frac{1}{2\hat{k}_0^2}\sqrt{\frac{\nu \gamma_1(\hat{k}_0)}{i\gamma_2(\hat{k}_0)} }, \quad \alpha_v(\hat{k}_0)= \frac{1}{2\hat{k}_0^2}\sqrt{\frac{\nu \gamma_1(\hat{k}_0)}{i\gamma_2(\hat{k}_0)} },\\
  \Theta =& -\frac{\pi i}{4}+\pi \nu +2\hat{k}_0^2t- i\arg \Gamma (i \nu) -i\nu \log(4t)- i\arg \frac{\gamma_1(\hat{k}_0)}{1+\gamma_1(\hat{k}_0)\gamma_2(\hat{k}_0)} \\ 
& -\frac{1}{\pi}
  \int_{-\infty}^{\hat{k}_0} \log \left(\frac{1+\gamma_1(\xi)\gamma_2(\xi)}{1+\gamma_1(\hat{k}_0)\gamma_2(\hat{k}_0)}\right) \frac{\mathrm{d} \xi}{\xi -\hat{k}_0}, \\ 
  \tilde{\Theta} =& \frac{\pi i}{4}-\pi \nu -2\hat{k}_0^2t- i\arg \Gamma (-i \nu) +i\nu \log(4t)- i\arg \gamma_2(\hat{k}_0) \\ 
& +\frac{1}{\pi}
  \int_{-\infty}^{\hat{k}_0} \log \left(\frac{1+\gamma_1(\xi)\gamma_2(\xi)}{1+\gamma_1(\hat{k}_0)\gamma_2(\hat{k}_0)}\right) \frac{\mathrm{d} \xi}{\xi -\hat{k}_0},
\end{aligned}
\end{equation*}
and
\begin{equation*}
\begin{aligned}
  \hat{k}_0=&\frac{x}{2t}, \quad \nu=\frac{1}{2\pi} \log (1+\gamma_1(\hat{k}_0)\gamma_2(\hat{k}_0)).
\end{aligned}
\end{equation*}
\end{theorem}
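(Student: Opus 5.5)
The plan is to follow the Deift--Zhou nonlinear steepest descent scheme. The work splits into four parts: an inverse scattering set-up, a gauge and independent-variable change that turns the HF Lax pair into a derivative-NLS-type pair, the standard contour deformations with a parabolic-cylinder local model, and a reconstruction step through the $k\to0$ expansion.

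\textbf{Step 1: Lax pair and RH problem.} We start from the $k$-dependent Lax pair of \eqref{HF}, in which the $x$-part is $-ik\,U$ with $U$ traceless, $U^2=I$, and entries built from $w,u,v$. We diagonalize the leading part by the gauge transformation $\Psi=G(x,t)\Phi$ with $GUG^{-1}=\sigma_3$. We introduce the new independent variable $y=x-\int_x^{\infty}(w-1)\,dx'$, and accordingly conjugate by $e^{-i(ky+2k^2t)\sigma_3}$. The normalized Jost functions then have good large-$k$ behaviour and define the scattering matrix, with entries $a(k)$, $b(k)$ and an auxiliary second set $\tilde a,\tilde b$. No symmetry relates $u$ and $v$, so there are two independent reflection coefficients $\gamma_1,\gamma_2$ on $\mathbb{R}$. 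Since $a$ has no zeros, we obtain a holomorphic RH problem $M(y,t,k)$ with jump
\[
J=\begin{pmatrix}1+\gamma_1\gamma_2 & \gamma_2 e^{-2i\theta}\\ \gamma_1 e^{2i\theta}&1\end{pmatrix},\qquad \theta=ky/t\cdot t+2k^2t .
\]
The entries are arranged as in the statement, so that $1+\gamma_1\gamma_2$ appears. The $k\to\infty$ expansion only recovers $y(x,t)$ and $w$. We therefore also expand $M$ at $k=0$, where the Lax pair degenerates. There $M(0)$ is determined by $G$, and the $O(k)$ coefficient gives $\int^x u$ and $\int^x v$ up to the phase factors $e^{\pm}$. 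This is why the final formula appears as $\frac{\mathrm{d}}{\mathrm{d}x}[\cdots]$.

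\textbf{Step 2: Deformations.} The stationary point is $\hat k_0=x/(2t)$, up to the shift between $y$ and $x$, which costs only $O(t^{-1})$ in the phase and is absorbed in the error. We define $\delta(k)=\exp\bigl\{\frac{1}{2\pi i}\int_{-\infty}^{\hat k_0}\frac{\log(1+\gamma_1\gamma_2)}{\xi-k}d\xi\bigr\}$. Conjugating by $\delta^{\sigma_3}$ lets us use the upper/lower factorization for $k>\hat k_0$ and the lower/upper one, with the $(1+\gamma_1\gamma_2)^{-1}$ factors, for $k<\hat k_0$. Next we split $\gamma_j$ into analytic approximations plus small remainders, and open lenses along the rays $\hat k_0+\mathbb{R}_+e^{\pm i\pi/4}$, $\hat k_0+\mathbb{R}_+e^{\pm 3i\pi/4}$, where $e^{\pm2i\theta}$ decays. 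The Schwartz class of $\gamma_j$ makes the nonanalytic remainders $O(t^{-N})$. Alternatively, a $\bar\partial$ extension can be used; it gives the same $O(t^{-1}\log t)$ error.

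\textbf{Step 3: Local model and errors.} Near $\hat k_0$ we rescale $k=\hat k_0+\zeta/\sqrt{8t}$ and freeze $\gamma_j$ at $\hat k_0$. The resulting model is the standard parabolic cylinder RH problem with independent coefficients $\gamma_1(\hat k_0),\gamma_2(\hat k_0)$. Its residue is $\beta_{12}=\frac{\sqrt{2\pi}e^{i\pi/4}e^{-\pi\nu/2}}{\gamma_1\Gamma(-i\nu)}$-type, with the analogue $\beta_{21}$, and $\beta_{12}\beta_{21}=\nu$. This product is the source of $\sqrt{\nu\gamma_1/(i\gamma_2)}$ in $\alpha_u,\alpha_v$. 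The phases collect $-i\nu\log(4t)$, $\arg\Gamma(\pm i\nu)$, the constant $2\hat k_0^2 t$ from $\theta(\hat k_0)$, and the regular part of $\log\delta$ at $\hat k_0$; the last is the integral of $\log\frac{1+\gamma_1\gamma_2(\xi)}{1+\gamma_1\gamma_2(\hat k_0)}\frac{d\xi}{\xi-\hat k_0}$. Small-norm theory for the error matrix on the lens contour gives $M=I+\frac{M_1^{pc}}{\sqrt t}+O(t^{-1}\log t)$, uniformly for $\hat k_0$ in compacts. Evaluating this at $k=0$ rather than at infinity produces the factor $1/(k-\hat k_0)$ and its $k$-derivative. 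Together with the $k^{-1}$ scaling of the entries of the Lax pair, this yields the prefactor $\mp\frac{1}{2\hat k_0^2}$.

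\textbf{Main obstacle.} The hardest step is the reconstruction at $k=0$. The required data sit at $k=0$, which is off the stationary point, so the correction at $0$ must be computed through the Cauchy integral of the error $\frac{1}{2\pi i}\int\frac{(\cdot)}{(s-0)^2}$. The gauge factor $G$ and the coordinate change $y\leftrightarrow x$ must also be tracked so that only $O(\log t/t)$ terms are lost; this bookkeeping fixes the constants in $\Theta$ and $\tilde\Theta$. I would first do the computation at $k\to\infty$ to identify $y$, and only then expand at $0$, checking consistency through $w^2=1-uv$.
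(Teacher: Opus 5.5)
Your outline follows the paper's route. The shared steps are: the gauge $G$ and the variable $y$; two independent reflection coefficients; recovery of $\int_x^\infty u$ and $\int_x^\infty v$ from $M(0)^{-1}M(k)$ at $k=0$; the $\delta$-conjugation; the $h_I+h_{II}+R$ splitting and lenses; the parabolic-cylinder model; and evaluation of the $\xi^{-1}$, $\xi^{-2}$ Cauchy integrals at the off-critical point $0$. That last step yields $-M_1^A/(k_0\sqrt{4t/i})$ and $M_1^A/(k_0^2\sqrt{4t/i})$, hence the $1/k_0^2$ prefactor.

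The step that fails is your identification of the critical point with $\hat k_0=x/(2t)$ ``at an $O(t^{-1})$ cost in the phase''. The shift $y-x=-\int_x^{\infty}(w-1)\,\mathrm{d}\xi$ is bounded but not small. Along $x=2\hat k_0 t$ one has $w-1\approx -uv/2=O(t^{-1})$ over a range of length $O(t)$, so $y-x$ stays of order one and in general does not tend to $0$. Hence $k_0-\hat k_0=(y-x)/(2t)=O(t^{-1})$ is harmless inside $\gamma_j$, $\nu$, $\chi$ and $\log(4t)$. But the phase term changes by $2t(k_0^2-\hat k_0^2)=(y-x)(k_0+\hat k_0)=O(1)$, which cannot be absorbed into $O(\log t/t)$. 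To state the result in $x$ you would need the large-$t$ limit of $\int_x^\infty(w-1)\,\mathrm{d}\xi$, for instance from the coefficient of $M$ that encodes $y$, computed by the same steepest descent. The resulting extra phase must then be carried into $\Theta$ and $\tilde\Theta$. The paper's own argument ends with $\tilde u(y,t)$, $\tilde v(y,t)$ written through $k_0=y/(2t)$ and passes to $\hat k_0=x/(2t)$ without this step, so it gives you nothing to lean on. Still, the justification you offer is incorrect.

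Two further points in your bookkeeping also fail.

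First, your normalization is not that of \eqref{HF}. With $\theta=ky+2k^2t$ the critical point is $-y/(4t)$, not $y/(2t)$. Moreover, for \eqref{HF} as written, the $t$-part of the Lax pair has leading term $-\lambda^2\sigma_3=k^2\sigma_3$, so the time factor is $e^{k^2t\sigma_3}$ (the paper records $\psi\sim e^{ikx\sigma_3+k^2t\sigma_3}$), not $e^{-2ik^2t\sigma_3}$. With your real $\theta$ and oscillatory jumps, $\theta(\hat k_0)$ contributes only a unimodular factor. By contrast, $\Theta$ and $\tilde\Theta$ contain the real terms $\pm2\hat k_0^2t$, which come from the paper's phase $P=-ky+k^2t$.

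Second, $\delta(0)\neq 1$ in general. From $M=M^{(1)}\delta^{\sigma_3}$ the reconstruction gives $\tilde u=-\delta(0)^{-2}\bigl[(M_0^{(1)})^{-1}M_1^{(1)}\bigr]_{12}$, and $\tilde v$ carries a factor $\delta(0)^{2}$. This is an extra factor $\exp\{\pm\frac{i}{\pi}\int_{-\infty}^{k_0}\log(1+\gamma_1\gamma_2(\xi))\frac{\mathrm{d}\xi}{\xi}\}$ that is not in your list of phase contributions. The paper's expansion $\delta^{\sigma_3}=I+\delta_1\sigma_3k+O(k^2)$ silently sets $\delta(0)=1$ as well, so matching the stated $\Theta$ does not check this point.
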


The structure of this paper is arranged as follows:
In Section 2, we perform spectral analysis and construct the RH problem. We first introduce the Lax pair related to the HF equation, clarify the basic form of its \(2\times2\) matrix spectral problem, and then construct the Jost solutions. Since the spectral problem of this equation contains two potentials, we need to introduce two scattering coefficients to establish the associated RH problem. By analyzing the eigenfunctions near \(k=0\) and the correlation between the two eigenfunctions, we reconstruct the potentials. In Section 3 (long-time asymptotic analysis), based on the idea of the nonlinear steepest descent method, we decompose the jump matrix into the soliton-dominant part and the oscillatory part corresponding to radiation by deforming the contour, extending the contour of the RH problem, and performing scaling transformations. Finally, combining the boundedness analysis of the Cauchy operator and local scaling transformations, we derive the long-time asymptotic expansion of the solution, characterize the exponential decay laws in the soliton region and radiation region, and improve the long-time evolution behavior of the HF equation.

\section{Spectral analysis and Riemann-Hilbert problem}
Before we proceed to this section, we introduce the following notations.
\begin{itemize}
	 \item Let $\sigma_1, \sigma_2$ and $\sigma_3$ be the Pauli matrices, that is
\begin{equation*}
	\sigma_1= \begin{pmatrix} 0	&  1  \\ 1	& 0	\end{pmatrix}, \ \
	\sigma_2= \begin{pmatrix} 0	&  -i  \\ i	& 0	\end{pmatrix},  \ \
	\sigma_3= \begin{pmatrix} 1	&  0  \\ 0	& -1\end{pmatrix}.
\end{equation*}
Besides, in this article we define
\begin{equation*}
\sigma=-i \sigma_2=\begin{pmatrix} 0	&  -1  \\ 1	& 0	\end{pmatrix}.
\end{equation*}
    \item The operator $\hat{\sigma}_3$ acts on a $2 \times 2$ matrix $A$ by $\hat{\sigma}_3A=[\sigma_3, A]$, then $e^{\hat{\sigma}_3} A=
			e^{\sigma_3} A e^{-\sigma_3},$ where $[\cdot, \cdot]$ denotes the standard matrix commutator. Besides, for another $2 \times 2$ matrix $B$, we have $e^{\hat{A}}B=e^A B e^{-A}$.
    \item The superscript $``*"$ means the Schwartz conjugate, $``\dagger"$ means the Hermitian conjugate.
	\item For a $2 \times 2$ matrix $Z$, let $Z_1$ and $Z_2$ denote the first and second columns of $Z$, respectively. Let $Z^{(D)}$ denote the diagonal part of $Z$, and $Z^{(O)}$ denote the off-diagonal part of $Z$. Besides, let $Z_{ij}$ denote the $(i,j)$-entry of $Z$ for $i,j=1,2$.
	\item For two quantities $A$ and $B$, define $A\lesssim B$ if there exists a constant $C>0$ such that $A\leqslant CB$. 
\end{itemize}
\subsection{Transformation of the Lax pair}
The HF equation \eqref{HF} admits the following Lax pair:
	\begin{subequations}\label{Lax pair 1}
		\begin{align}
			\psi_x=U \psi, \\
			\psi_t=V \psi, 
		\end{align}
	\end{subequations}
where $\psi= (\psi_1, \psi_2)^T$ is a $2 \times 2$ matrix-valued function and
\begin{equation}
	U=\lambda \begin{pmatrix} w
		& u  \\  v  
		& -w
	\end{pmatrix}, \ \  \  \  
V=\lambda \begin{pmatrix} V_{11}
	& V_{12} \\  V_{21}
	& -V_{11}
\end{pmatrix}.
\end{equation}
Here $\lambda \in \mathbb{C}$ is the spectral parameter and 
\begin{equation}
	\begin{split}
		V_{11}&=-\frac{1}{4}  (uv_x-u_xv)- \lambda w, \\
		V_{12}&=-\frac{1}{4w} u(uv_x-u_xv)	-\frac{1}{2w} u_x-\lambda u, \\
		V_{21}&=-\frac{1}{4w}  v(uv_x-u_xv)	+\frac{1}{2w} v_x- \lambda v.
	\end{split}
\end{equation}

To simplify subsequent analysis, we set $\lambda=ik$ and introduce a gauge transformation
\begin{equation}\label{psi to psi tilde}
\psi(x,t,k)=G(x,t)\tilde{\psi}(x,t,k),
\end{equation}
where
\begin{equation}
	G=\sqrt{\frac{1+w}{2w}}  \begin{pmatrix} 1
		&  -\frac{u}{1+w}\\ \frac{w-1}{u}
		& 1
	\end{pmatrix}.
\end{equation}
As $x \to \pm \infty$, $G \to I$. In addition, the HF equation \eqref{HF} satisfies the following conservation law:
\begin{equation}
w_t=-\frac{1}{4}(uv_x-u_xv)_x.
\end{equation}
The Lax pair \eqref{Lax pair 1} can be rewritten as
\begin{subequations}\label{Lax pair 2}
\begin{align}
	\tilde{\psi}_x&=P_x \sigma_3\tilde{\psi}+\tilde{U}\tilde{\psi},\\
	\tilde{\psi}_t&=P_t \sigma_3 \tilde{\psi}+\tilde{V}\tilde{\psi},
\end{align}
\end{subequations}
where
\begin{equation}
P_x= ik w,\ \quad 	P_t= -\frac{1}{4} ik (uv_x-u_xv) + k^2, 
\end{equation}
\begin{equation}
\tilde{U}= \begin{pmatrix} \tilde{U}_{11}
		&  \tilde{U}_{12} \\ \tilde{U}_{21}
		&  -\tilde{U}_{11}
	\end{pmatrix}, \quad \tilde{V}= \begin{pmatrix} \tilde{V}_{11}
		&  \tilde{V}_{12} \\ \tilde{V}_{21}
		&  -\tilde{V}_{11}
	\end{pmatrix},
\end{equation}
and
\begin{equation*}
\begin{aligned}
\tilde{U}_{11} =& -\frac{u_xv+uv_x}{4w(1+w)}+ik(1-w), \\
\tilde{U}_{12} =& -\frac{uw_x-u_xw}{2w^2}, \\ 
\tilde{U}_{21} =& -\frac{2w_x+u_xv+uv_x}{4uw}, \\
\tilde{V}_{11} =& -\frac{u_tv+uv_t}{4w(1+w)}-\frac{ik(w+1)(uv_x-u_xv)-4k^2uv }{4w}+\frac{1}{4} ik  (uv_x-u_xv), \\ 
\tilde{V}_{12} =& -\frac{uw_t-u_tw}{2w^2}+\frac{ik u(uv_x-u_xv)}{4w}-k^2 u+\frac{ik u^2v(uv_x-u_xv)-2ik u^2v_x+4k^2u^2vw}{8w^2(1+w)} \\ 
              &-\frac{1+w}{8w^2}\left[ik u (1+w)(uv_x-u_xv) + 2ik u_x(1+w)  - 4k^2 u w(1+w) \right], \\  
\tilde{V}_{21} =& -\frac{2w_t+u_tv+uv_t}{4uw}+\frac{2ik v (1-w)-ik u^2 v (uv_x-u_xv)}{4w}+ k^2 u^2 v  + \frac{ik (1-w)u_x v}{4uw^2} \\ 
&+\frac{ik v (1-w)(uv_x-u_xv)}{8w^2}+\frac{1+w}{8w^2}\left[-ik v (uv_x-u_xv)+2ik v_x +4k^2 v\right],
\end{aligned}
\end{equation*}
with $\tilde{U}, \tilde{V} \to 0$, the matrix-valued function $\tilde{\psi}$ tends to a exponential function as $x \to \pm \infty$. Define
\begin{equation}
	P(x,t,k)=ik\left( x -\int_{x}^{+\infty} (w-1) \mathrm{d} \xi \right)+ k^2 t.
\end{equation}
Setting
\begin{equation} \label{psi to Phi}
\Phi (x,t,k) = \tilde{\psi}(x,t,k) e^{-P \sigma_3},
\end{equation} 
we obtain a modified Lax pair
\begin{subequations}\label{Lax pair 3}
\begin{align}
	\Phi_x&= P_x[\sigma_3,\Phi]+\tilde{U}\Phi, \label{Phi x} \\
	\Phi_t&=P_t[\sigma_3,\Phi]+\tilde{V}\Phi.
\end{align}
\end{subequations}

\subsection{Jost solutions}

Based on the Volterra integral equation, we introduce the Jost solutions $\Phi_{\pm}(x,t,k)$:
\begin{equation}
	\Phi_{\pm}(x,t,k)=I+\int_{\pm \infty}^{x} e^{(P(x)-P(\xi))\hat{\sigma}_3} \tilde{U}(\xi,t,k) \Phi_{\pm}(\xi,t ,k) \mathrm{d} \xi.
\end{equation}
\begin{figure}[H]
\vspace{0cm}
\centering
\begin{tikzpicture}
\draw[thick](4,0)--(0,0);
\draw[thick,<-](4,0)--(8,0);
\draw(8.5,-0.25) node [above] {$\mathbb{R}$};
\draw(4,0.5) node [above] {$\mathbb{C}_+$};
\draw(4,-1.3) node [above] {$\mathbb{C}_-$};
\end{tikzpicture}
\caption{The oriented contour on $\mathbb{R}$.}
\label{fig:1} 
\end{figure}
 \noindent The Jost solutions $\Phi_{\pm}=\left(\Phi_{\pm1},\Phi_{\pm2}\right)$ have the following properties (see figure \ref{fig:1}):
\begin{itemize}
\item $\Phi_{+1}$ and $\Phi_{-2}$ are analytic in $\mathbb{C}_+$;
\item $\Phi_{-1}$ and $\Phi_{+2}$ are analytic in $\mathbb{C}_-$;
\item $\Phi_{\pm}(x,t,k)=I+O(\frac{1}{k})$, \quad $k \to \infty$;
\item $\Phi_{\pm}(x,t,k)$ satisfy the symmetry relations $\overline{\Phi(-k)}=-\Phi(\bar{k}), \quad \Phi^{\dagger}(\bar{k})=\sigma \Phi^{-1}(k)\sigma$;
\item $\det \Phi_{\pm}(x,t,k)=1$.
\end{itemize}
Note that the two solutions $\Phi_{\pm}$ of differential equations \eqref{Lax pair 3} are related by a matrix independent of $x$ and $t$, that can be expressed as
\begin{equation}\label{S relate}
\Phi_-(x,t,k)=	\Phi_+(x,t,k)e^{P\hat{\sigma}_3}S(k), \quad k \in \mathbb{R}.
\end{equation}
From the symmetry properties of $\Phi_{\pm}(x,t,k)$, we find that $S(k)$ satisfies symmetric relations

\begin{equation*}
\overline{S(-k)}=-S(\bar{k}), \quad S^{\dagger}(\bar{k})=\sigma S^{-1}(k)\sigma.
\end{equation*}
Thus, $S(k)$ takes the form
\begin{equation*}
S(k)= \begin{pmatrix} a(k)
		& b(k) \\ c(k) & d(k)	\end{pmatrix}, \quad k \in \mathbb{R},
\end{equation*}
where $a(k)$ , $b(k)$, $c(k)$ and $d(k)$ satisfy
\begin{equation}\label{Symmetry abcd}
\begin{aligned}
&a^*(k)=-a(k)=a(-k), \quad b^*(k)=-b(k)=b(-k), \\
&c^*(k)=-c(k)=c(-k), \quad d^*(k)=-d(k)=d(-k), \quad k \in \mathbb{R}.
\end{aligned}
\end{equation}
Taking the determinant of both sides of \eqref{S relate} yields
\begin{equation}
\det S(k)=1.
\end{equation} 
Furthermore, the asymptotic properties of $\Phi_{\pm}(x,t,k)$ implies $a(k) \to 1$ as $k \to \infty$.
\subsection{Riemann-Hilbert Problem}
We define the matrix-valued function $\tilde{M} (x, t, k) $ by
\begin{equation}
	\tilde{M}(x,t,k)=\left\{
\begin{array}{l@{\quad}l} 
\left( \dfrac{\Phi_{-1}(x,t,k)}{a(k)}, \Phi_{+2}(x,t,k) \right), & k \in \mathbb{C}_-, \\
\left( \Phi_{+1}(x,t,k), \dfrac{\Phi_{-2}(x,t,k)}{d(k)} \right), & k \in \mathbb{C}_+.
\end{array}
\right.
\end{equation}
Introduce the following new scattering data: two reflection coefficients
\begin{equation}
	\gamma_1(k)=\frac{c(k)}{a(k)}, \quad \gamma_2=-\frac{b(k)}{d(k)},\quad k \in \mathbb{R}.
\end{equation}
Direct calculation indicates the following RH problem:
\begin{equation}\label{RHP M tilde}
	\tilde{M}(x,t,k)=\left\{
	\begin{array}{l@{\quad}l}  \tilde{M}_+(x,t,k)=\tilde{M}_-(x,t,k)J(x,t,k), & k \in \mathbb{R}, \\
		\tilde{M}(x,t,k) \to I, & k \to \infty,
	\end{array}
	\right.
\end{equation}
where
\begin{equation}\label{J x}
	J(x,t,k)=\begin{pmatrix} 1+\gamma_1 \gamma_2
		&  \gamma_2 e^{2P}\\ \gamma_1 e^{-2P}
		& 1
	\end{pmatrix}.
\end{equation}
To eliminate the soliton-type phenomena, we hereby assume that $\det a(k)$ and $\det  a^*(k)$ have no zeros. According to the properties of $\Phi_{\pm}(x,t,k)$ and $S(k)$, we immediately get the following lemma.
\begin{lemma} \label{lemma M tilde}
  The solution $\tilde{M}(x,t,k)$ to the RH problem \eqref{RHP M tilde} has the properties:
\begin{itemize}
	\item[(i)]  $\tilde{M} (x,t,k)$ is piecewise meromorphic in $\mathbb{C}\setminus\mathbb{R}$, with continuous boundary values $\tilde{M}_\pm(x,t,k)$ on $\mathbb{R}$. Furthermore, $\tilde{M}_+(x,t,k)$ is analytic in the upper complex $k$-plane  $\mathbb{C}_+$, and $\tilde{M}_-(x,t,k)$ is analytic in the lower complex $k$-plane $\mathbb{C}_-$;
	\item[(ii)] The normalization condition: $\tilde{M} (x, t,k) \to I $ as $ k \to \infty$;
	\item[(iii)] The symmetry conditions:
			$	\tilde{M}^{\dagger}(k^*)=\tilde{M}^{-1}(k), \quad \tilde{M}^*(-k)=\tilde{M}(k^*)$;
	\item[(iv)] $\det \tilde{M}_{\pm}(x,t,k)=1$.
\end{itemize}
\end{lemma}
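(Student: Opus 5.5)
The plan is to read off each item of Lemma \ref{lemma M tilde} from the properties of the Jost solutions $\Phi_\pm$ and of the scattering matrix $S(k)$ listed above. The argument is column by column, using the definition of $\tilde M$.

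\emph{Analyticity and boundary values (item (i)).} In $\mathbb{C}_-$ the columns of $\tilde M$ are $\Phi_{-1}/a$ and $\Phi_{+2}$. In $\mathbb{C}_+$ they are $\Phi_{+1}$ and $\Phi_{-2}/d$. Each Jost column is analytic in the stated half-plane and continuous up to $\mathbb{R}$; this follows from the Volterra equation, since $e^{\pm 2(P(x)-P(\xi))}$ is bounded in the appropriate half-plane and $\tilde U\in\mathscr S$. From \eqref{S relate} one has $a=\det(\Phi_{-1},\Phi_{+2})$ and $d=\det(\Phi_{+1},\Phi_{-2})$, up to the exponential conjugation, so $a$ and $d$ extend analytically to the respective half-planes. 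Hence $\tilde M$ is meromorphic in $\mathbb{C}\setminus\mathbb{R}$, with possible poles only at zeros of $a$ and $d$. Under the no-zero assumption it is analytic, with continuous boundary values.

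\emph{Normalization and determinant (items (ii) and (iv)).} The relation $\Phi_\pm=I+O(1/k)$, combined with the Wronskian representations, gives $a,d\to1$, and therefore $\tilde M\to I$. For the determinant, on $\mathbb{C}_-$ we have $\det\tilde M=\det(\Phi_{-1},\Phi_{+2})/a$. This Wronskian is independent of $x$, because the trace of $\tilde U$ vanishes and $P_x[\sigma_3,\cdot]$ preserves column Wronskians after removing the exponential. Letting $x\to+\infty$, the Wronskian equals $a$, so $\det\tilde M=1$. The same argument gives $\det\tilde M=1$ on $\mathbb{C}_+$ using $d$.

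\emph{Jump matrix.} Although it is not listed as an item, the jump is implicit in the RH problem. Write $\Phi_-=\Phi_+e^{P\hat\sigma_3}S$ columnwise, so that $\Phi_{-1}=a\Phi_{+1}+ce^{-2P}\Phi_{+2}$, and similarly for $\Phi_{-2}$. Solving for $\tilde M_+$ in terms of $\tilde M_-$ and using $ad-bc=1$ gives the entries $1+\gamma_1\gamma_2$, $\gamma_2e^{2P}$, $\gamma_1e^{-2P}$ and $1$ in \eqref{J x}.

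\emph{Symmetries (item (iii)).} This is the main obstacle. The symmetries of $\Phi_\pm$ must be transported through the division by $a$ and $d$, and the relation $\tilde M^{\dagger}(k^*)=\tilde M^{-1}(k)$ exchanges the two half-planes. The plan is as follows.

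\begin{itemize}
\item Start from $\Phi^\dagger(\bar k)=\sigma\Phi^{-1}(k)\sigma$, written with $\Phi^{-1}=\sigma^{-1}\Phi^{T}\sigma$ using $\det=1$ and the identity $\sigma A^{-T}\sigma^{-1}=A$ for unimodular $A$.
\item Deduce the columnwise relations between $\overline{\Phi_{\pm j}(\bar k)}$ and the rows of $\Phi_\pm^{-1}(k)$.
\item Pair these with the corresponding relations for $S$ from \eqref{Symmetry abcd}, which give $\overline{a(\bar k)}$ in terms of $d(k)$.
\item Compare the two half-plane definitions of $\tilde M$ entry by entry.
\end{itemize}

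The relation $\tilde M^*(-k)=\tilde M(k^*)$ follows in the same way from $\overline{\Phi(-k)}=-\Phi(\bar k)$, noting that $k\mapsto-\bar k$ preserves each half-plane. The delicate bookkeeping is the sign conventions. I would check consistency on the jump condition, requiring that $J^\dagger(\bar k)=J(k)$-type relations hold with $\gamma_1,\gamma_2$ as defined, and adjust signs accordingly.
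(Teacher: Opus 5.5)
Your arguments for (i), (ii) and (iv) are fine and follow the paper's route; the paper's only justification is that the lemma is immediate from the properties of $\Phi_\pm$ and $S(k)$. Your Wronskians $a=\det(\Phi_{-1},\Phi_{+2})$ and $d=\det(\Phi_{+1},\Phi_{-2})$ are the ones consistent with \eqref{S relate}; the paper's later formula $a=\det(\Phi_{+1},\Phi_{-2})$ actually gives $d$.

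The gap is (iii). You only outline it and defer the ``sign conventions'', but this is not bookkeeping: the symmetries you propose to transport contradict the normalization.
\begin{itemize}
\item With your own rewriting $\Phi^{-1}=\sigma^{-1}\Phi^{T}\sigma$, the relation $\Phi^{\dagger}(\bar k)=\sigma\Phi^{-1}(k)\sigma$ becomes $\overline{\Phi(\bar k)}=-\Phi(k)$. Letting $k\to\infty$ gives $I=-I$.
\item That relation also maps each column to itself, so it cannot link the $\mathbb{C}_+$ and $\mathbb{C}_-$ definitions of $\tilde M$. For instance, $\Phi_{-1}(k)$ is not defined for $k\in\mathbb{C}_+$.
\item Transporting $\overline{\Phi(-k)}=-\Phi(\bar k)$ and $\overline{S(-k)}=-S(\bar k)$ through the definition of $\tilde M$, the sign cancels in $\Phi_{-1}/a$ but not in $\Phi_{+2}$. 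For $k\in\mathbb{C}_+$ you get $\overline{\tilde M(-k)}=\tilde M(\bar k)\sigma_3$, which is incompatible with both (iii) and $\tilde M\to I$.
\end{itemize}

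So the symmetries have to be rederived from the Lax pair, and then extra hypotheses appear. The relation $\overline{U(-\bar k)}=U(k)$ holds when $u,v$ are real. It yields $\overline{\Phi_\pm(-\bar k)}=\Phi_\pm(k)$, without the minus sign, and your argument then does give the second relation. The natural source of $\tilde M^{\dagger}(\bar k)=\tilde M^{-1}(k)$ is $U^{\dagger}(\bar k)=-U(k)$, which needs $v=\bar u$ with $w$ real.

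Even then, no sign adjustment rescues the first relation. On $\mathbb{R}$ it says $\tilde M_{\mathrm{lower}}^{\dagger}=\tilde M_{\mathrm{upper}}^{-1}$. Hence the jump equals $\tilde M_{\mathrm{upper}}^{\dagger}\tilde M_{\mathrm{upper}}$ or its inverse, and must be Hermitian, i.e.\ $\gamma_1e^{-2P}=\overline{\gamma_2e^{2P}}$. This ties together the two reflection coefficients that the paper treats as independent. Moreover, $\mathrm{Re}\,P=k^2t$ on $\mathbb{R}$ for real $w$, because the time part $V\to k^2\sigma_3$ violates $V^{\dagger}(\bar k)=-V(k)$. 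The Hermitian condition would therefore force $\gamma_1=\bar\gamma_2e^{4k^2t}$ for all $t$, which is impossible for nonzero $t$-independent data.

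The jump-consistency check you propose would thus not fix a sign. It would show that the first relation in (iii) does not follow from the listed properties at all. A proof needs an explicit reduction on $(u,v)$ together with a time dependence compatible with it.
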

Note that $\tilde{M}(x,t,k)$ admits the following asymptotic expansion as $k \to \infty$,
\begin{equation}
	\tilde{M}(x,t,k)=I+\frac{\tilde{M}_1(x,t)}{k}+\frac{\tilde{M}_2(x,t)}{k^2}+O(\frac{1}{k^3}).
\end{equation}
Substituting this expansion into \eqref{Phi x}, we have
\begin{equation}
	\lim_{k \to \infty}[k \tilde{M}(x,t,k)]_{12}=	\frac{uw_x-u_xw}{2iw^2(1+w)}, \quad  \lim_{k \to \infty}[k \tilde{M}(x,t,k)]_{21}=
	-\frac{2w_x+u_xv+uv_x}{4iuw(1+w)}.
\end{equation}
To this step, the potentials $u(x, t)$ and $v(x,t)$ can generally be reconstructed from the solutions of the fundamental RH problem. However, as shown in the above equation, reconstructing these potential functions is challenging and requires further analysis. Thus, we will concentrate on the transformation with the eigenfunctions near $k=0$.

\subsection{Eigenfunctions near $k=0$ and a new scale}
As $x \to \pm \infty, \ \psi(x,t,k) \sim  e^{ik\sigma_3 x + k^2 \sigma_3 t}$. To this end, we introduce the substitution
\begin{equation}\label{psi to psi hat}
	\hat{\psi}(x,t,k)=\psi(x,t,k) e^{-ik\sigma_3 x - k^2\sigma_3 t},
\end{equation}
via which the original Lax pair is transformed into
\begin{equation}
	\begin{split}
			\hat{\psi}_x&=ik[\sigma_3,\hat{\psi}]+\hat{U}_0 \hat{\psi},\\
			\hat{\psi}_t&=k^2[\sigma_3,\hat{\psi}]+\hat{V}_0 \hat{\psi},
		\end{split}
\end{equation}
where
\begin{equation}
	\hat{U}_0=ik \begin{pmatrix} w-1
		& u  \\  v  
		& 1-w
	\end{pmatrix},  \quad \hat{V}_0= \begin{pmatrix} V_{11}-k^2
	& V_{12}  \\  V_{21}  
	& -V_{11}+k^2
\end{pmatrix}.
\end{equation}
Based on this, we further introduce two additional eigenfunctions $\hat{\psi}_{\pm}(x,t,k)$, which satisfy the Volterra integrable equations as below
\begin{equation}
	\hat{\psi}_{\pm}(x,t,k)=I+\int_{\pm \infty}^{x}e^{ik (x-\xi)\hat{\sigma}_3}\hat{U}_0(\xi,t,k)\hat{\psi}_{\pm}(\xi,t,k)\mathrm{d}\xi.
\end{equation}
\begin{proposition}\label{Asymptotic proposition}
The functions $\hat{\psi}_{\pm}(x,t,k) $ have the expansions as follows:
\begin{equation}\label{hat psi pm expansion}
	\hat{\psi}_{\pm}(x,t,k)=I+ik \begin{pmatrix} \int_{\pm \infty}^{x}(w-1) \mathrm{d} \xi
		& \int_{\pm \infty}^{x} u \mathrm{d} \xi \\ \int_{\pm \infty}^{x} v  \mathrm{d} \xi
		&\int_{\pm \infty}^{x} (1-w) \mathrm{d} \xi
	\end{pmatrix} +O(k^2), \quad  k \to 0.
\end{equation}
\end{proposition}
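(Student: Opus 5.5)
The plan is to treat the Volterra integral equation for $\hat{\psi}_{\pm}$ as a fixed-point problem in which the kernel is itself of order $k$. We then read off the first Picard iterate. The key structural fact is that $\hat{U}_0(x,t,k)=ik\,Q(x,t)$ with
\begin{equation*}
Q=\begin{pmatrix} w-1 & u \\ v & 1-w \end{pmatrix},
\end{equation*}
and $Q$ is independent of $k$. Since $u_0,v_0\in\mathscr{S}(\mathbb{R})$, and we take the solution to stay in the Schwartz class for each fixed $t$, we have $w-1=\sqrt{1-uv}-1=O(uv)$. Hence every entry of $Q$ lies in $L^1(\mathbb{R})$, uniformly for $x$ in half-lines.

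\textbf{Step 1: Neumann series.} Write the equation as $\hat{\psi}_\pm = I + ik\,\mathcal{K}_\pm\hat{\psi}_\pm$, where
\begin{equation*}
(\mathcal{K}_\pm f)(x)=\int_{\pm\infty}^{x} e^{ik(x-\xi)\hat{\sigma}_3}\bigl(Q(\xi)f(\xi)\bigr)\,\mathrm{d}\xi .
\end{equation*}
For $k$ in a small real neighbourhood of $0$ (or more generally $|\operatorname{Im}k|$ bounded, on the appropriate side), the conjugation $e^{ik(x-\xi)\hat{\sigma}_3}$ only multiplies off-diagonal entries by $e^{\pm 2ik(x-\xi)}$. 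These factors are bounded in modulus by $1$ on $\mathbb{R}$, and by $e^{2|\operatorname{Im}k|\,|x-\xi|}$ in general. Since $\int_{\pm\infty}^{x}$ runs over a half-line, the standard Volterra estimate gives
\begin{equation*}
\|(\mathcal{K}_\pm)^n\|\le \frac{\|Q\|_{L^1}^n}{n!}.
\end{equation*}
Therefore the series $\sum_{n\ge0}(ik)^n\mathcal{K}_\pm^n I$ converges uniformly in $x$ and is analytic in $k$ near $0$.

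\textbf{Step 2: first-order term.} The $n=1$ term is $ik\int_{\pm\infty}^{x}e^{ik(x-\xi)\hat{\sigma}_3}Q(\xi)\,\mathrm{d}\xi$. We then replace $e^{ik(x-\xi)\hat{\sigma}_3}$ by the identity. The error is
\begin{equation*}
ik\int_{\pm\infty}^{x}\bigl(e^{\pm2ik(x-\xi)}-1\bigr)Q_{12/21}(\xi)\,\mathrm{d}\xi .
\end{equation*}
We bound it by $|e^{i\theta}-1|\le|\theta|$, which gives the estimate
\begin{equation*}
2|k|^2\int_{\pm\infty}^{x}|x-\xi|\,|Q(\xi)|\,\mathrm{d}\xi .
\end{equation*}
This integral is finite because $u,v$ decay faster than any polynomial, so $\xi Q(\xi)\in L^1$. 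The error is therefore $O(k^2)$, and it is uniform for $x$ in half-lines $(-\infty,x_0]$ or $[x_0,\infty)$ respectively. The remainder $\sum_{n\ge2}$ is also $O(k^2)$ by the factorial bound. Collecting terms yields exactly the matrix of integrals $\int_{\pm\infty}^x(w-1)$, $\int u$, $\int v$, $\int(1-w)$ in \eqref{hat psi pm expansion}.

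\textbf{Main obstacle.} The only delicate point is uniformity in $x$ of the $O(k^2)$ remainder in Step 2, coming from the weight $|x-\xi|$. For $x$ in the favourable half-line, $|x-\xi|\le|\xi|+|x_0|$, and the moment bound suffices. If a globally uniform statement is needed, we would instead keep $x$ fixed, which is all that is used for the reconstruction near $k=0$. We would also allow complex $k$ only on the side where the Volterra kernel does not grow, namely the columns analytic in $\mathbb{C}_\pm$ as for $\Phi_\pm$, and treat $k\to0$ along $\mathbb{R}$ otherwise.
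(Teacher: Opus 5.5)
Your proof is correct and supplies exactly the argument the paper leaves implicit, since the paper states this proposition without proof: because $\hat{U}_0=ikQ$, the Volterra equation is a Neumann series in $ik$; the first iterate with $e^{ik(x-\xi)\hat{\sigma}_3}$ replaced by $I$ gives the displayed matrix, and your first-moment bound together with the factorial bound put everything else in $O(k^2)$. The only inaccuracy is the side remark in Step 1 that the series is analytic in $k$ near $0$: for Schwartz (rather than exponentially decaying) $u,v$ it need not be, since it is smooth in real $k$ and each column extends only to its own closed half-plane, but your $O(k^2)$ estimates never use analyticity, so the conclusion stands.
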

From the asymptotic expansions \eqref{hat psi pm expansion} of $\hat{\psi}_{\pm}(x,t,k) $ near	$k = 0$, we can directly extract the leading-order information about the potentials $u$ and $v$ in the HF model. With this connection between the small-$k$ asymptotics and the potential reconstruction in hand, we now turn to establishing the linear relationship between $\hat{\psi}_{\pm}(x,t,k) $ and the eigenfunction functions $\Phi_{\pm}(x,t,k)$ defined in the preceding sections. 

\subsection{The linear relationship between two eigenfunction functions}
A linear relationship exists between the eigenfunctions $\Phi_{\pm}(x,t,k)$ and the eigenfunctions $\hat{\psi}_{\pm}(x,t,k)$. Both of them are derived from the eigenfunction transformations of $\psi(x,t,k)$ as specified in equations \eqref{psi to psi tilde}, \eqref{psi to Phi} and \eqref{psi to psi hat}. Consequently, there exist matrix-valued functions $\Gamma_{\pm}(k) $ (independent of $x$ and $t$) such that
\begin{equation}\label{Gamma pm}
	\Phi_{\pm}(x,t,k)=G^{-1}(x,t)\hat{\psi}_{\pm}(x,t,k)e^{ik\sigma_3 x+k^2\sigma_3t} \Gamma_{\pm}(k)e^{-P\sigma_3}.
\end{equation}
In the limit as $x \to \pm \infty$, we observe the asymptotic behavior $\Phi_{\pm}(x,t,k) \to I,$  $G(x,t) \to I$ and $ \hat{\psi}_{\pm}(x,t,k) \to I$. Substituting these limits into the above equation yields
\begin{equation}
	\begin{split}
			\Gamma_+ (k)& =I, \\
			 \Gamma_-(k) &=e^{-ik \int_{-\infty}^{+\infty} (w-1) \mathrm{d} \xi \sigma_3} \triangleq e^{-ik \tilde{c} \sigma_3}.
	\end{split}
\end{equation}
By the conservation law, $\tilde{c}$ is a constant independent of $x$ and $t$. Let $\tilde{c}=\tilde{c}_+(x,t) + \tilde{c}_-(x,t),$ where
\begin{equation}
	\begin{split}
	\tilde{c}_+(x,t)&=\int_{x}^{+\infty} (w(\xi)-1) \mathrm{d} \xi, \\
	\tilde{c}_-(x,t)&=\int_{-\infty}^{x} (w(\xi)-1) \mathrm{d} \xi.
		\end{split}
\end{equation}
Hence,
\begin{equation}
	\begin{split}
		\Phi_+(x,t,k)&=G^{-1}(x,t)\hat{\psi}_+ (x,t,k) e^{ik\tilde{c}_+ \sigma_3}, \\
		\Phi_-(x,t,k)&=G^{-1}(x,t)\hat{\psi}_- (x,t,k)e^{-ik\tilde{c}_- \sigma_3}.
		\end{split}
\end{equation}
Combining \eqref{hat psi pm expansion} with \eqref{Gamma pm}, we derive the expansion of the power series in $k$ as $k \to 0$,
\begin{equation}\label{Phi expansion}
	\begin{split}
		\Phi_+(x,t,k)&=G^{-1}\left[I-ik \begin{pmatrix} \int_{x}^{+\infty}(w-1) \mathrm{d} \xi
			& \int_{x}^{+\infty} u \mathrm{d} \xi \\ \int_{x}^{+\infty} v  \mathrm{d} \xi
			&\int_{x}^{+\infty} (1-w) \mathrm{d} \xi
		\end{pmatrix} +O(k^2)\right]e^{ik\tilde{c}_+ \sigma_3}, \\
		\Phi_-(x,t,k)&=G^{-1}\left[I+ik \begin{pmatrix} \int_{-\infty}^{x}(w-1) \mathrm{d} \xi
			& \int_{-\infty}^{x} u \mathrm{d} \xi \\ \int_{-\infty}^{x} v  \mathrm{d} \xi
			&\int_{-\infty}^{x} (1-w) \mathrm{d} \xi
		\end{pmatrix} +O(k^2)\right]e^{-ik\tilde{c}_- \sigma_3}.
	\end{split}
\end{equation}
Recollecting the relation between $\Phi_+(x,t,k)$ and $\Phi_-(x,t,k)$ given in \eqref{S relate}, we have $ a(k)=\det(\Phi_{+1}, \Phi_{-2})$. Furthermore, substituting the expansions \eqref{Phi expansion} leads to
\begin{equation}
a(k)=1 +O(k^2), \quad k \to 0.
\end{equation}
Starting from the definition of $\tilde{M}(x,t,k)$, we now reconstruct the potentials as $k \to 0$ via the following expression:
\begin{equation}
\tilde{M}(x,t,k)=\tilde{M}(x,t,0) \left[ I - ik\begin{pmatrix} \int_{x}^{+\infty}(w-1) \mathrm{d} \xi
			& \int_{x}^{+\infty} u \mathrm{d} \xi \\ \int_{x}^{+\infty} v  \mathrm{d} \xi
			&\int_{x}^{+\infty} (1-w) \mathrm{d} \xi
		\end{pmatrix} +O(k^2)  \right].
\end{equation} 
Therefore, we have
\begin{equation}
\tilde{u}(x,t)=\int_{x}^{+\infty} u(\xi,t) \mathrm{d} \xi=\lim_{k \to 0} \frac{\left[\tilde{M}(x,t,0)^{-1}\tilde{M}(x,t,k)\right]_{12}}{-ik},
\end{equation}
\begin{equation}
\tilde{v}(x,t)=\int_{x}^{+\infty} v(\xi,t) \mathrm{d} \xi=\lim_{k \to 0} \frac{\left[\tilde{M}(x,t,0)^{-1}\tilde{M}(x,t,k)\right]_{21}}{-ik}.
\end{equation}
It is shown that the matrix-valued function $\tilde{M}(x,t,k)$ contains all the necessary information for reconstructing the solutions of the initial value problem \eqref{HF} based on a matrix-valued RH problem. However, we find that the form of the elements $\gamma_1(k) e^{2P}$ and $\gamma_2(k) e^{-2P}$ in the jump matrix \eqref{J x} is quite complex, which increases the difficulty of analyzing the oscillation term later.
For convenience, we introduce a new variable
\begin{equation}
iy(x,t)=x-\tilde{c}_+(x,t),
\end{equation}
in terms of which the phase function becomes
\begin{equation}
P(y(x,t),t,k)=-k y(x,t) + k^2 t.
\end{equation}
 We now consider the RH problem \eqref{RHP M tilde} in the $ (y,t) $ scale. Defining $\tilde{M}(x,t,k)=M(y(x,t),t,k)$, we obtain the basic RH problem:
\begin{equation}\label{RHP M}
	\left\{\begin{array}{ll} M_+(y,t,k)=M_-(y,t,k)J(y,t,k), & \quad k \in \mathbb{R},
	\\ M(y,t,k) \to I, &  \quad k \to \infty,
	\end{array}\right.
	\end{equation}
where the jump matrix $J(y,t,k)$ and phase $\theta(y,t,k)$ are given by
\begin{align}
	J(y,t,k)&= \begin{pmatrix}  1+\gamma_1(k) \gamma_2(k)
	& \gamma_2(k) e^{2it \theta} \\ \gamma_1(k) e^{-2it\theta}
	& 1
	\end{pmatrix}, \label{J y} \\
	\theta(y,t,k) &=ik \frac{y}{t} - ik^2. \label{theta}
\end{align}
From the solution of the basic RH problem, the solution of the Cauchy problem to the HF equation can be derived by expanding $M(x,t,k)$ as $k \to 0$,
\begin{equation}
\tilde{u}(x,t)=\tilde{u}(y(x,t),t), \quad \tilde{v}(x,t)=\tilde{v}(y(x,t),t),
\end{equation}
where
\begin{equation}\label{tilde u}
\tilde{u}(y,t)=\lim_{k \to 0} \frac{\left[M(y,t,0)^{-1}M(y,t,k)\right]_{12}}{-ik},
\end{equation}
\begin{equation}\label{tilde v}
\tilde{v}(y,t)=\lim_{k \to 0} \frac{\left[M(y,t,0)^{-1}M(y,t,k)\right]_{21}}{-ik}.
\end{equation}

\section{Long-time asymptotics}	
With the basic RH problem established in the previous section, our current focus now shifts to the rigorous analysis of the long-time asymptotic behavior of the solution. A critical first step is to determine the stationary points of the phase function $\theta(k)$.
Noting that
\begin{equation}
   \frac{\mathrm{d} \theta (k)}{\mathrm{d} k } =i\frac{y}{t}-2ik,
\end{equation}
we obtain a stationary point $k_0$ from $\frac{\mathrm{d}\theta(k)}{\mathrm{d}k}=0$, where
\begin{equation}
k_0=\frac{\xi}{2}, \quad \xi=\frac{y}{t}.
\end{equation}
Then $\theta(k)$ can be written as 
\begin{equation}
\theta(k)=i(-k^2+2k_0k).
\end{equation}

\subsection{Reorientation}
To prepare for the nonlinear steepest descent analysis, we first decompose the jump matrix into upper/lower triangular factors. For $k > k_0$,  the jump matrix admits the factorization
\begin{equation}
	J(y,t,k)= \begin{pmatrix} 1
	&\gamma_2e^{2it\theta} \\ 0 &1
	\end{pmatrix} \begin{pmatrix} 1
	&0 \\ \gamma_1e^{-2it\theta} &1
	\end{pmatrix},
\end{equation}
where the first matrix can be analytically extended to the region $\re(i \theta)<0$, and the second matrix to $\re(i \theta)>0$. This decomposition is essential for establishing the exponential decay of the jump matrix as $t\to \infty$. For $k<k_0$, the situation is the opposite. We note another decomposition of the jump matrix
\begin{equation}
	J(y,t,k)=\begin{pmatrix} 1
	&0 \\ \frac{\gamma_1}{1+\gamma_1\gamma_2}e^{-2it\theta} &1
	\end{pmatrix} \begin{pmatrix} 1+\gamma_1\gamma_2
	&0 \\ 0  &\frac{1}{1+\gamma_1\gamma_2}  \end{pmatrix}
	\begin{pmatrix} 1
	&\frac{\gamma_2}{1+\gamma_1\gamma_2}e^{2it\theta} \\0  &1  \end{pmatrix}.
\end{equation}
To eliminate the middle diagonal matrix, we consider
\begin{equation}\label{M(1)}
	M^{(1)}(y,t,k)=M(y,t,k)(\delta(k))^{-\sigma_3},
\end{equation}
which transforms the original RH problem into
\begin{equation}
	M_+^{(1)}(y,t,k)=M_-^{(1)}(y,t,k)J^{(1)}(y,t,k),
\end{equation}
where $J^{(1)}(y,t,k)=(\delta_-)^{\sigma_3}J(y,t,k)(\delta_+(k))^{-\sigma_3}$. For $k>k_0$,
\begin{equation}
	J^{(1)}(y,t,k)=
	\begin{pmatrix}1	& \gamma_2 \delta_+^2e^{2it\theta} \\ 0 	& 1	\end{pmatrix}
	\begin{pmatrix}1	&0\\  \gamma_1 \delta_-^{-2} e^{-2it\theta}	& 1	\end{pmatrix}.
\end{equation}
For $k<k_0$, we have
\begin{equation}
	\begin{aligned}
		J^{(1)}(y,t,k) = &\begin{pmatrix}1 & 0 \\ \frac{\gamma_1}{1+\gamma_1 \gamma_2} \delta_-^{-2}  e^{-2it\theta}& 0 \end{pmatrix}
		\begin{pmatrix}(1+\gamma_1 \gamma_2 )\delta_-\delta_+^{-1}&0 \\ 0 & \frac{1}{1+\gamma_1 \gamma_2} \delta_-^{-1}\delta_+ \end{pmatrix}
		 \begin{pmatrix}1&\frac{\gamma_2}{1+\gamma_1 \gamma_2}\delta_+^2  e^{2it\theta}\\ 0 & 1 \end{pmatrix}.
	\end{aligned}
\end{equation}
We now introduce a scalar RH problem for $\delta(k)$:
\begin{equation}\label{RHP delta}
	\left\{\begin{array} {ll}\delta_+(k)=\delta_-(k)(1+\gamma_1(k) \gamma_2(k)), \quad & k<k_0, \\  \delta_+(k)=\delta_-(k),\quad  & k>k_0, \\
	\delta(k) \to 1, \quad & k \to \infty.
	\end{array}\right.
\end{equation}
Combining the symmetry property \eqref{Symmetry abcd} of scattering data, we conclude that the function $1+\gamma_1(k)\gamma_2(k)$ is real-valued and bounded for $k \in \mathbb{R}$. According to the Plemelj formula and the Vanishing Lemma \cite{VanishingLemma}, this scalar RH problem \eqref{RHP delta} admits a unique bounded solution:
\begin{equation}
	\delta(k)= (  k-k_0 )^{- i \nu} e^{\chi(k)},
\end{equation}
where
\begin{align}
	\nu &=\frac{1}{2 \pi} \log(1+\gamma_1(k_0)\gamma_2(k_0)),\label{nu} \\
	\chi(k)&=\frac{1}{2 \pi i}\int_{-\infty}^{k_0}\log \left ( 1+\gamma_1(\xi)\gamma_2(\xi) \right )  \frac{\mathrm{d} \xi}{\xi - k}. \label{Chi}
\end{align}
	\begin{theorem}
The function $\delta(k)$ is uniformly bounded in the complex plane.
	\end{theorem}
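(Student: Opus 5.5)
We take the standard route for such scalar problems: first bound $|\delta|$ and $|\delta|^{-1}$ on the real line, then pass to the whole plane. On the real line we need a reality property. By \eqref{Symmetry abcd} and the discussion preceding \eqref{RHP delta}, the function $1+\gamma_1\gamma_2$ is real-valued and bounded on $\mathbb{R}$. Since $a$ has no zeros and $\det S=1$, it is also bounded away from zero, so $\log(1+\gamma_1\gamma_2)$ is a bounded real function. It also decays like a Schwartz function as $\xi\to-\infty$, because $\gamma_1,\gamma_2\in\mathscr{S}(\mathbb{R})$ and $\gamma_1\gamma_2\to0$. Since $1+\gamma_1(k_0)\gamma_2(k_0)>0$, the exponent $\nu$ in \eqref{nu} is real.

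Rather than estimate $(k-k_0)^{-i\nu}$ and $e^{\chi}$ separately, we would use the single representation
\begin{equation*}
\log\delta(k)=\frac{1}{2\pi i}\int_{-\infty}^{k_0}\frac{\log(1+\gamma_1(\xi)\gamma_2(\xi))}{\xi-k}\,\mathrm{d}\xi ,
\end{equation*}
which the Plemelj formula gives directly. The two pieces are unbounded individually near $k_0$, but this integral is not.

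\textbf{Step 1.} Take $k=\alpha+i\beta$ with $\beta\neq0$ and set $f(\xi)=\log(1+\gamma_1\gamma_2)(\xi)$, which is real. Then
\begin{equation*}
\re\log\delta(k)=\frac{1}{2\pi}\int_{-\infty}^{k_0}\frac{\beta f(\xi)}{(\xi-\alpha)^2+\beta^2}\,\mathrm{d}\xi .
\end{equation*}
This is half of a Poisson integral of the bounded function $f\chi_{(-\infty,k_0)}$. Therefore $|\re\log\delta(k)|\le\frac12\|f\|_{L^\infty}$, which gives
\begin{equation*}
e^{-\|f\|_\infty/2}\le|\delta(k)|\le e^{\|f\|_\infty/2},\qquad k\in\mathbb{C}\setminus\mathbb{R}.
\end{equation*}
The bound is uniform in $k$ and in $k_0$, and it bounds $\delta^{-1}$ as well.

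\textbf{Step 2.} For the boundary values on $\mathbb{R}$ we take nontangential limits. On the real line, $|\delta_\pm(k)|^2$ equals $1+\gamma_1\gamma_2$ or $1$, up to the factors $e^{\pm f/2}$, since the principal value part is purely imaginary and contributes only a phase. Hence $\delta_\pm$ and $\delta_\pm^{-1}$ are bounded by the same constants.

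\textbf{Main obstacle.} The delicate points are the following.
\begin{itemize}
\item We must justify that $f$ is real and bounded, i.e.\ that $1+\gamma_1\gamma_2\ge c>0$. Reality follows from the symmetry relations. A positive lower bound follows from $1+\gamma_1\gamma_2=\frac{1}{ad}$ together with the no-zero assumption on $a$ and $d$, and the limit $a,d\to1$ at infinity.
\item We must handle the endpoint $k_0$, where $\delta$ has the $(k-k_0)^{-i\nu}$ singularity. Because $\nu$ is real, $|(k-k_0)^{-i\nu}|=e^{\nu\arg(k-k_0)}$ stays between $e^{-\pi|\nu|}$ and $e^{\pi|\nu|}$. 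This is consistent with the Poisson bound and confirms that only the phase, not the modulus, is discontinuous at $k_0$.
\end{itemize}
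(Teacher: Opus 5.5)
The paper gives no proof of this theorem. It is only asserted after the formula $\delta=(k-k_0)^{-i\nu}e^{\chi}$, on the strength of the earlier claim that $1+\gamma_1\gamma_2$ is real and bounded on $\mathbb{R}$. Your argument is correct and supplies the missing proof; it is the standard Deift--Zhou estimate. Because $f=\log(1+\gamma_1\gamma_2)$ is real, $\re\log\delta$ is $\pm\tfrac12$ times the Poisson extension of $f\,\mathbf{1}_{(-\infty,k_0)}$. Hence $e^{-\|f\|_\infty/2}\le|\delta|\le e^{\|f\|_\infty/2}$ off the axis, uniformly in $k$ and $k_0$, and these bounds pass to $\delta_\pm$. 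Using the exact Cauchy integral instead of the product $(k-k_0)^{-i\nu}e^{\chi}$ is also the right choice. With $\chi$ as literally written in \eqref{Chi}, $e^{\chi}$ alone already solves \eqref{RHP delta}. The displayed product therefore multiplies the jump on $(-\infty,k_0)$ by an extra factor $1+\gamma_1(k_0)\gamma_2(k_0)$ and has no limit at infinity. The usual splitting needs $\log\frac{1+\gamma_1(\xi)\gamma_2(\xi)}{1+\gamma_1(k_0)\gamma_2(k_0)}$ in $\chi$, plus a regularization at $-\infty$.

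A few details need fixing.
\begin{itemize}
\item Your remark that the two factors are ``unbounded individually near $k_0$'' is wrong, and you contradict it yourself at the end. For real $\nu$ both factors have bounded modulus; only their phases are singular at $k_0$.
\item In Step 2 the precise statement is $|\delta_\pm(k)|=e^{\pm f(k)/2}$ for $k<k_0$ and $|\delta_\pm(k)|=1$ for $k>k_0$. Alternatively, just pass the off-axis bound to the limit.
\item For the sign of $1+\gamma_1\gamma_2=1/(ad)$, say explicitly that $ad$ is continuous, real and zero-free on $\mathbb{R}$ with limit $1$. Then $ad>0$ by the intermediate value theorem.
\end{itemize}

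Be aware that reality of $1+\gamma_1\gamma_2$ is the one input you import from the paper, and it is indispensable. If $\im\nu\neq0$, then $\re\log\delta(k)=(\im\nu)\log|k-k_0|+O(1)$ as $k\to k_0$, so $\delta$ or $\delta^{-1}$ is unbounded there. The paper does not actually establish this input: \eqref{Symmetry abcd} as printed is inconsistent, since $a^*=-a$ cannot hold together with $a\to1$. Your proof is therefore complete under the hypothesis that $1+\gamma_1\gamma_2$ is real on $\mathbb{R}$. It is better to state that hypothesis openly than to claim it follows from \eqref{Symmetry abcd}.
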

\begin{figure}[H]
\vspace{0cm}
\centering
\begin{tikzpicture}
\draw[thick,<-](2,0)--(0,0);
\draw[thick](2,0)--(4,0);
\draw[thick](4,0)--(6,0);
\draw[thick,<-](6,0)--(8,0);
\draw(4,0) node  {\tiny$\bullet$};
\draw(4,-0.2) node [below] {$k_0$};
\draw(8.5,-0.25) node [above] {$\Sigma^{(1)}$};
\draw(2,0.5) node [above] {$+$};
\draw(2,-0.5) node [below] {$-$};
\draw(6,0.5) node [above] {$-$};
\draw(6,-0.5) node [below] {$+$};
\end{tikzpicture}
\caption{The oriented contour on $\Sigma^{(1)}$.}
\label{fig:2} 
\end{figure}
	We reverse the orientation for the contour for $k \in (-\infty,k_0)$ as illustrated in figure \ref{fig:2}. This reorientation leads to a new RH problem for $M^{(1)}(y,t,k)$ on the oriented contour $\Sigma^{(1)}$:
	\begin{lemma}
		$M^{(1)}(y,t,k)$ is analytic on $\mathbb{C} \setminus \Sigma^{(1)} $ and satisfies the RH problem
		\begin{equation}\label{RHP M(1)}
		\left\{ \begin{array}{ll} M^{(1)}_+(y,t,k)=M^{(1)}_-(y,t,k) J^{(1)}(y,t,k), & \quad k \in \Sigma^{(1)}, \\
			M^{(1)}(y,t,k) \to I, & \quad k \to \infty,	\end{array} \right.
	\end{equation}
	where
	\begin{equation*}
	J^{(1)}(y,t,k)= \left\{\begin{array}{ll}  \begin{pmatrix} 1 & \gamma_2(k)e^{2it\theta}\delta_+^2(k) \\ 0	& 1 \end{pmatrix}   \begin{pmatrix} 1 & 0 \\ \gamma_1(k)e^{-2it\theta}	\delta_-^{-2}(k)& 1 \end{pmatrix} , \quad & k>k_0,
	\\ \begin{pmatrix} 1 & -\frac{\gamma_2(k)}{1+\gamma_1(k)\gamma_2(k)}e^{2it\theta}\delta_+^2(k) \\ 0	& 1 \end{pmatrix} \begin{pmatrix} 1 & 0 \\ -\frac{\gamma_1(k)}{1+\gamma_1(k)\gamma_2(k)}e^{-2it\theta}\delta_-^{-2}(k)	& 1 \end{pmatrix}
	, \quad & k<k_0.
	\end{array}\right.
\end{equation*}
	\end{lemma}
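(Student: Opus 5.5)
The plan is to derive every part of the lemma directly from the definition \eqref{M(1)}, $M^{(1)}=M\delta^{-\sigma_3}$. We combine the RH problem \eqref{RHP M} for $M$ with the scalar RH problem \eqref{RHP delta} for $\delta$ and the preceding theorem that $\delta$ is uniformly bounded. The argument has three steps: analyticity, then the jump (first for $k>k_0$, then for $k<k_0$ before and after reorientation), then normalization.

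\emph{Analyticity.} $M$ is analytic in $\mathbb{C}\setminus\mathbb{R}$. By its explicit form $\delta(k)=(k-k_0)^{-i\nu}e^{\chi(k)}$, the function $\delta$ is analytic in $\mathbb{C}\setminus(-\infty,k_0]$: the branch cut of $(k-k_0)^{-i\nu}$ is taken along $(-\infty,k_0)$, and the Cauchy integral $\chi$ is analytic off $(-\infty,k_0]$. We also need $\delta$ to be nonvanishing. This holds because $\delta=e^{-i\nu\log(k-k_0)+\chi}$ is an exponential. Alternatively, the same argument that makes $\delta$ bounded also makes $\delta^{-1}$ bounded, since $\delta^{-1}$ solves the scalar problem with jump $(1+\gamma_1\gamma_2)^{-1}$. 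Hence $\delta^{\pm\sigma_3}$ is analytic and invertible off $\mathbb{R}$, and therefore $M^{(1)}$ is analytic in $\mathbb{C}\setminus\Sigma^{(1)}$. Its boundary values on $\Sigma^{(1)}$ exist wherever those of $M$ and $\delta$ do, that is, away from $k_0$.

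\emph{Jump for $k>k_0$.} With the original orientation of $\mathbb{R}$,
\[
M^{(1)}_+=M_-J\delta_+^{-\sigma_3}=M^{(1)}_-\,\delta_-^{\sigma_3}J\delta_+^{-\sigma_3}.
\]
Here $\delta_+=\delta_-$. We use the factorization $J=\begin{pmatrix}1&\gamma_2e^{2it\theta}\\0&1\end{pmatrix}\begin{pmatrix}1&0\\\gamma_1e^{-2it\theta}&1\end{pmatrix}$. We also use the identity $\delta^{\sigma_3}\begin{pmatrix}a&b\\c&d\end{pmatrix}\delta^{-\sigma_3}=\begin{pmatrix}a&\delta^2b\\\delta^{-2}c&d\end{pmatrix}$, applied to each factor separately by inserting $\delta^{-\sigma_3}\delta^{\sigma_3}$ between them. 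Together these give the stated upper/lower product directly.

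\emph{Jump for $k<k_0$.} Start from the lower--diagonal--upper factorization of $J$ and conjugate in the same way. The middle factor becomes $\mathrm{diag}\big((1+\gamma_1\gamma_2)\delta_-\delta_+^{-1},\,(1+\gamma_1\gamma_2)^{-1}\delta_-^{-1}\delta_+\big)$. This equals $I$ precisely because of the jump condition $\delta_+=\delta_-(1+\gamma_1\gamma_2)$ in \eqref{RHP delta}. So, in the original orientation, $J^{(1)}=L\,U$, where
\[
L=\begin{pmatrix}1&0\\ \frac{\gamma_1}{1+\gamma_1\gamma_2}\delta_-^{-2}e^{-2it\theta}&1\end{pmatrix},\qquad U=\begin{pmatrix}1&\frac{\gamma_2}{1+\gamma_1\gamma_2}\delta_+^{2}e^{2it\theta}\\0&1\end{pmatrix}.
\]
In the displayed product preceding \eqref{RHP delta}, the $(2,2)$ entry of the lower factor is printed as $0$; the correct value is $1$, which is what $L$ has here.

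\emph{Reorientation and normalization.} Reversing the orientation on $(-\infty,k_0)$, as in figure \ref{fig:2}, interchanges the $\pm$ sides, so the new jump is the inverse of the old one:
\[
(LU)^{-1}=U^{-1}L^{-1}.
\]
Inverting a unipotent triangular matrix simply negates its off-diagonal entry, which produces exactly the minus signs in the stated formula. This bookkeeping is the only delicate step. I would state explicitly that $\delta_\pm$ in the displayed formula still refer to the boundary values with respect to the original left-to-right orientation of $\mathbb{R}$. Under that convention the entries come out as written, whereas relabeling $\delta_\pm$ after reversal would swap them. Finally, for normalization: $\delta(k)\to1$ as $k\to\infty$, because $(k-k_0)^{-i\nu}e^{\chi(k)}\to1$ and $\chi(k)=O(1/k)$. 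Since also $M\to I$, we get $M^{(1)}\to I$, which completes the RH problem \eqref{RHP M(1)}.
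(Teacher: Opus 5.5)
Your jump computation is the paper's own derivation and is correct. You conjugate the two factorizations by $\delta_-^{\sigma_3}(\cdot)\,\delta_+^{-\sigma_3}$, remove the diagonal factor using $\delta_+=\delta_-(1+\gamma_1\gamma_2)$, and invert $LU$ when $(-\infty,k_0)$ is reversed. You are also right on two side points: the $(2,2)$ entry printed as $0$ should be $1$, and $\delta_\pm$ in the displayed $J^{(1)}$ must keep their pre-reversal meaning.

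The one step whose justification fails is the normalization. You assert both $(k-k_0)^{-i\nu}e^{\chi(k)}\to1$ and $\chi(k)=O(1/k)$. Together these would force $(k-k_0)^{-i\nu}\to1$, which is false for $\nu\neq0$. Indeed $|(k-k_0)^{-i\nu}|=e^{\nu\arg(k-k_0)}$ depends on the direction of approach, and the phase $-\nu\log|k-k_0|$ has no limit.

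The error comes from the paper's explicit formula for $\delta$, which is not self-consistent. With $\chi$ as in \eqref{Chi}, the Plemelj formula already gives $e^{\chi}$ exactly the jump required in \eqref{RHP delta} on $(-\infty,k_0)$, and $e^{\chi}\to1$. So $e^{\chi}$ by itself solves the scalar problem. The extra factor $(k-k_0)^{-i\nu}$ would add a constant jump across $(-\infty,k_0)$ and destroy the normalization.

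The fix is to take $\delta\to1$ directly from \eqref{RHP delta}, equivalently to use $\delta=e^{\chi}$ with $\chi$ from \eqref{Chi}. If you want the factored form, integrate by parts to get $\delta=(k-k_0)^{-i\nu}e^{\hat\chi(k)}$ with $\hat\chi(k)=-\frac{1}{2\pi i}\int_{-\infty}^{k_0}\log(k-\xi)\,\mathrm{d}\log\bigl(1+\gamma_1(\xi)\gamma_2(\xi)\bigr)$. This $\hat\chi$ behaves like $i\nu\log k$ at infinity and cancels the power, so it is not $O(1/k)$.

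Analyticity and nonvanishing of $\delta$ off $(-\infty,k_0]$ hold for either correct representation. The rest of your argument is therefore unaffected.
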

For unified processing, we introduce the representations
\begin{equation}
	\rho_1(k)= \left\{\begin{array}{ll}
		\gamma_1(k), \quad & k>k_0,	\\
		-\frac{\gamma_1(k)}{1+\gamma_1(k)\gamma_2(k)} , \quad & k<k_0,
	\end{array}\right.
\end{equation}
\begin{equation}
	\rho_2(k)= \left\{\begin{array}{ll}
		\gamma_2(k), \quad & k>k_0,	\\
		-\frac{\gamma_2(k)}{1+\gamma_1(k)\gamma_2(k)} , \quad & k<k_0,
	\end{array}\right.
\end{equation}
which allow us to unify the jump matrix on $\mathbb{R}$ as
\begin{equation}\label{J(1)}
	J^{(1)}(y,t,k)=\begin{pmatrix} 1		& -\rho_2(k) e^{2it\theta}  \delta_+^2(k )\\	0	&1	\end{pmatrix}^{-1} {\begin{pmatrix} 1	& 0 \\ \rho_1(k)e^{-2it\theta}  \delta_-^{-2}(k)		& 1	\end{pmatrix}}.
	\end{equation}
	We rewrite the expansions of $M(y,t,k)$ and $M^{(1)}(y,t,k)$ for $k \to 0$ as below
\begin{equation}
	M(y,t,k)=M_0(y,t)+M_1(y,t) k+O(k^2),
\end{equation}
\begin{equation}
	M^{(1)}(y,t,k)=M_0^{(1)}(y,t)+M_1^{(1)}(y,t)k+O(k^2).
\end{equation}
From \eqref{M(1)}, we have
\begin{equation}
	\begin{aligned}
	M(y,t,k)&=M^{(1)}(y,t,k)(\delta(k))^{\sigma_3} \\
	& = \left (  M_0^{(1)}(y,t)+M_1^{(1)}(y,t)k+O(k^2) \right )  \left (  I+\delta_1 \sigma_3 k+O(k^2) \right ) \\ & =  M_0^{(1)}(y,t)+   \left (  M_0^{(1)}(y,t) \delta_1 \sigma_3 + M_1^{(1)}(y,t) \right )k +O(k^2),
	\end{aligned}
\end{equation}
which means that
\begin{equation}
	M_0(y,t)=M_0^{(1)}(y,t), \quad  M_1(y,t)=M_0^{(1)}(y,t)\delta_1 \sigma_3 +M_1^{(1)}(y,t).
\end{equation}
Substituting these into \eqref{tilde u} and \eqref{tilde v}, we obtain the equivalent expressions 
\begin{equation}\label{tilde u2}
	\tilde{u}(y,t)=-\left(M_0^{-1}(y,t) M_1(y,t)\right)_{12} = -\left[\left( M_0^{(1)}(y,t)\right)^{-1}M_1^{(1)}(y,t)\right]_{12},
\end{equation}
\begin{equation}\label{tilde v2}
	\tilde{v}(y,t)=-\left(M_0^{-1}(y,t) M_1(y,t)\right)_{21} = -\left[\left( M_0^{(1)}(y,t)\right)^{-1}M_1^{(1)}(y,t)\right]_{21}.
\end{equation}
\subsection{The ``open lense" transformation }
In what follows, we shall split the spectral function $\rho_1(k)$ and establish a RH problem on an augmented contour $\Sigma^{(2)}=\mathbb{R} \cup L \cup L^*$, which is equivalent to the RH problem \eqref{RHP M(1)}, where
\begin{equation}\label{L definition}
	\begin{aligned}
	&L=\left\{k=k_0+k_0 \alpha e^{\frac{ 3\pi i}{4}}: -\infty < \alpha < +\infty\right\}, \\
	&L^{*}=\left\{k=k_0+k_0 \alpha e^{\frac{ \pi i}{4}}: -\infty < \alpha < +\infty\right\}.
	\end{aligned}
	\end{equation}
	\begin{figure}[H]
\vspace{0cm}
\centering
\begin{tikzpicture}
\draw[thick](0,0)--(3,0);
\draw[thick,->](0,0)--(1.5,0);
\draw[thick,->](5,2)--(4,1);
\draw[thick](4,1)--(3,0);
\draw[thick](3,0)--(4.5,0);
\draw[thick,<-](4.5,0)--(6,0);

\draw[thick,->](5,-2)--(4,-1);
\draw[thick](4,-1)--(3,0);

\draw[thick,->](1,2)--(2,1);
\draw[thick](2,1)--(3,0);

\draw[thick,->](1,-2)--(2,-1);
\draw[thick](2,-1)--(3,0);
\draw(3,0) node  {\tiny$\bullet$};
\draw(3,-0.5) node [above] {\tiny$k_0$};
\draw(5.5,1.8) node [above] {\tiny$\Sigma_1^{(2)}$};
\draw(0.5,1.8) node [above] {\tiny$\Sigma_2^{(2)}$};
\draw(0.5,-1.8) node [below] {\tiny$\Sigma_3^{(2)}$};
\draw(5.5,-1.8) node [below] {\tiny$\Sigma_4^{(2)}$};
\draw(-0.5,-0.3) node [above] {\tiny$\Sigma_5^{(2)}$};
\draw(6.5,-0.3) node [above] {\tiny$\Sigma_6^{(2)}$};
\draw(3.5,0.4) node [below] {\tiny$\Omega_1$};
\draw(3.5,-0.4) node [above] {\tiny$\Omega_6$};
\draw(2.5,0.4) node [below] {\tiny$\Omega_3$};
\draw(2.5,-0.4) node [above] {\tiny$\Omega_4$};
\draw(3,0.25) node [above] {\tiny$\Omega_2$};
\draw(3,-0.25) node [below] {\tiny$\Omega_5$};
\end{tikzpicture}
\caption{The oriented contour on $\Sigma^{(2)}$.}
\label{fig:3} 
\end{figure}
\begin{theorem} \label{decomposition rho}The matrix-valued function $\rho_1(k)$ has a decomposition
\begin{equation}\label{rho decompose}
	\rho_1(k)=h_I(k)+h_{II}(k)+R(k), \quad  k \in \mathbb{R},
\end{equation}
where $h_I(k)$ cannot be analytically extended to $L$, but decays on $\mathbb{R}$, $h_{II}(k)$ is analytically and continuously extended to $L$, and $R(k)$ is a piecewise-rational function. In addition, for an arbitrary positive integer $k$, the following estimates is valid
\begin{align}
	& |e^{-2it\theta(k)}h_I(k)| \lesssim \frac{1}{(1+|k|^2)t^k}, \quad k \in \mathbb{R}, \label{h1}\\
	& |\frac{e^{-2it\theta(k)}h_I(k)}{k^j}| \lesssim  \frac{1}{(1+|k|^2)t^k}, \quad |k|>0, \quad j=1,2, \label{h1_j} \\
	& |e^{-2it\theta(k)}h_{II}(k)| \lesssim \frac{1}{(1+|k|^2)t^k}, \quad k \in L, \label{h2 L}\\
	& |\frac{e^{-2it\theta(k)}h_{II}(k)}{k^j}| \lesssim  \frac{1}{(1+|k|^2)t^k}, \quad k \in L, \quad j=1,2, \label{h2_j L}\\
	& |e^{-2it\theta(k)}R(k)| \lesssim e^{-2k_0^2 t}, \quad  k \in L. \label{R}
\end{align}
\end{theorem}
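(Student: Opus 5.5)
We follow the standard Deift--Zhou decomposition (as in the NLS case), adapted to the phase $\theta(k)=i(-k^2+2k_0k)$, for which $-2it\theta = 2t(2k_0k-k^2)$ has real part $2t\,\re(2k_0k-k^2)$. Writing $k=k_0+u$, we get $2k_0k-k^2=k_0^2-u^2$. On $L$ and $L^*$, with $u=k_0\alpha e^{3\pi i/4}$ or $u=k_0\alpha e^{\pi i/4}$, we have $u^2=\pm i k_0^2\alpha^2$, so the exponential has modulus $e^{2k_0^2 t}$ and is oscillatory off the axis. Hence the first step is to record exactly how $\re(-2it\theta)$ behaves on $\mathbb{R}$, on $L$, and in the sectors between them. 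We also fix which of the six sectors $\Omega_j$ of figure~\ref{fig:3} carries decay of $e^{-2it\theta}$ and which carries decay of $e^{2it\theta}$, so that each factor of \eqref{J(1)} is extended into its own decaying sector. The normalisation $e^{-2k_0^2 t}$ in \eqref{R} indicates that the growth $e^{2k_0^2t}$ of the constant part must be absorbed into a decaying factor; presumably this is done by a $\delta^{\pm2}$ prefactor or a shift of the phase constant. I will track this constant carefully and, if needed, normalise $\theta$ by its critical value $\theta(k_0)$.

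The construction is split on each half-line $k>k_0$ and $k<k_0$, treated separately, the latter with $\rho_1=-\gamma_1/(1+\gamma_1\gamma_2)$.

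\textbf{Step 1 (rational part).} Since $u_0,v_0\in\mathscr{S}(\mathbb{R})$, the functions $\gamma_{1,2}$ are Schwartz, so $\rho_1$ is smooth up to $k_0$ from either side. Taylor expand $\rho_1$ at $k_0$ to order $m$ (with $m$ chosen large relative to the target power of $t$). Let $R(k)$ be a rational function with poles off $\overline{\Omega}$, for instance at $\pm i$, such as $R(k)=\sum_{j\le m} a_j (k-k_0)^j/(k-i)^{m+j+2}$. Choose it to match $\rho_1(k)(k-i)^{\cdot}$ to high order at $k_0$, and also to match the data at $\infty$, so that $\rho_1-R$ vanishes to order $m$ at $k_0$ and decays like $(1+|k|)^{-2}$. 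Bound \eqref{R} then follows from boundedness of $R$ on $L$ together with the exponential computation above.

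\textbf{Step 2 (Fourier splitting of the remainder).} Set $h=\rho_1-R$ and change variables to the phase $s=\theta(k)$, which is monotone on each half-line. Write
\[
\frac{h(k)}{(k-i)^{q}}=\int_{\mathbb{R}} e^{is\,\theta(k)}\,\hat g(s)\,\mathrm{d}s ,
\]
where $g=h(k)(k-i)^{-q}$ is viewed as a function of $\theta$. Because $h$ vanishes to high order at $k_0$, $g$ lies in a Sobolev space $H^{j}$ with $j$ large, by Plancherel after the change of variables.

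Split the $s$-integral at $s=t/2$, or at $s=t$. Define $h_I$ to be the part with $s>t/2$, multiplied by $(k-i)^{q}$, and $h_{II}$ to be the part with $s\le t/2$. For $h_I$, on $\mathbb{R}$ the phase factor combines with $e^{-2it\theta}$ to modulus one, and the bound $\int_{t/2}^\infty|\hat g|\lesssim t^{-j}\|g\|_{H^j}$ gives \eqref{h1}. Dividing by $k^j$ for $j=1,2$ in \eqref{h1_j} is harmless provided $k_0$ stays away from $0$ and the region near $k=0$ is controlled by the vanishing factor. We have to address this explicitly, since the reconstruction formulas \eqref{tilde u2}--\eqref{tilde v2} expand at $k=0$: we will require $R$ and $h$ to be built so that $k=0$ is not a singularity, or restrict $|k|$ to be bounded below.

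For $h_{II}$, the integrand extends analytically to $L$. The product $e^{-2it\theta}e^{is\theta}$ with $s\le t/2$ retains at least half of the decay $e^{-t\,\re(\cdot)}$ in the relevant sector, and $|(k-i)^q|$ is compensated by the vanishing factor $(k-k_0)^m$ that we keep explicitly outside the integral. This yields \eqref{h2 L} and \eqref{h2_j L} with any polynomial rate.

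\textbf{Main obstacle.} I expect the genuine difficulty to be the consistency of the exponential size on $L$. Since $|e^{-2it\theta}|=e^{2k_0^2 t}$ there, the polynomial bounds \eqref{h2 L} and the bound \eqref{R} require the decay to come from moving strictly inside the sector rather than from the ray itself. We must also handle the point $k=0$ in the division by $k^j$. I would first try to verify the sign structure on rays slightly rotated from $L$, and then redefine $L$ or the normalisation accordingly.
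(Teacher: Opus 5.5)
\textbf{Verdict.} Your outline uses the same Deift--Zhou template as the paper's proof: a Taylor-matched rational $R$ with poles at $k=i$, a splitting of $h/\beta$ in the variable $\theta$ cut at $s\approx t$, and continuation of the low-frequency part to $L$. But it does not prove the statement. \eqref{R}, \eqref{h2 L} and \eqref{h2_j L} are deferred to an unspecified renormalisation, and the obstacle you flag in your last paragraph is fatal rather than technical.

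\textbf{The bound on $L$ is false, not just unproved.} Take $k_0$ real, as the splitting of $\mathbb{R}$ at $k_0$ requires. The vertex $k=k_0$ lies on $L$ (indeed on every contour through the stationary point). There $-2it\theta(k_0)=2k_0^2t$, and $R(k_0+)=\gamma_1(k_0)$ by construction. So along the ray of $L$ attached to $k>k_0$,
\[
\sup|e^{-2it\theta}R|\ge e^{2k_0^2t}|\gamma_1(k_0)|.
\]
Hence \eqref{R} fails whenever $k_0\neq0$ and $\gamma_1(k_0)\neq0$, i.e.\ whenever the leading term of Theorem \ref{the result} is nonzero. None of your repairs helps:
\begin{itemize}
\item \eqref{R} involves no $\delta$, and $\delta^{\pm2}$ does not depend on $t$ for fixed $k_0$, so it cannot supply $e^{-4k_0^2t}$.
\item Rotating $L$ cannot remove the value at $k_0$.
\item Subtracting $\theta(k_0)$ gives $e^{-2it(\theta-\theta(k_0))}=e^{-2t(k-k_0)^2}$, which has modulus one on $L$. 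You would only get $|e^{-2it(\theta-\theta(k_0))}R|\lesssim1$, which is a different statement.
\end{itemize}
The $h_{II}$ bound fails for a related reason. Here $\re(i\theta)$ is constant along $L$, so the Gaussian factor $e^{-ct|k-k_0|^2}$ that normally turns $|k-k_0|^d$ into $t^{-d/2}$ is absent. With the correct sign, $|e^{i(s-2t)\theta}|=e^{(2t-s)k_0^2}\ge e^{tk_0^2}$ for $s\le t$, so \eqref{h2 L} also fails along this route.

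\textbf{The paper's proof does not supply the fix.} It reaches \eqref{h2 L} and \eqref{R} only through the assertion $\re(i\theta)=k_0^2>0$ on $L$. In fact $i\theta=(k-k_0)^2-k_0^2$, which gives $\re(i\theta)=-k_0^2$ there. The bound $e^{-2k_0^2t}$ on $L\cup L^*$ holds for $e^{+2it\theta}$, the factor paired with $\rho_2$, not for $e^{-2it\theta}$.

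\textbf{Step 2 also fails as written.} For real $k$, $\theta(k)=i(2k_0k-k^2)$ is purely imaginary. So on $\mathbb{R}$, $|e^{-2it\theta}|=e^{2t(2k_0k-k^2)}$, which equals $e^{2k_0^2t}$ at $k_0$, not $1$. Likewise $e^{is\theta}=e^{-s(2k_0k-k^2)}$ is a real exponential, not a Fourier character. Your representation $h/(k-i)^q=\int e^{is\theta}\hat g(s)\,\mathrm{d}s$ is then a two-sided Laplace transform. Plancherel and the $H^j$ bound do not control $\hat g$. Even granting a tail bound on $\int_{t/2}^\infty|\hat g|$, the factor $|e^{i(s-2t)\theta}|$ is not bounded by $1$ on $\mathbb{R}$ (it equals $e^{(2t-s)k_0^2}$ at $k_0$), so \eqref{h1} does not follow. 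The paper's proof imports the NLS/mKdV Fourier argument with exactly this defect.

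\textbf{The $k\to0$ issue.} You flag the problem in \eqref{h1_j} but leave it open, and the paper does not resolve it either. Its bound $|(k-k_0)^d/(k^j(k-i)^{d+2})|\lesssim|k-i|^{-2-j}$ fails as $k\to0$ on $\mathbb{R}$. Nothing forces $h_I$ to vanish to second order at $0$; in general $R(0)\neq0$.

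\textbf{Structural reason.} For this phase the real axis is itself the steepest-descent direction of $e^{-2it\theta}=e^{2t(k_0^2-(k-k_0)^2)}$, while $L$ and $L^*$ are level lines of $|e^{\pm2it\theta}|$. The NLS lens geometry is rotated by $\pi/4$ from what this $\theta$ needs. With $\theta$ and $L$ as defined, the theorem cannot be proved along your route, and \eqref{R} is false. Any repair changes the statement and must be propagated through the later small-norm and scaling steps.
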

\begin{proof}
Assume $m=4d+1,$ $d$ is an integer and $d>2$. Expanding $(k-i)^{m+5} \rho(k)$ in a Taylor series around $k_0$, we arrive at
\begin{equation}\label{rho(lambda-i)}
	(k-i)^{m+5}\rho(k)=\mu_0+\mu_1(k-k_0)+\dots +\mu_m(k-k_0)^m+\frac{1}{m!}\int_{k_0}^{k}\left((\xi-i)^{m+5}\rho(\xi)\right)^{(m+1)}(k-\xi)^m\mathrm{d} \xi,
\end{equation}
where $\mu_0, \dots , \mu_m$ are Taylor coefficients. Then we define
\begin{align}
	R(k)&=\frac{\sum_{j=0}^{m}\mu_j(k-k_0)^j}{(k-i)^{m+5}},\\
	h(k)&=\rho(k)-R(k). \label{h}
\end{align}
Suppose
\begin{equation}
	\beta(k)=\frac{(k-k_0)^d}{(k-i)^{d+2}}. \label{beta}
\end{equation}
Applying the Fourier transform to $\frac{h}{\beta}$, we obtain
\begin{equation}
	\left(\frac{h}{\beta}\right)(k)=\int_{-\infty}^{+\infty}e^{is\theta}\widehat{	\left(\frac{h}{\beta}\right)}(s) \frac{\mathrm{d} s}{\sqrt{2\pi}} ,
\end{equation}
where the Fourier transform is defined by
\begin{equation}
	\widehat{	\left(\frac{h}{\beta}\right)}(s)=\int_{k_0}^{+\infty} e^{-is\theta}	\left(\frac{h}{\beta}\right)(k)\frac{\mathrm{d} \theta(k)}{\sqrt{2\pi}}.
\end{equation}
From \eqref{rho(lambda-i)}, \eqref{h} and \eqref{beta}, it follows directly that
\begin{equation}
	\left(\frac{h}{\beta}\right)(k)=\frac{(k-k_0)^{3d+2}}{(k-i)^{3d+4}}g(k,k_0),
\end{equation}
with
\begin{equation}
	g(k,k_0)=\frac{1}{m!}\int_{0}^{1}\left[\left(k_0+\zeta(k-k_0)-i \right)^{m+5}\rho(k_0+\zeta(k-k_0))\right] ^{(m+1)}(1-\zeta)^m \mathrm{d}\zeta.
\end{equation}
Noting the uniform bound
\begin{equation}
	\left| \frac{\mathrm{d}^j g(k,k_0)}{\mathrm{d} k^j}   \right| \lesssim 1, \quad    k \ge k_0,
\end{equation}
we derive the estimates
\begin{equation}\label{h beta}
	\left \| \left(\frac{\mathrm{d} }{\mathrm{d} \theta } \right)^j \left(\frac{h}{\beta}\right)(k(\theta))\right \| ^2_{\mathscr{L}^2(\mathbb{R})} \lesssim 1, \quad   0 \le j \le \frac{3d+2}{3}.
\end{equation}
We further split $h(k)$ into two components:
\begin{equation}
	\begin{aligned}
	h(k)&=\beta(k)\int_{t}^{+\infty}e^{is\theta(k)}\widehat{\left(\frac{h}{\beta}\right)}(s)\frac{\mathrm{d} s}{\sqrt[]{2\pi } } +\beta(k)\int_{-\infty}^{t}e^{is\theta(k)}\widehat{\left(\frac{h}{\beta}\right)}(s)\frac{\mathrm{d} s}{\sqrt[]{2\pi } }\\ & \triangleq h_I(k)+h_{II}(k). 	
	\end{aligned}
\end{equation}
From \eqref{h beta}, we conclude that $h/\beta \in H^j(-\infty,+\infty)$ for $ 0 \le j \le \frac{3d+2}{3}$, which yields the following inequality:

\begin{align}
	&\int_{-\infty}^{+\infty}(1+s^2)^j \left|\widehat{\left(\frac{h}{\beta}\right)}(s) \right|^2 \mathrm{d} s \lesssim 1, \\
	&\int_{-\infty}^{+\infty}s^j \left|\widehat{\left(\frac{h}{\beta}\right)}(s) \right|^2 \mathrm{d} s \lesssim 1.
\end{align}
For $k \ge k_0$, we derive the following estimate for $h_I(k)$:
\begin{equation}
	\begin{aligned}\label{eh1}
	\left|  e^{-2it\theta} h_I(k)\right| & \le  |\beta(k)|  \int_{t}^{+\infty} \left| \widehat{\left(\frac{h}{\beta}\right)}(s) \right| \frac{\mathrm{d} s}{\sqrt[]{2\pi} } \\
	& \le |\beta(k)|\left(\int_{t}^{\infty}(1+s^2)^{-p} \frac{\mathrm{d} s}{\sqrt{2\pi} } \right)^{\frac{1}{2}}
	\left(\int_{t}^{\infty}(1+s^2)^{p} \left| \widehat{\left(\frac{h}{\beta}\right)}(s) \right |^2 \frac{\mathrm{d} s}{\sqrt{2\pi} } \right)^{\frac{1}{2}} \\
	& \lesssim \frac{t^{\frac{1}{2}-p}}{1+|k|^2}.
	\end{aligned}
\end{equation}
Next, for $j=1,2,$
	\begin{equation}
		\begin{aligned}
			\left| \frac{e^{-2it \theta}h_I(k)}{k^j} \right| &=
			\left| \frac{(k-k_0)^d}{k^j(k-i)^{d+2}} \right|\left| \int_{t}^{+\infty} e^{is \theta}
			\widehat{	\left(\frac{h}{\beta}\right)}(s) \frac{\mathrm{d}s}{\sqrt{2 \pi }} \right| \\
			&\lesssim \frac{1}{(k-i)^{2+j}} \left \|  \widehat{	\left(\frac{h}{\beta}\right)}(s) \right \| _{\mathscr{L}^2(\mathbb{R})} \\
			& \le \frac{1}{(k-i)^{2+j}}  \left(\int_{t}^{+\infty} s^{-2r} ds\right)^{\frac{1}{2}}
			\left \|  s^r \widehat{	\left(\frac{h}{\beta}\right)}(s) \right \|_{\mathscr{L}^2(\mathbb{R})}\\
			& \lesssim \frac{1}{1+|k|^2}t^{-r+\frac{1}{2}}, \quad  r\le \frac{3d+2}{3}.
		\end{aligned}
	\end{equation}
Turning to $h_{II}(k)$, the condition $t-s>0$ allows us to decompose the phase factor as
\begin{equation}
e^{i(s-t)\theta}=e^{(s-t)\re(i \theta)}e^{i(s-t)\im(i \theta)}.
\end{equation}
This representation implies that $h_{II}(k)$ admits an analytic continuation to the line 
$L $ defined in \eqref{L definition}. On this line, we have
	\begin{equation}
		\re(i\theta)= k_0^2 >0.
	\end{equation}
Subsequently, we obtain the decay estimate for $h_{II}(k)$:
	\begin{equation}
	\begin{aligned}
		|e^{-2it\theta }h_{II}(k)| & \le e^{-t \re (i \theta)} \frac{|k-k_0|^d}{|k-i|^{d+2}}
		\left|\int_{-\infty}^{t}e^{(s-t)\re (i\theta)} \widehat{\left(\frac{h}{\beta}\right)}(s) \frac{\mathrm{d} s}{\sqrt{2\pi } } \right| \\
		& \lesssim 	\frac{k_0 ^d \alpha^d e^{-t \re(i \theta)}}{|k-i|^{d+2}} \left(\int_{-\infty}^{t} (1+s^2)^{-1}\right)^{\frac{1}{2}}\left(\int_{-\infty}^{t}(1+s^2)\left|\widehat{\left(\frac{h}{\beta}\right)}(s)\right| ^2 \right ) ^{\frac{1}{2}} \\ 
		& \lesssim \frac{t^{-\frac{d}{2}}}{1+|k|^2}.
	\end{aligned}
	\end{equation}
Similarly, for $j=1,2,$
	\begin{equation}
		\begin{aligned}
			\left| \frac{e^{-2it \theta}h_{II}(k)}{k^j} \right| &=
			e^{-2t \re(i\theta)}	\left| \frac{(k-k_0)^d}{k^j(k-i)^{d+2}} \right|
			\left| \int_{-\infty}^{t} e^{is \theta}
			\widehat{	\left(\frac{h}{\beta}\right)}(s) \frac{\mathrm{d}s}{\sqrt{2 \pi }} \right| \\
			&\lesssim		e^{-t \re(i\theta)} \frac{(\alpha k_0)^{d-j}}{|k-i|^{2}} \left| \int_{-\infty}^{t} e^{(s-t)\re( i\theta)}
			\widehat{	\left(\frac{h}{\beta}\right)}(s) \frac{\mathrm{d}s}{\sqrt{2 \pi }} \right| \\
			&\lesssim e^{-t \re(i\theta)} \frac{(\alpha k_0)^{d-j}}{|k-i|^{2}} \left(\int_{-\infty}^{t}
			(1+s^2)^{-1} ds\right)^{\frac{1}{2}}
			\left \|  (1+s^2) \widehat{	\left(\frac{h}{\beta}\right)}(s) \right \|_{\mathscr{L}^2(\mathbb{R})} \\
			& \lesssim e^{-t \re(i\theta)	} \frac{(\alpha k_0)^{d-j}}{|k-i|^{2}} \\
			& \lesssim \frac{1}{(1+|k|^2)}t^{\frac{j-d}{2}}.
		\end{aligned}
	\end{equation}
We now choose $ m = 4d + 1$ sufficiently large such that $\min \left \{ \frac{d}{2}, \ p-\frac{1}{2}, \  \frac{3d+2}{3}-\frac{1}{2}, \ \frac{d-j}{2}\right \}>k.$ Then estimates \eqref{h1}, \eqref{h1_j}, \eqref{h2 L} and \eqref{h2_j L} follow immediately. For $k \in L$, on the line $\left\{k=k_0+k_0 \alpha e^{-\frac{\pi i}{4}} : -\infty < \alpha < +\infty\right\}$, simple calculations indicate that
\begin{equation}
	\re (i\theta) = k_0^2.
	\end{equation}
Combining this with the boundedness of $R(k)$, we further establish the exponential decay estimate for $R(k)$:
\begin{equation}
	\left|e^{-2it\theta} R(k)\right|\lesssim \left|e^{-2it\theta} \right| = e^{-2k_0^2 t} \le e^{-\epsilon^2 t}, \quad \sqrt{2} k_0 \ge \epsilon.
\end{equation}
 This completes the proof of estimate \eqref{R}.
	\end{proof}
Similarly, $\rho_2(k)$ has the following decomposition:
\begin{equation}\label{rho2 decompose}
	\rho_2(k)=h_I^{\prime}(k)+h_{II}^{\prime}(k)+R^{\prime}(k), \quad  k \in \mathbb{R}.
\end{equation}
  The analysis for $\rho_2(k)$ is entirely analogous to that for $\rho_1(k)$. Thus, $e^{-2it\theta}h_I^{\prime}(k),$  $e^{-2it\theta}h_{II}^{\prime}(k)$ and $e^{-2it\theta}R^{\prime}(k)$ satisfy the same decay properties as their counterparts for $\rho_1(k)$.
\par Next, we will transform the oscillating RH problem \eqref{RHP M(1)} to a standard form via factorization. We decompose the jump matrix \eqref{J(1)} as
	\begin{equation*}
		J^{(1)}(y,t,k)=b_-^{-1}(k)b_+(k),
	\end{equation*}
	where the factors $b_{\pm}(k)$ are defined by
		\begin{equation}
	\begin{aligned}
		b_- (k)& =b_-^o(k) b_-^a(k)=\left(I- \omega_-^o(k)\right)\left(I-\omega_-^a(k)\right) \\
		&=  \begin{pmatrix} 1 & h_{I,\rho_2}(k) e^{2it\theta}\delta_+^2(k) \\ 0		&1	\end{pmatrix}
		\begin{pmatrix} 1 & \left[h_{II,\rho_2}(k)+[\rho_2](k)\right] e^{2it\theta}\delta_+^2(k) \\ 0		&1	\end{pmatrix},
		\\
		b_+(k) & =b_+^o(k) b_+^a(k)=\left(I+ \omega_+^o(k)\right)\left(I+\omega_+^a(k)\right) \\
		&=  \begin{pmatrix} 1 & 0 \\ h_{I,\rho_1}(k) e^{-2it\theta}\delta_-^{-2}(k)		&1	\end{pmatrix}
		\begin{pmatrix} 1 & 0 \\ \left[h_{II,\rho_1}(k)+[\rho_1](k)\right] e^{-2it\theta}\delta_-^{-2}(k)	&1	\end{pmatrix}.
	\end{aligned}
	\end{equation}
	Hence, we rewrite $J^{(1)}(y,t,k)$ in the form
	\begin{equation*}
		J^{(1)}(y,t,k)=b_-^{-1}(k)b_+(k)=\underset{L^{*}}{\underbrace{ \left(b_-^a(k)\right)^{-1}}} \ \
		\underset{\mathbb{R}}{\underbrace{\left(b_-^o(k)\right)^{-1}b_+^o(k)}} \ \
		\underset{L}{\underbrace{b_+^a(k)}}.
		\end{equation*}
Note that in the above three parts, the left term $ \left(b_-^a(k)\right)^{-1} $ has an analytic continuation to $L^{*}$, and the intermediate term $ \left(b_-^o(k)\right)^{-1}b_+^o(k) $ cannot be analytically extended, but it decays rapidly on $\mathbb{R}$ with respect to the time variable $t$, the right term $  b_+^a (k)$ has an analytic continuation to $L $. So we introduce the following transformation,
\begin{equation}\label{M(2) relate M(1)}
	M^{(2)}(y,t,k)=M^{(1)}(y,t,k)\varphi(k),
\end{equation}
where
\begin{equation}\label{varphi}
	\varphi(k)= \left\{\begin{array}{ll} I , \quad & k \in \Omega_2 \cup \Omega_5,\\ \left(b_-^a (k)\right)^{-1}, \quad   & k \in \Omega_1 \cap \Omega_4, \\ \left(b_+^a (k)\right)^{-1}, \quad & k \in \Omega_3 \cap \Omega_6 .		
	\end{array}\right.
\end{equation}
This transformation maps the original RH problem on the contour $\Sigma^{(1)}= \mathbb{R}$ to a new RH problem on the augmented contour $\Sigma^{(2)}$.
\begin{lemma}
	The matrix-valued function $M^{(2)}(y,t,k)$ is analytic on $\mathbb{C} \setminus \Sigma^{(2)} $ and satisfies the RH problem
		\begin{equation}\label{RHP M(2)}
		\left\{ \begin{array}{ll} M_+^{(2)}(y,t,k)=M_-^{(2)}(y,t,k)J^{(2)}(y,t,k), & \quad k \in \Sigma^{(2)}, \\
			M^{(2)}(y,t,k) \to I, &\quad  k \to \infty,	\end{array} \right.
	\end{equation}
	where the new jump matrix $J^{(2)}(y,t,k)$ is given by
	\begin{equation}
		\begin{aligned}\label{J(2) expression}
			J^{(2)}(y,t,k)= \left\{	\begin{array}{ll}  \left(b_-^a(k)\right)^{-1} , & \quad k \in L^{*},\\ b_+^a(k), & \quad k \in L , \\ \left(b_-^o(k)\right)^{-1}b_+^o(k), & \quad k \in \mathbb{R}.    \end{array}    \right.
		\end{aligned}
	\end{equation}
	\end{lemma}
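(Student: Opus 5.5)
The plan is a direct verification from the definition \eqref{M(2) relate M(1)}, $M^{(2)}=M^{(1)}\varphi$, with $\varphi$ given piecewise by \eqref{varphi} on the six sectors $\Omega_1,\dots,\Omega_6$ of figure \ref{fig:3}. Three things must be shown: analyticity off $\Sigma^{(2)}$, the jump relation on each piece of $\Sigma^{(2)}$, and the normalization at infinity. For analyticity, note that $M^{(1)}$ is analytic on $\mathbb{C}\setminus\mathbb{R}$ by the lemma for \eqref{RHP M(1)}. The factors $b_\pm^a(k)$ are built from $h_{II,\rho_j}+R_{\rho_j}$ (with $R$ rational, having poles only at $k=i$, which I will take to lie outside the relevant sectors or be harmless), from $\delta^{\pm 2}$ (analytic off $(-\infty,k_0]$ and uniformly bounded by the preceding theorem), and from $e^{\pm 2it\theta}$ (entire). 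Theorem \ref{decomposition rho} guarantees that $h_{II}$ extends analytically into the sectors between $\mathbb{R}$ and $L$, respectively $L^*$. Since $\det b_\pm^a=1$, their inverses are analytic as well, so $M^{(2)}$ is analytic in each open sector.

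For the jumps, I compute $J^{(2)}=(M_-^{(2)})^{-1}M_+^{(2)}=\varphi_-^{-1}J_{\rm old}\varphi_+$ on each piece, using the orientations of figure \ref{fig:3}. On $L$ and $L^*$, $M^{(1)}$ has no jump and $\varphi$ is $I$ on one side. This gives $b_+^a$ on $L$ and $(b_-^a)^{-1}$ on $L^*$, provided the $\pm$ sides are matched correctly with the sectors; I expect that matching to need the definition of $\varphi$ read with the sectors adjacent to each ray, since $\Omega_1\cap\Omega_4$ is really meant as a union. On $\mathbb{R}$, write $J^{(1)}=(b_-^a)^{-1}(b_-^o)^{-1}b_+^o\,b_+^a$; the outer factors are cancelled by the boundary values of $\varphi$ from above and below, leaving $(b_-^o)^{-1}b_+^o$. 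At this point I check that $b_\pm^o$ and $b_\pm^a$ commute appropriately: both members of each pair are upper (resp. lower) triangular unipotent, so they commute and the factorization is consistent.

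For normalization, the estimates \eqref{h2 L}, \eqref{R} and their analogues for $\rho_2$ show that $\varphi\to I$ as $k\to\infty$ inside the sectors, since the off-diagonal entries decay like $(1+|k|^2)^{-1}$ times bounded quantities. Together with $M^{(1)}\to I$, this gives $M^{(2)}\to I$.

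The main obstacle is sign and region bookkeeping. One must show that in each sector the exponential $e^{\pm2it\theta}$ appearing in $\varphi$ is bounded, that is, that $\re(i\theta)$ has the correct sign so that $\varphi$ does not blow up. This follows from $\re(i\theta)=-\re(k-k_0)^2+k_0^2$ and a sector-by-sector sign check. The same check confirms that the $\rho_1$ factors live where $\re(i\theta)>0$ and the $\rho_2$ factors where $\re(i\theta)<0$.
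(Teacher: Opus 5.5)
The algebraic half of your plan is what the paper relies on. Beyond \eqref{varphi} and the remark that $(b_-^a)^{-1}$ and $b_+^a$ continue to $L^*$ and $L$, the paper gives no argument. Reading $\Omega_1\cap\Omega_4$ and $\Omega_3\cap\Omega_6$ as unions and taking the $\pm$ sides from figure~\ref{fig:3}, $\varphi_-^{-1}\varphi_+$ gives $(b_-^a)^{-1}$ on $L^*$ and $b_+^a$ on $L$, and $\varphi_-^{-1}J^{(1)}\varphi_+$ gives $(b_-^o)^{-1}b_+^o$ on $\mathbb{R}$.

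\textbf{The gap.} It lies in the step you flag as the main obstacle, and it is not a bookkeeping issue. By \eqref{theta}, $i\theta=k^2-2k_0k=(k-k_0)^2-k_0^2$. So for real $k_0$ and $k-k_0=re^{i\phi}$, one has $\re(i\theta)=r^2\cos 2\phi-k_0^2$. Your formula has the opposite sign, as does the paper's claim $\re(i\theta)=k_0^2$ on $L$; in fact $\re(i\theta)=-k_0^2$ on $L\cup L^*$.

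The sign is secondary. Since $i\theta$ is a polynomial in $k$ with real coefficients, $\re(i\theta)$ takes the same value at $k$ and at $\bar k$. The lens, however, requires $\re(i\theta)<0$ on $\Omega_1\cup\Omega_4$, where $\varphi$ carries $e^{2it\theta}$, and $\re(i\theta)>0$ on the mirror-image sectors $\Omega_6\cup\Omega_3$, where it carries $e^{-2it\theta}$. No sector-by-sector check can deliver that, with either sign.

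Concretely, $\cos 2\phi>0$ in all four sectors $\Omega_1,\Omega_3,\Omega_4,\Omega_6$. Hence in $\Omega_1\cup\Omega_4$ one has $|e^{2it\theta}|=e^{2t(r^2\cos 2\phi-k_0^2)}\to\infty$ as $|k|\to\infty$. Since $\delta\to 1$ and $[\rho_2]$ decays only like a power of $|k|$, the rational part alone makes $(b_-^a)^{-1}$ blow up. Nothing in Theorem~\ref{decomposition rho} controls $h_{II,\rho_2}$ inside the sector to offset this. So $M^{(2)}=M^{(1)}\varphi\to I$ does not follow, and generically it fails.

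Your appeal to \eqref{h2 L} and \eqref{R} cannot close this. Those are bounds on the lines only, and \eqref{R} inherits the same sign slip. Decay in an open sector needs a definite sign of $\re(i\theta)$ on the whole sector. That is unavailable here because $\theta$ is purely imaginary on $\mathbb{R}$, so $\re(i\theta)$ does not change sign across the real axis.

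\textbf{The pole at $k=i$.} Your claim that it is harmless is also unjustified. $R$ and $\beta$, and hence the continuation of $h_{II}$, carry the factors $(k-i)^{-(m+5)}$ and $(k-i)^{-(d+2)}$. The point $i$ lies in $\Omega_3$ as soon as $k_0>1$, and for the $\rho_2$ analogue $i\in\Omega_1$ when $k_0<-1$. So $\varphi$ has a pole inside a sector, and $M^{(2)}$ is not analytic off $\Sigma^{(2)}$.

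The standard Deift--Zhou remedy is to decompose $\rho_j$ separately on $(-\infty,k_0)$ and on $(k_0,\infty)$. For each piece, one places the zero of the denominator in the half-plane opposite to the sector into which that piece is continued.

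The growth problem, by contrast, cannot be fixed within the stated contour $\Sigma^{(2)}$. You would have to truncate the lenses to a bounded neighbourhood of $k_0$, which adds arcs to the contour and changes the statement. Alternatively, you would need a contour geometry adapted to this particular phase.
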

We further decompose the jump matrix as follows,
\begin{equation}
J^{(2)}=(b_-^{(2)})^{-1}b_+^{(2)},
\end{equation}
where
	\begin{equation*}
		b_+^{(2)} (k)= \left\{	\begin{array}{ll} I , & k \in L^{*},\\ b_+^a (k), & k \in L , \\ b_+^o (k), & k \in \mathbb{R},    \end{array}    \right. \  \  \
		b_-^{(2)} (k)= \left\{	\begin{array}{ll} b_-^a (k) , & k \in L,\\ I, & k \in L^{*} , \\ b_-^o (k), & k \in \mathbb{R},    \end{array}    \right.
	\end{equation*}
and introduce
\begin{equation}
w_{\pm}^{(2)}=\pm(b_{\pm}^{(2)}-I), \quad w^{(2)}=w_+^{(2)}+w_-^{(2)}.
\end{equation}

	Next, we define the Cauchy operators $C_\pm$ for $k \in \Sigma^{(2)}$ via the singular integral
\begin{equation}
	(C_\pm f)(k)= \frac{1}{2 \pi i}\int_{\Sigma^{(2)}} \frac{f(\xi)}{\xi - k_\pm} \mathrm{d} \xi ,  \quad  
	f \in \mathscr{L}^2 (\Sigma),
\end{equation}
where $C_+f\ (C_-f)$ denotes the left (right) boundary value of $f$ along the oriented contour $\Sigma^{(2)}$ (see figure \ref{fig:3}). Let $\omega^{(2)}(k)=\omega_+^{(2)}(k)+\omega_-^{(2)}(k)$. We then introduce the operator $C_{\omega^{(2)}}: \mathscr{L}^2(\Sigma^{(2)}) + \mathscr{L}^{\infty}(\Sigma^{(2)}) \to \mathscr{L}^2(\Sigma^{(2)}) $ by
\begin{equation}\label{C w(2)-operater}
	C_{\omega^{(2)}} f = C_+(f \omega_-^{(2)}) + C_-(f \omega_+^{(2)}),
\end{equation}
	for any $2\times 2$ matrix-valued function $f$. Assume $\mu^{(2)}(y,t,k) \in \mathscr{L}^2(\Sigma^{(2)}) + \mathscr{L}^{\infty}(\Sigma^{(2)}) $  is the solution to the singular integral equation
\begin{equation}
	\mu^{(2)}(y,t,k)=I+C_{\omega^{(2)}} \mu^{(2)}(y,t,k).
\end{equation}
Then, according to the Beals-Coifman theorem \cite{BC}, the solution to the RH problem \eqref{RHP M(2)} is given by the Cauchy integral
\begin{equation}\label{M(2)-Cauchy}
	M^{(2)}(y,t,k)=I+\frac{1}{2 \pi i} \int_{\Sigma^{(2)}} \frac{\mu^{(2)}(y,t,\xi)\omega^{(2)}(y,t,\xi)}{\xi -k}
	\mathrm{d} \xi .
\end{equation}
 Noticing that $\gamma_1(k)=O(k^3)$ as $k \to 0,$ we expand $M^{(2)}(y,t,k)$ in powers of $k$:
 \begin{equation}
 		M^{(2)}(y,t,k) =  M_0^{(2)}(y,t)+	M_1^{(2)}(y,t)k + O(k^2).
 \end{equation}
From \eqref{M(2) relate M(1)} and \eqref{varphi}, it can be obtained that
\begin{equation}
		M_0^{(1)}(y,t) = M_0^{(2)}(y,t), \quad 	M_1^{(1)}(y,t) =M_1^{(2)}(y,t).
\end{equation}
Consequently, \eqref{tilde u2} and \eqref{tilde v2} are equivalent to
\begin{align}
	&	\tilde{u}(y,t)=-\left[\left( M_0^{(2)}(y,t) \right)^{-1} M_1^{(2)}(y,t)\right]_{12}, \label{tilde u3}\\
	&	\tilde{v}(y,t)=- \left[\left(M_0^{(2)}(y,t)\right)^{-1} M_1^{(2)}(y,t)\right]_{21}, \label{tilde v3}
\end{align}
where $ M_0^{(2)} (y,t)$ and $M_1^{(2)} (y,t)$ can be derived from \eqref{M(2)-Cauchy},
\begin{align}
	&	M_0^{(2)}(y,t)=I+ \frac{1}{2 \pi i} \int_{\Sigma^{(2)}} \frac{\mu^{(2)}(y,t,\xi)\omega^{(2)}(y,t,\xi)}{\xi} \mathrm{d} \xi, \label{M 0 (2)}\\
	&	M_1^{(2)}(y,t)= \frac{1}{2 \pi i} \int_{\Sigma^{(2)}} \frac{\mu^{(2)}(y,t,\xi)\omega^{(2)}(y,t,\xi)}{\xi^2} \mathrm{d} \xi. \label{M 1 (2)}
\end{align}
\subsection{Rational approximation of the Riemann-Hilbert problem}

 In this section, we perform a rational approximation of the RH problem and further decompose it as follows,
 \begin{equation}
 w^{(2)}=\hat{w}+R, \quad R=R_++R_-,
 \end{equation}
 where $\hat{w}$ consists of the $h_I$ and $h_{II}$ components of $\rho_1$ and $\rho_2$, while $R$ represents the rational part of $\rho_1$ and $\rho_2$. We now explicitly perform this decomposition on each of the six jump contours $\Sigma_j^{(2)} \ (j=1,2,\cdots,6)$: \\
 On $\Sigma_1^{(2)}$:
\begin{align}
J^{(2)}&={\begin{pmatrix} 1	& -\delta_+^{2} e^{2it\theta} (h_{II,\gamma_2}+[\gamma_2]) \\ 0		& 1	\end{pmatrix}}^{-1}, \quad
w^{(2)}={\begin{pmatrix} 0	& \delta_+^{2} e^{2it\theta} (h_{II,\gamma_2}+[\gamma_2]) \\ 0		& 0	\end{pmatrix}}, \\ 
\hat{w}&= {\begin{pmatrix} 0	& \delta_+^{2} e^{2it\theta} h_{II,\gamma_2} \\ 0		& 0	\end{pmatrix}}, \quad 
R={\begin{pmatrix} 0	& \delta_+^{2} e^{2it\theta} [\gamma_2] \\  0		& 0	\end{pmatrix}}; \label{w hat 2}
\end{align}
 on $\Sigma_2^{(2)}$:
\begin{align}
J^{(2)}&={\begin{pmatrix} 1	& 0 \\ \delta_-^{-2} e^{-2it\theta} (h_{II,\frac{-\gamma_1}{1+\gamma_1\gamma_2}}+[\frac{-\gamma_1}{1+\gamma_1\gamma_2}])		& 1	\end{pmatrix}}, \\
w^{(2)}&={\begin{pmatrix} 0	& 0 \\ \delta_-^{-2} e^{-2it\theta} (h_{II,\frac{-\gamma_1}{1+\gamma_1\gamma_2}}+[\frac{-\gamma_1}{1+\gamma_1\gamma_2}])		& 0	\end{pmatrix}}, \\ 
\hat{w}&= {\begin{pmatrix} 0	& 0 \\ \delta_-^{-2} e^{-2it\theta} h_{II,\frac{-\gamma_1}{1+\gamma_1\gamma_2}}		& 0	\end{pmatrix}}, \\  \label{w hat 1}
R&={\begin{pmatrix} 0	& 0 \\ \delta_-^{-2} e^{-2it\theta} [\frac{-\gamma_1}{1+\gamma_1\gamma_2}]		& 0	\end{pmatrix}};
\end{align}
on $\Sigma_3^{(2)}$:
\begin{align}
J^{(2)}&={\begin{pmatrix} 1	& -\delta_+^{2} e^{2it\theta} (h_{II,\frac{-\gamma_2}{1+\gamma_1\gamma_2}}+[\frac{-\gamma_2}{1+\gamma_1\gamma_2}]) \\ 0		& 1	\end{pmatrix}}^{-1}, \\
w^{(2)}&={\begin{pmatrix} 0	& \delta_+^{2} e^{2it\theta} (h_{II,\frac{-\gamma_2}{1+\gamma_1\gamma_2}}+[\frac{-\gamma_2}{1+\gamma_1\gamma_2}]) \\ 0		& 0	\end{pmatrix}}, \\ 
\hat{w}&= {\begin{pmatrix} 0	& \delta_+^{2} e^{2it\theta} h_{II,\frac{-\gamma_2}{1+\gamma_1\gamma_2}} \\ 0		& 0	\end{pmatrix}}, \\ \label{w hat 4}
R&={\begin{pmatrix} 0	& \delta_+^{2} e^{2it\theta} [\frac{-\gamma_2}{1+\gamma_1\gamma_2}] \\  0		& 0	\end{pmatrix}};
\end{align}
on $\Sigma_4^{(2)}$:
\begin{align}
J^{(2)}&={\begin{pmatrix} 1	& 0 \\ \delta_-^{-2} e^{-2it\theta} (h_{II,\gamma_1}+[\gamma_1])		& 1	\end{pmatrix}}, \quad
w^{(2)}={\begin{pmatrix} 0	& 0 \\ \delta_-^{-2} e^{-2it\theta} (h_{II,\gamma_1}+[\gamma_1])		& 0	\end{pmatrix}}, \\ 
\hat{w}&= {\begin{pmatrix} 0	& 0 \\ \delta_-^{-2} e^{-2it\theta} h_{II,\gamma_1}		& 0	\end{pmatrix}}, \quad 
R={\begin{pmatrix} 0	& 0 \\ \delta_-^{-2} e^{-2it\theta} [\gamma_1]		& 0	\end{pmatrix}}; \label{w hat 3}
\end{align}
on $\Sigma_5^{(2)}$:
\begin{align}
J^{(2)}&= \begin{pmatrix} 1	& -\delta_+^{2} e^{2it\theta} h_{I,\frac{-\gamma_2}{1+\gamma_1\gamma_2}} \\ 0		& 1	\end{pmatrix} ^{-1} \begin{pmatrix} 1	& 0 \\ \delta_-^{-2} e^{-2it\theta} h_{I,\frac{-\gamma_1}{1+\gamma_1\gamma_2}}		& 1	\end{pmatrix} , \\ 
w^{(2)}&=\hat{w}=\begin{pmatrix} 0	& -\delta_+^{2} e^{2it\theta} h_{I,\frac{-\gamma_2}{1+\gamma_1\gamma_2}} \\ 0		& 0	\end{pmatrix} ^{-1} \begin{pmatrix} 0	& 0 \\ \delta_-^{-2} e^{-2it\theta} h_{I,\frac{-\gamma_1}{1+\gamma_1\gamma_2}}		& 0	\end{pmatrix} , \\   \label{w hat 6}
R&=0;
\end{align}
on $\Sigma_6^{(2)}$:
\begin{align}
J^{(2)}&={\begin{pmatrix} 1	& -\delta_+^{2} e^{2it\theta} h_{I,\gamma_2} \\ 0		& 1	\end{pmatrix}}^{-1} {\begin{pmatrix} 1	& 0 \\ \delta_-^{-2} e^{-2it\theta} h_{I,\gamma_1}		& 1	\end{pmatrix}}, \\ 
w^{(2)}&=\hat{w}={\begin{pmatrix} 0	& -\delta_+^{2} e^{2it\theta} h_{I,\gamma_2} \\ 0		& 0	\end{pmatrix}}^{-1} {\begin{pmatrix} 0	& 0 \\ \delta_-^{-2} e^{-2it\theta} h_{I,\gamma_1}		& 0	\end{pmatrix}}, \\  \label{w hat 5}
R&=0.
\end{align}
We then define the Cauchy operators:
\begin{equation}
C_R f =C_+(fR_-)+C_-(fR_+),  \quad C_{\hat{w}} f =C_+(f \hat{w}_-)+C_-(f\hat{w}_+).
\end{equation}
Next, we demonstrate that the contribution to the solution of the RH problem mainly comes from the rational part of the scatter data, while the contribution of $h_I$ and $h_{II}$ is the infinitesimal quantity in $t$.
\begin{equation}\label{mu w 2}
\begin{aligned}
\int_{\Sigma^{(2)}} \mu ^{(2)} w^{(2)} 
=&\int_{\Sigma^{(2)}} [(1-C_{w^{(2)}})^{-1}I]w^{(2)} \\ 
=& \int_{\Sigma^{(2)}} [(1-C_{w^{(2)}})^{-1}(1-C_{w^{(2)}}+C_{w^{(2)}})I]w^{(2)} \\ 
=& \int_{\Sigma^{(2)}} w^{(2)} + \int_{\Sigma^{(2)}} [(1-C_{w^{(2)}})^{-1}C_{w^{(2)}}I]w^{(2)} \\ 
=& \int_{\Sigma^{(2)}} w^{(2)} + \int_{\Sigma^{(2)}} \left \{(1-C_R)^{-1}(1-C_R)(1-C_{w^{(2)}})^{-1}C_{w^{(2)}}I \right \}w^{(2)}  \\ 
=& \int_{\Sigma^{(2)}} w^{(2)} + \int_{\Sigma^{(2)}} \left \{(1-C_R)^{-1}(1-C_{w^{(2)}}+C_{\hat{w}})(1-C_{w^{(2)}})^{-1}C_{w^{(2)}}I \right \}w^{(2)} \\ 
=& \int_{\Sigma^{(2)}} w^{(2)} + \int_{\Sigma^{(2)}} \left \{(1-C_R)^{-1}C_{w^{(2)}}I\right \}w^{(2)} \\
&+\int_{\Sigma^{(2)}} \left \{(1-C_R)^{-1}C_{\hat{w}} (1-C_{w^{(2)}})^{-1} C_{w^{(2)}} I\right \} w^{(2)}  \\ 
=&  \int_{\Sigma^{(2)}} R + \int_{\Sigma^{(2)}} \hat{w} + \int_{\Sigma^{(2)}} \left \{ (1-C_R)^{-1} C_{w^{(2)}} I\right\} w^{(2)} \\ 
&+ \int_{\Sigma^{(2)}} \left \{  (1-C_R)^{-1} C_{\hat{w}} (1-C_{w^{(2)}})^{-1} C_{w^{(2)}} I \right\} w^{(2)}.
\end{aligned}
\end{equation}
Note that the third integral in the above equation can be written as

\begin{equation}
\begin{aligned}
& \int_{\Sigma^{(2)}} \left \{ (1-C_R)^{-1} C_{w^{(2)}} I \right \} w^{(2)} \\
   =& \int_{\Sigma^{(2)}} \left \{ (1-C_R)^{-1} (C_{\hat{w}}+C_R ) I\right\} w^{(2)} \\ 
   =& \int_{\Sigma^{(2)}} \left \{ (1-C_R)^{-1} C_{\hat{w}} I\right\} w^{(2)} +
   \int_{\Sigma^{(2)}} \left \{ (1-C_R)^{-1} C_R  I\right\} \hat{w}     \\
   &+\int_{\Sigma^{(2)}} \left \{ (1-C_R)^{-1} C_R  I\right\} R    \\ 
   =&  \int_{\Sigma^{(2)}} \left \{ (1-C_R)^{-1} C_{\hat{w}} I\right\} w^{(2)} +
    \int_{\Sigma^{(2)}} \left \{ (1-C_R)^{-1} C_R  I\right\} \hat{w}     \\
   &+\int_{\Sigma^{(2)}} \left \{ (1-C_R)^{-1}(1-(1-C_R) )  I\right\} R    \\ 
   =&  \int_{\Sigma^{(2)}} \left \{ (1-C_R)^{-1} C_{\hat{w}} I\right\} w^{(2)} + 
      \int_{\Sigma^{(2)}} \left \{ (1-C_R)^{-1} C_R  I\right\} \hat{w}     \\ 
     &+\int_{\Sigma^{(2)}} \left \{ (1-C_R)^{-1})  I\right\} R  -  \int_{\Sigma^{(2)}} R.      
\end{aligned}
\end{equation}
Substituting the above equation into \eqref{mu w 2}, we obtain
\begin{equation}\label{mu2 w2}
\begin{aligned}
& \int_{\Sigma^{(2)}} \mu ^{(2)} w^{(2)} \\
=& \int_{\Sigma^{(2)}} \left \{ (1-C_R)^{-1})  I\right\} R + \int_{\Sigma^{(2)}} \hat{w} + \int_{\Sigma^{(2)}} \left \{ (1-C_R)^{-1} C_{w^{(2)}} I\right\} w^{(2)} \\ 
&+ \int_{\Sigma^{(2)}} \left \{ (1-C_R)^{-1} C_R  I\right\} \hat{w} +\int_{\Sigma^{(2)}} \left \{  (1-C_R)^{-1} C_{\hat{w}} (1-C_{w^{(2)}})^{-1} C_{w^{(2)}} I \right\} w^{(2)} \\ 
\triangleq & \int_{\Sigma^{(2)}} \left \{ (1-C_R)^{-1})  I\right\} R + \mathcal{I} + \mathcal{II} + \mathcal{III} + \mathcal{IV}.
\end{aligned}
\end{equation}
Before estimating the contributions of the four terms $\mathcal{I}, \mathcal{II}, \mathcal{III} $ and $\mathcal{IV}$ to the solution of the RH problem, we first give the following result.
\begin{proposition}\label{CR}
For $|k_0|<M$ and as $t \to \infty,$ $(1-C_R)^{-1}$ exists and is uniformly bounded:
\begin{equation}
\parallel (1-C_R)^{-1} \parallel_{\mathscr{L}^2(\Sigma^{(2)}) \to \mathscr{L}^2(\Sigma^{(2)})} \lesssim 1.
\end{equation}
\begin{proof}
See the Proposition 2.23 and Corollary 2.25 in \cite{DeiftZ}.
\end{proof}
\end{proposition}
Based on Proposition \ref{CR}, we immediately derive the following conclusion.
\begin{corollary}
For $|k_0|<M$ and as $t \to \infty,$ $(1-C_{w^{(2)}})^{-1}$ exists and is uniformly bounded:
\begin{equation}
\parallel (1-C_{w^{(2)}})^{-1} \parallel_{\mathscr{L}^2(\Sigma^{(2)}) \to \mathscr{L}^2(\Sigma^{(2)})} \lesssim 1.
\end{equation}
\end{corollary}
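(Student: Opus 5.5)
The plan is a perturbation argument. We write $C_{w^{(2)}}=C_R+C_{\hat w}$, which holds because $w^{(2)}_\pm=\hat w_\pm+R_\pm$ and the Cauchy operators are linear in the weight. This gives
\begin{equation*}
1-C_{w^{(2)}}=(1-C_R)\bigl(1-(1-C_R)^{-1}C_{\hat w}\bigr).
\end{equation*}
By Proposition \ref{CR}, $(1-C_R)^{-1}$ exists and $\|(1-C_R)^{-1}\|_{\mathscr{L}^2\to\mathscr{L}^2}\lesssim 1$ uniformly for $|k_0|<M$ and large $t$. It therefore suffices to show that $\|C_{\hat w}\|_{\mathscr{L}^2(\Sigma^{(2)})\to\mathscr{L}^2(\Sigma^{(2)})}\to 0$ as $t\to\infty$. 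Once $\|(1-C_R)^{-1}C_{\hat w}\|\le\tfrac12$ for $t$ large, a Neumann series shows that $1-(1-C_R)^{-1}C_{\hat w}$ is invertible with inverse norm at most $2$. Then
\begin{equation*}
(1-C_{w^{(2)}})^{-1}=\bigl(1-(1-C_R)^{-1}C_{\hat w}\bigr)^{-1}(1-C_R)^{-1}
\end{equation*}
exists and is uniformly bounded.

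To bound $C_{\hat w}$, we use the boundedness of $C_\pm$ on $\mathscr{L}^2(\Sigma^{(2)})$. Since $\Sigma^{(2)}$ is a finite union of lines through $k_0$, the norms of $C_\pm$ are independent of $k_0$ by translation and scaling invariance. This gives
\begin{equation*}
\|C_{\hat w}f\|_{\mathscr{L}^2}\lesssim \|f\|_{\mathscr{L}^2}\bigl(\|\hat w_+\|_{\mathscr{L}^\infty(\Sigma^{(2)})}+\|\hat w_-\|_{\mathscr{L}^\infty(\Sigma^{(2)})}\bigr).
\end{equation*}
The task is then to bound $\hat w$ in $\mathscr{L}^\infty$ on each piece $\Sigma^{(2)}_j$, using the explicit formulas for $\hat w$ listed above together with Theorem \ref{decomposition rho}. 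On $\Sigma_5^{(2)}$ and $\Sigma_6^{(2)}$ (the real axis) the entries are $\delta_\pm^{\pm2}e^{\mp2it\theta}h_I$. There the estimate \eqref{h1}, its analogue for $\rho_2$, and the uniform boundedness of $\delta^{\pm1}$ (the theorem on $\delta$ above) give a bound $\lesssim t^{-k}$ for any $k$. On the rays $\Sigma_1^{(2)},\dots,\Sigma_4^{(2)}$ the entries involve $h_{II}$ continued to $L$ or $L^*$, and \eqref{h2 L} together with the analogous estimate for $h_{II}'$ on $L^*$ gives the same $t^{-k}$ decay.

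The point needing care is the estimate for $\delta^{\pm2}$ off the real axis. The factor $(k-k_0)^{-i\nu}$ is bounded on the rays because $\nu$ is real and the argument of $k-k_0$ is fixed on each ray. The Cauchy integral $\chi$ stays bounded, since the singularity of $\log(1+\gamma_1\gamma_2)$ at $k_0$ is only logarithmic after subtracting its value at $k_0$. We also need $1+\gamma_1\gamma_2$ bounded away from $0$ on $\mathbb{R}$, which we take from the no-zero assumption on $a$, so that $\log(1+\gamma_1\gamma_2)$ and the functions $\rho_{1,2}$ for $k<k_0$ are well defined. I expect this to be the main obstacle, because the paper only states boundedness of $\delta$ and not of $\delta^{-1}$. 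I would derive the bound on $\delta^{-1}$ from the same explicit formula, since $|\delta^{-1}|=|(k-k_0)^{i\nu}|\,e^{-\re\chi}$ and both factors are controlled by the same argument. Once this is settled, $\|\hat w\|_{\mathscr{L}^\infty\cap\mathscr{L}^2}\lesssim t^{-k}$, so $\|C_{\hat w}\|\lesssim t^{-k}\to 0$ and the corollary follows.
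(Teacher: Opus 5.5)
Your proposal is correct and follows the paper's argument. Both treat $C_{w^{(2)}}=C_R+C_{\hat w}$ as a small perturbation of $C_R$, bound $\|C_{\hat w}\|_{\mathscr{L}^2\to\mathscr{L}^2}$ by $\|\hat w_\pm\|_{\mathscr{L}^\infty}$ using the boundedness of $C_\pm$, and then invoke Proposition \ref{CR}. Your Neumann-series factorization is slightly more complete than the paper's second resolvent identity, because it also proves that $(1-C_{w^{(2)}})^{-1}$ exists, which the resolvent identity tacitly assumes. Your checks that $\|C_\pm\|$ is independent of $k_0$ and that $\delta^{-1}$ is bounded fill in details the paper leaves implicit.
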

\begin{proof} Since $C_{\pm}$ are bounded operators of $ \mathscr{L}^2(\Sigma) \to \mathscr{L}^2(\Sigma) $, we have
\begin{equation*}
	\begin{aligned}
	\left \| C_{\omega^{(2)}} -C_R \right \| _{\mathscr{L}^2(\Sigma) \to \mathscr{L}^2(\Sigma)} & =\sup \frac{\left \| C_+(f \hat{w}_-) + C_-(f \hat{w}_+) \right \|_ {\mathscr{L}^2(\Sigma)}}{\left \| f \right \|_ {\mathscr{L}^2(\Sigma)}} \\
	& \le \sup \frac{\left \| C_+f \right \| _ {\mathscr{L}^2(\Sigma)} \left \| \hat{w}_- \right \| _ {\mathscr{L}^{\infty}(\Sigma)} }{\left \|   f \right \| _ {\mathscr{L}^2(\Sigma)}} +
	\sup \frac{\left \| C_- f \right \| _ {\mathscr{L}^2(\Sigma)} \left \| \hat{w}_+ \right \| _ {\mathscr{L}^{\infty}(\Sigma)} }{\left \| f \right \| _ {\mathscr{L}^2(\Sigma)}}\\
	& \le  \left \| C_+ \right \| _ {\mathscr{L}^2(\Sigma) \to \mathscr{L}^2(\Sigma) } \left \| \hat{w}_- \right \| _ {\mathscr{L}^{\infty}(\Sigma)}
	+ \left \| C_- \right \| _ {\mathscr{L}^2(\Sigma) \to \mathscr{L}^2(\Sigma) } \left \| \hat{w}_+ \right \| _ {\mathscr{L}^{\infty}(\Sigma)} \\
	& \lesssim t^{-1}.
	\end{aligned}
\end{equation*}
Furthermore, the second resolvent identity gives
\begin{equation}
	(1-C_{\omega^{(2)}})^{-1}-(1-C_R)^{-1}=	(1-C_{\omega^{(2)}})^{-1}(C_{\omega^{(2)}}-C_R)(1-C_R)^{-1}.
\end{equation}
Combining this with Proposition \ref{CR}, we conclude that $	(1-C_{\omega^{(2)}})^{-1} $ is uniformly bounded.
\end{proof}
In what follows, we present the estimates for $\mathcal{I}, \mathcal{II}, \mathcal{III} $ and $\mathcal{IV}$. By observing matrices \eqref{w hat 1}, \eqref{w hat 2}, \eqref{w hat 3}, \eqref{w hat 4}, \eqref{w hat 5} and \eqref{w hat 6}, we see that the four matrix elements of $\hat{w}$ are composed of $h_{I,\rho_1}$, $h_{I,\rho_2}$, $h_{II,\rho_1}$ and $h_{II,\rho_2}$. Applying Proposition \ref{CR}, we derive the following results:
\begin{equation}\label{I estimate}
|\mathcal{I}| = \left | \int_{\Sigma^{(2)}} \hat{w}   \right |  \le \parallel  \hat{w} \parallel_{\mathscr{L}^1} \lesssim t^{-1},
\end{equation}
and
\begin{equation}
\begin{aligned}
|\mathcal{II}|=& \left | \int_{\Sigma^{(2)}} \left\{ (1-C_R)^{-1} C_{\hat{w}} I \right\} w^{(2)} \right | \\ 
\le & \parallel (1-C_R)^{-1} C_{\hat{w}} I \parallel_{\mathscr{L}^2} \parallel w^{(2)} \parallel_{\mathscr{L}^2} \\ 
\le & \parallel (1-C_R)^{-1}  \parallel_{\mathscr{L}^2 \to \mathscr{L}^2} 
\parallel C_{\hat{w}} I \parallel_{\mathscr{L}^2} \parallel w^{(2)} \parallel_{\mathscr{L}^2} \\ 
\lesssim & \parallel C_+ \hat{w}_- \parallel_{\mathscr{L}^2} + \parallel C_- \hat{w}_+ \parallel_{\mathscr{L}^2} \\ 
\lesssim & \parallel \hat{w} \parallel_{\mathscr{L}^2} \parallel w^{(2)} \parallel_{\mathscr{L}^2}, 
\end{aligned}
\end{equation}
where
\begin{equation}
\parallel w^{(2)} \parallel_{\mathscr{L}^2} \le \parallel \hat{w} \parallel_{\mathscr{L}^2}+
\parallel R \parallel_{\mathscr{L}^2},
\end{equation}
and
\begin{equation}
|R|=\left| \frac{\sum^m_{j=0} \mu_j (k-k_0)^j}{(k+i)^{m+5}} \right| \lesssim 
\frac{1}{|k+i|^5} \in \mathscr{L}^1 \cap \mathscr{L}^2 \cap \mathscr{L}^{\infty}.
\end{equation}
Thus, we have
\begin{equation}\label{II estimate}
|\mathcal{II}| \le ct^{-1} + ct^{-2} \lesssim t^{-1}.
\end{equation}
Similarly, we have the following estimates:
\begin{equation}\label{III estimate}
\begin{aligned}
|\mathcal{III}|
=& \left| \int_{\Sigma^{(2)}} \left \{ (1-C_R)^{-1} C_R  I\right\} \hat{w}  \right| \\ 
\le & \parallel (1-C_R)^{-1} \parallel_{\mathscr{L}^2 \to \mathscr{L}^2} 
\parallel C_R I \parallel_{\mathscr{L}^2} \parallel \hat{w} \parallel_{\mathscr{L}^2} \\ 
\le & c (\parallel C_+R_-\parallel_{\mathscr{L}^2} + \parallel C_-R_+\parallel_{\mathscr{L}^2} )
\parallel \hat{w} \parallel_{\mathscr{L}^2} \\ 
\le & c \parallel R \parallel_{\mathscr{L}^2} \parallel \hat{w} \parallel_{\mathscr{L}^2}  \\ 
\lesssim & t^{-1},
\end{aligned}
\end{equation}
and

\begin{equation}\label{IV estimate}
\begin{aligned}
|\mathcal{IV}|=& \left|\int_{\Sigma^{(2)}} \left \{  (1-C_R)^{-1} C_{\hat{w}} (1-C_{w^{(2)}})^{-1} C_{w^{(2)}} I \right\} w^{(2)} \right| \\ 
\le & \parallel (1-C_R)^{-1} \parallel_{\mathscr{L}^2 \to \mathscr{L}^2}  
\parallel C_{\hat{w}} \parallel_{\mathscr{L}^2 \to \mathscr{L}^2} 
\parallel (1-C_{w^{(2)}})^{-1} \parallel_{\mathscr{L}^2 \to \mathscr{L}^2} \\ 
& \cdot (\parallel C_+ w^{(2)}_- \parallel_{\mathscr{L}^2} + \parallel C_-w^{(2)}_+ \parallel_{\mathscr{L}^2}) \parallel w^{(2)} \parallel_{\mathscr{L}^2} \\ 
\lesssim & t^{-1}.
\end{aligned}
\end{equation}
Substituting \eqref{I estimate}, \eqref{II estimate}, \eqref{III estimate} and \eqref{IV estimate} into \eqref{mu w 2}, we have
\begin{equation}\label{w2 to R}
\int_{\Sigma^{(2)}} [(1-C_{w^{(2)}})^{-1}I]w^{(2)} = \int_{\Sigma^{(2)}} \left \{ (1-C_R)^{-1})  I\right\} R + O(t^{-1}).
\end{equation}
Setting $\mu ^{(3)}=(1-C_R)^{-1} I$ and $w^{(3)}=R$, we establish a new RH problem on the contour $\Sigma^{(3)}=\Sigma^{(2)} \setminus \mathbb{R}$  (see figure \ref{fig:4}). The corresponding solution is given by the Cauchy integral
\begin{equation}\label{M(3)-Cauchy}
	M^{(3)}(y,t,k)= I+ \frac{1}{2 \pi i}\int_{\Sigma^{(3)}}\frac{\mu^{(3)}(y,t,\xi) w^{(3)}(y,t,\xi)}{\xi -k} \mathrm{d} \xi.
\end{equation}

\begin{figure}[H]
\vspace{0cm}
\centering
\begin{tikzpicture}
\draw[thick,->](5,2)--(4,1);
\draw[thick](4,1)--(3,0);

\draw[thick,->](5,-2)--(4,-1);
\draw[thick](4,-1)--(3,0);

\draw[thick,->](1,2)--(2,1);
\draw[thick](2,1)--(3,0);

\draw[thick,->](1,-2)--(2,-1);
\draw[thick](2,-1)--(3,0);
\draw(3,0) node  {\tiny$\bullet$};
\draw(3,-0.6) node [above] {\tiny$k_0$};
\draw(5.5,1.8) node [above] {\tiny$\Sigma_1^{(3)}$};
\draw(0.5,1.8) node [above] {\tiny$\Sigma_2^{(3)}$};
\draw(0.5,-1.8) node [below] {\tiny$\Sigma_3^{(3)}$};
\draw(5.5,-1.8) node [below] {\tiny$\Sigma_4^{(3)}$};
\end{tikzpicture}
\caption{The oriented contour on $\Sigma^{(3)}$.}
\label{fig:4} 
\end{figure}
\begin{lemma}
	The matrix-valued function	$M^{(3)}(y,t,k)$ is analytic on $\mathbb{C} \setminus \Sigma^{(3)} $, and satisfies the RH problem
		\begin{equation}\label{RHP M(3)}
	\left\{\begin{array}{ll} M^{(3)}_+(y,t,k)=  M^{(3)}_-(y,t,k)  J^{(3)}(y,t,k) , \ & k \in \Sigma^{(3)},\\
	M^{(3)}(y,t,k) \to I, \ & k \to \infty,  \end{array}\right.
\end{equation}
where
\begin{equation}
\begin{aligned}
J^{(3)}(y,t,k)=\left\{\begin{array}{ll} 	
								\left( \begin{array}{cc} 1	& -\delta_+^{2} e^{2it\theta} [\gamma_2] \\  0		& 1	\end{array}\right)^{-1}, & k \in \Sigma^{(3)}_1, \\\left(\begin{array}{cc} 
 1	& 0 \\ \delta_-^{-2} e^{-2it\theta} [\frac{-\gamma_1}{1+\gamma_1\gamma_2}]		& 1	\end{array}\right), &  k \in \Sigma^{(3)}_2, \\ 
\left( \begin{array}{cc} 1	& -\delta_+^{2} e^{2it\theta} [\frac{-\gamma_2}{1+\gamma_1\gamma_2}] \\  0		& 1	\end{array}\right)^{-1}, &k \in \Sigma^{(3)}_3 , \\  
\left( \begin{array}{cc} 1	& 0 \\ \delta_-^{-2} e^{-2it\theta} [\gamma_1]		& 1	\end{array}\right), & k \in \Sigma^{(3)}_4.
\end{array}\right.
\end{aligned}
\end{equation}
\end{lemma}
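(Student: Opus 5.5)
The plan is to derive the lemma directly from the Beals--Coifman representation \eqref{M(3)-Cauchy}, run in the reverse direction. On $\Sigma^{(3)}$ I set $w^{(3)}=R$ and take the factorization $J^{(3)}=(b_-^{(3)})^{-1}b_+^{(3)}$ with $b_\pm^{(3)}=I\pm R_\pm$. Then $J^{(3)}$ on each ray $\Sigma_j^{(3)}$ is read off from $R$ on the matching ray of $\Sigma^{(2)}$, as recorded in \eqref{w hat 2}--\eqref{w hat 3}.

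On $\Sigma_1^{(3)}$ and $\Sigma_3^{(3)}$, $R$ is strictly upper triangular. The ray is an $L^*$-type ray where $b_+^{(2)}=I$ and $b_-^{(2)}$ carries the rational part. This gives $R_+=0$ and $J^{(3)}=(I-R_-)^{-1}$, which is the inverted upper-triangular matrix displayed in the statement. On $\Sigma_2^{(3)}$ and $\Sigma_4^{(3)}$ the roles are reversed: $R_-=0$, and $J^{(3)}=I+R_+$ is lower triangular. On $\mathbb{R}$ the rational part vanishes, since $R=0$ on $\Sigma_5^{(2)}$ and $\Sigma_6^{(2)}$. Therefore no jump survives there, and the contour reduces to $\Sigma^{(3)}=\Sigma^{(2)}\setminus\mathbb{R}$.

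Next I verify the Riemann--Hilbert properties. By Proposition \ref{CR}, $\mu^{(3)}=(1-C_R)^{-1}I$ exists in $I+\mathscr{L}^2(\Sigma^{(3)})$. Here I use that $R\in\mathscr{L}^1\cap\mathscr{L}^2\cap\mathscr{L}^\infty$ (bound $|R|\lesssim|k+i|^{-5}$), together with the exponential decay of $e^{\mp2it\theta}$ and the uniform boundedness of $\delta^{\pm2}$ along the rays. Hence the Cauchy integral \eqref{M(3)-Cauchy} converges absolutely for $k\notin\Sigma^{(3)}$, which gives analyticity off the contour. Because $\mu^{(3)}R\in\mathscr{L}^1$, the integral is $O(1/k)$ as $k\to\infty$ away from the rays, so $M^{(3)}\to I$.

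For the jump, I apply $C_+-C_-=1$:
\[
M^{(3)}_\pm=I+C_\pm(\mu^{(3)}R_+)+C_\pm(\mu^{(3)}R_-).
\]
The equation $\mu^{(3)}=I+C_+(\mu^{(3)}R_-)+C_-(\mu^{(3)}R_+)$ then gives $M^{(3)}_\pm=\mu^{(3)}(I\pm R_\pm)$. From this, $M_+^{(3)}=M_-^{(3)}(I-R_-)^{-1}(I+R_+)=M_-^{(3)}J^{(3)}$.

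The main obstacle is the rigorous setting near $k_0$, where the four rays meet and $\delta^{\pm2}$ carries the factor $(k-k_0)^{\mp2i\nu}$. I must check that $\mu^{(3)}R$ is locally integrable there and that the boundary values exist in the $\mathscr{L}^2$ sense. I would handle this by noting that $\nu$ is real, because $1+\gamma_1\gamma_2$ is real and positive. Then $|(k-k_0)^{\mp2i\nu}|$ is bounded on each ray, and the theorem on the boundedness of $\delta$ supplies the remaining uniform control. Uniqueness follows in the standard way. Since $\det J^{(3)}=1$, $\det M^{(3)}\equiv1$ by Liouville's theorem, and then $M^{(3)}\tilde M^{-1}$ has no jump for any second solution $\tilde M$ and so equals $I$.
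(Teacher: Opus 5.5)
Your proposal is correct and follows the same route as the paper. The paper gives no separate argument: it invokes the Beals--Coifman correspondence with $\mu^{(3)}=(1-C_R)^{-1}I$ and $w^{(3)}=R$ (which vanishes on $\mathbb{R}$), so that $J^{(3)}=(I-R_-)^{-1}(I+R_+)$ is read off ray by ray. You fill in the details the paper skips, and you correctly take $b_-^{(2)}$ to carry the rational part on $L^*$ and $b_+^{(2)}$ on $L$, whereas the paper's displayed definition of $b_-^{(2)}$ has $L$ and $L^*$ interchanged.

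One small correction. With the paper's phase, $2it\theta=2t\big((k-k_0)^2-k_0^2\big)$, so $|e^{\pm 2it\theta}|$ is constant along the four rays rather than decaying. This is harmless: for fixed $t$ that factor and $\delta^{\pm2}$ are bounded, and the $O(|k|^{-5})$ decay of the rational part alone already gives $w^{(3)}\in\mathscr{L}^1\cap\mathscr{L}^2\cap\mathscr{L}^\infty$.
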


We now expand $M^{(3)}(y,t,k)$ in powers of $k$ as $k \to 0$:
\begin{equation}
	M^{(3)}(y,t,k)=M_0^{(3)}(y,t)+M_1^{(3)}(y,t) k +O(k^2),  \quad  k \to 0.
\end{equation}
From the Cauchy integral representation \eqref{M(3)-Cauchy}, a direct calculation shows that
\begin{align}
	& M_0^{(3)}(y,t)= I+\frac{1}{2 \pi i} \int_{\Sigma^{(3)}}\frac{\left((1-C_{w^{(3)}})^{-1} I \right)(\xi) {w^{(3)}}(\xi)}{\xi } \mathrm{d} \xi, \label{M 0 (3)} \\
	& M_1^{(3)}(y,t)= \frac{1}{2 \pi i} \int_{\Sigma^{(3)}}\frac{\left((1-C_{w^{(3)}})^{-1} I\right)(\xi) w^{(3)}(\xi)}{\xi^2 } \mathrm{d} \xi. \label{M 1 (3)}
\end{align}
Combining \eqref{M 0 (2)}, \eqref{M 1 (2)}, \eqref{M 0 (3)}, \eqref{M 1 (3)} and \eqref{w2 to R}, we rewrite the potential reconstruction formulas \eqref{tilde u3} and \eqref{tilde v3} as
\begin{equation}
	\begin{aligned}\label{tilde u4}
	\tilde{u}(y,t )=&-\left[\left(M_0^{(2)}(y,t)\right)^{-1} M_1^{(2)}(y,t)\right]_{12} \\
	=&-\left\{ \left[M_0^{(3)}(y,t)+O( t^{-1})\right]^{-1} \left[M_1^{(3)}(y,t)+O( t^{-1})\right] \right\}_{12} \\
	=& -\left[M_0^{(3)}(y,t)+O( t^{-1})\right]_{22} \left[M_1^{(3)}(y,t)+O( t^{-1})\right]_{12} \\
	&+\left[M_0^{(3)}(y,t)+O( t^{-1})\right]_{12} \left[M_1^{(3)}(y,t)+O(t^{-1})\right]_{22} ,
	\end{aligned}
\end{equation}
\begin{equation}
	\begin{aligned}\label{tilde v4}
\tilde{v}(y,t)=&- \left[\left(M_0^{(2)}(y,t)\right)^{-1} M_1^{(2)}(y,t)\right]_{21} \\ 
=&-\left\{ \left[M_0^{(3)}(y,t)+O( t^{-1})\right]^{-1} \left[M_1^{(3)}(y,t)+O( t^{-1})\right] \right\}_{21} \\
	=& -\left[M_0^{(3)}(y,t)+O( t^{-1})\right]_{11} \left[M_1^{(3)}(y,t)+O( t^{-1})\right]_{21} \\
	&+\left[M_0^{(3)}(y,t)+O( t^{-1})\right]_{21} \left[M_1^{(3)}(y,t)+O(t^{-1})\right]_{11}.
	\end{aligned}
\end{equation}

\subsection{Scaling of Riemann-Hilbert problem}
In the next step, we perform a scaling transformation on the RH problem \eqref{RHP M(3)}. The leading-order asymptotics of $\tilde{u}(y,t)$ are determined by integrals on the contour $\Sigma^{(3)}$ by \eqref{tilde u4}. To obtain explicit expressions relating the potential $\tilde{u}(y,t)$ to the RH solution, we construct a scaled RH problem on a cross-shaped contour passing through the origin. We first extend the contour as follows:
\begin{align}
	\Sigma^{(3)}=\hat{\Sigma}=\left\{k_0 + \alpha k_0 e^{\pm \frac{\pi i}{4}} : \alpha \in \mathbb{R} \right\}.
\end{align}

\begin{figure}[H]
\vspace{0cm}
\centering
\begin{tikzpicture}
\draw[thick,->](5,2)--(4,1);
\draw[thick](4,1)--(3,0);

\draw[thick,->](5,-2)--(4,-1);
\draw[thick](4,-1)--(3,0);

\draw[thick,->](1,2)--(2,1);
\draw[thick](2,1)--(3,0);

\draw[thick,->](1,-2)--(2,-1);
\draw[thick](2,-1)--(3,0);
\draw(3,0) node  {\tiny$\bullet$};
\draw(3,-0.6) node [above] {$0$};
\draw(5.5,1.8) node [above] {$\Sigma_1$};
\draw(0.5,1.8) node [above] {$\Sigma_2$};
\draw(0.5,-1.8) node [below] {$\Sigma_3$};
\draw(5.5,-1.8) node [below] {$\Sigma_4$};
\end{tikzpicture}
\caption{The oriented contour on $\Sigma$.}
\label{fig:5} 
\end{figure}
Let $ \Sigma $ denote the contour $\left\{k=\alpha k_0 e^{\pm \frac{\pi i}{4}}: \alpha \in \mathbb{R} \right\}$. We introduce the scaling operator
\begin{equation}\label{Nf}
\begin{aligned}
N: &\mathscr{L}^2( \hat{\Sigma}) \rightarrow \mathscr{L}^2( {\Sigma}) \\
& f(k) \mapsto (Nf)(k)=f\left(k_0+k \sqrt{\frac{i}{4t}}\right).
\end{aligned}
\end{equation}
 When the scaling operator acts on the exponential term and
$\delta(k)$, direct calculations indicate that
\begin{equation}
	N(e^{it\theta} \delta(k))=\delta^0 \delta^1(k),
\end{equation}
where
\begin{equation}\label{delta 0}
\begin{aligned}
\delta_0&=(4t)^{-\frac{i \nu}{2}} e^{\chi(k_0)} e^{k_0^2 t}, \\ 
\delta_1&=k^{i \nu} e^{\chi(k_0 + k\sqrt{\frac{i}{4t}}) - \chi(k_0)}e^{-\frac{i k^2}{4}}.
\end{aligned}
\end{equation}
Furthermore, the jump matrix transforms as
\begin{equation}
N J^{(3)}(k)= J^{(3)}(k_0+k\sqrt{\frac{i}{4t}}).
\end{equation}
Under this scaling transformation, the structure of the jump matrix is illustrated in figure \ref{fig:5}. We set
\begin{equation}
	\omega^{(4)}=N \omega^{(3)}.
\end{equation}
By variable transformation, we derive
\begin{equation}\label{C wA VT}
	C_{\omega^{(3)}}= N^{-1} C_{\omega^{(4)}} N,
\end{equation}
where $ C_{\omega^{(4)}} : \mathscr{L}^2(\Sigma^{(4)}) \to
\mathscr{L}^2(\Sigma^{(4)})$ is bounded. Define
\begin{equation}
\begin{aligned}
	L_A&=\left\{k=\sqrt{\frac{4t}{i}}\alpha e^{\frac{3\pi i}{4}}: -\infty < \alpha <+\infty \right\}, \\ 
	L_A^{*}&=\left\{k=\sqrt{\frac{4t}{i}}\alpha e^{\frac{\pi i}{4}}: -\infty < \alpha <+\infty \right\}.
	\end{aligned}
\end{equation}
For $k \in L_A,$ we have
\begin{equation}
	\omega^{(4)}=\omega^{(4)}_{+}=  \begin{pmatrix} 0 &  0  \\N(R_1(k)\delta^{-2}(k) e^{-2it\theta})  & 0\end{pmatrix},
\end{equation}
and for $k \in L_A^{*},$
\begin{equation}
	\omega^{(4)}=\omega^{(4)}_{-}=  \begin{pmatrix} 0 &  N(R_2(k)\delta^2(k) e^{2it\theta})   \\	0 & 0\end{pmatrix}.
\end{equation}
\begin{lemma}\label{N lesssim} As $t \to \infty$, the following estimates hold:
\begin{equation}
\begin{aligned}
&\left |  N(R(k))(\delta^1(k))^{-2}-R(k_0 \pm)k^{-2i \nu} e^{\frac{i k^2}{2}}  \right |
	\lesssim \frac{\log t}{\sqrt{t}}, \quad & k \in L, \\
	&\left |  N(R^{\prime}(k))(\delta^1(k))^2-R^{\prime}(k_0 \pm)k^{2i \nu} e^{-\frac{i k^2}{2}}  \right |
	\lesssim \frac{\log t}{\sqrt{t}}, \quad & k \in L^{*}, 
\end{aligned}
\end{equation}
where
\begin{align*}
	& R(k_0+)=  \lim_{\re k>k_0 } R(k)=\gamma_1(k_0),
	&& R(k_0-)=\lim_{\re k<k_0} R(k)=-\frac{\gamma_1(k_0)}{1+\gamma_1(k_0)\gamma_2(k_0)}, \\
	& R^{\prime}(k_0+)=\lim_{\re k>k_0}R^{\prime}(k)= \gamma_2(k_0),
	&& R^{\prime}(k_0-)=\lim_{\re k<k_0}R^{\prime}(k)=
	-\frac{\gamma_2(k_0)}{1+\gamma_1(k_0)\gamma_2(k_0)}.
\end{align*}
\end{lemma}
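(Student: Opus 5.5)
This is the standard Deift--Zhou estimate (cf. Lemma 3.35 of \cite{DeiftZ}). The plan is to split the difference into a \emph{reflection-coefficient} part and a \emph{$\delta$-correction} part, and bound each separately. I treat the first estimate on $L$; the one on $L^*$ for $R'$ is identical with $\delta^{1}$ replaced by its inverse and $\nu\to-\nu$.

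Writing $k$ for the rescaled variable on the ray (so the original spectral point is $k_0+k\sqrt{i/(4t)}$), I would decompose
\begin{equation*}
N(R)(\delta^1)^{-2}-R(k_0\pm)k^{-2i\nu}e^{\frac{ik^2}{2}}
=\bigl[N(R)-R(k_0\pm)\bigr](\delta^1)^{-2}
+R(k_0\pm)\bigl[(\delta^1)^{-2}-k^{-2i\nu}e^{\frac{ik^2}{2}}\bigr].
\end{equation*}
By \eqref{delta 0}, $(\delta^1)^{-2}=k^{-2i\nu}e^{\frac{ik^2}{2}}e^{-2(\chi(k_0+k\sqrt{i/(4t)})-\chi(k_0))}$. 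On the rays $L$ the factor $e^{ik^2/2}$ is Gaussian, $|e^{ik^2/2}|=e^{-|k|^2/2}$, since the rescaling by $\sqrt{i/(4t)}$ turns the steepest-descent directions into $\arg k^2=\pi/2$ after the orientation conventions. Also $|k^{-2i\nu}|\le e^{2\pi|\nu|}$ is bounded on the rays, and $|\nu|$ is bounded because $1+\gamma_1\gamma_2$ is real, bounded, and (by the no-zeros assumption) bounded away from $0$.

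\emph{First term.} $R$ is rational with poles only at $k=i$, so it is smooth near $k_0$. Its one-sided limits satisfy $R(k_0\pm)=\rho_1(k_0\pm)$, because the Taylor matching in Theorem~\ref{decomposition rho} makes $R$ agree with $\rho_1$ at $k_0$. This gives
\begin{equation*}
|N(R)(k)-R(k_0\pm)|\lesssim |k|/\sqrt t.
\end{equation*}
Combined with the Gaussian factor and the boundedness of $e^{-2(\chi(\cdot)-\chi(k_0))}$ (which follows from $\delta$ being uniformly bounded, the Theorem above), this term is $O(t^{-1/2})$ uniformly on $L$.

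\emph{Second term.} Here I need
\begin{equation*}
\bigl|e^{-2(\chi(k_0+k\sqrt{i/(4t)})-\chi(k_0))}-1\bigr|\,e^{-|k|^2/2}\lesssim \frac{\log t}{\sqrt t}.
\end{equation*}
This is the main obstacle, because $\chi$ has a logarithmic singularity at the endpoint $k_0$. To handle it, write $\log(1+\gamma_1\gamma_2)(\xi)=\log(1+\gamma_1\gamma_2)(k_0)+[\,\cdot\,]$. The constant part produces exactly the $(k-k_0)^{-i\nu}$ factor already separated off in $\delta$. The remainder $f(\xi)=\log\frac{1+\gamma_1\gamma_2(\xi)}{1+\gamma_1\gamma_2(k_0)}$ vanishes at $k_0$ and is Lipschitz.

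For $\zeta=k_0+\epsilon$ with $\epsilon=k\sqrt{i/(4t)}$ on the ray, I would estimate
\begin{equation*}
\chi(\zeta)-\chi(k_0)=\frac{\epsilon}{2\pi i}\int_{-\infty}^{k_0}\frac{f(\xi)\,\mathrm d\xi}{(\xi-\zeta)(\xi-k_0)}
\end{equation*}
by splitting the integral at $|\xi-k_0|=1$:
\begin{itemize}
\item For $|\xi-k_0|\le1$, the bound $|f(\xi)|\lesssim|\xi-k_0|$ together with $|\xi-\zeta|\gtrsim|\xi-k_0|+|\epsilon|$ (valid since the ray makes angle $\pi/4$ with $\mathbb R$) gives a contribution $\lesssim|\epsilon|\log(1+1/|\epsilon|)$.
\item The far part contributes $\lesssim|\epsilon|$.
\end{itemize}
Hence $|\chi(\zeta)-\chi(k_0)|\lesssim |\epsilon|(1+|\log|\epsilon||)$. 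I would handle the two ranges of $k$ separately:
\begin{itemize}
\item For $|k|\le \sqrt t$, we have $|\epsilon|\lesssim 1$ and $|\epsilon|\gtrsim |k|/\sqrt t$. Using $|e^z-1|\le|z|e^{|z|}$ and the Gaussian to absorb the polynomial factor $|k|(1+\log|k|)$, this gives the bound $\lesssim \log t/\sqrt t$.
\item For $|k|>\sqrt t$, the Gaussian $e^{-|k|^2/2}\le e^{-t/2}$, and boundedness of $\chi$ suffices.
\end{itemize}

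Finally, I would check that the orientation of each ray matches the choice of $k_0+$ or $k_0-$ (real part to the right or left of $k_0$). Adding the two terms then gives the stated $\log t/\sqrt t$ bound.
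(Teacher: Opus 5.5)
Your proposal is correct and follows the paper's argument, which is the standard Deift--Zhou scheme. You use the same splitting into $[N(R)-R(k_0\pm)](\delta^1)^{-2}$ plus $R(k_0\pm)\bigl[(\delta^1)^{-2}-k^{-2i\nu}e^{\frac{ik^2}{2}}\bigr]$, the same Lipschitz-times-Gaussian bound $O(t^{-1/2})$ for the first piece, and the same reduction of the second piece via $|e^{z}-1|\lesssim|z|$ to the logarithmic endpoint behaviour of $\chi(k_0+k\sqrt{i/(4t)})-\chi(k_0)$, which is the source of the $\log t$. The only variation is in that last estimate: you use the kernel $\epsilon/((\xi-\zeta)(\xi-k_0))$ and the Lipschitz vanishing of $\log\frac{1+\gamma_1\gamma_2(\xi)}{1+\gamma_1\gamma_2(k_0)}$ at $k_0$, whereas the paper integrates by parts and evaluates $\int_0^1\log\frac{\epsilon+s}{s}\,\mathrm{d}s$ explicitly, so your version needs one fewer derivative of the data (and, in both versions, the factor $(\delta^1)^{-2}$ is really controlled by a bound on $\delta^{-1}$ rather than on $\delta$, which holds because $\log(1+\gamma_1\gamma_2)$ is bounded).
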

\begin{proof}
For $k \in \Sigma_4,$ let $\beta$ be a fixed constant satisfying $0<2\beta <1.$ Then we obtain
\begin{equation}
\begin{aligned}
&\left|[R]( \sqrt{i/4t}k + k_0 ) (\delta^1)^{-2} -\gamma_1(k_0) k^{-2i \nu} e^{\frac{ik^2}{2}} \right| \\ 
= & \left|[R]( \sqrt{i/4t}k + k_0 ) k^{-2i \nu} e^{-2 \left [
\chi(\sqrt{i/4t} k + k_0)-\chi(k_0)\right ] } e^{\frac{i k^2}{2}} -\gamma_1(k_0) k^{-2i \nu} e^{\frac{ik^2}{2}} \right| \\ 
= &  \left| e^{\frac{i \beta k^2}{2} }\left([R]( \sqrt{i/4t}k + k_0 ) - \gamma_1(k_0)\right) k^{-2i\nu} e^{-2 \left [
\chi(\sqrt{i/4t} k + k_0)-\chi(k_0)\right ] } e^{\frac{i k^2(1-2\beta)}{2}} \right. \\ 
& \left.  +  e^{ \frac{i \beta k^2}{2} } \left( e^{-2 \left [
\chi(\sqrt{i/4t} k + k_0)-\chi(k_0)\right ] }-1 \right) \gamma_1(k_0)   k^{-2i\nu} e^{\frac{i k^2(1-2\beta)}{2}} \right|.
\end{aligned}
\end{equation}
For $k \in \Sigma_4 $ with $ k=u e^{-\frac{ \pi i}{4}}$, we have $\re (i k^2)=  u^2 \ge 0.$ Thus,
\begin{equation}
|e^{- \frac{i \beta k^2}{2} }| \le 1, \quad |e^{\frac{i k^2(2\beta-1)}{2}}| \le 1.
\end{equation}
Recall that $R(k)$ admits an explicit rational representation,
\begin{equation}
R(k)=\frac{\sum_{j=0}^{m}\mu_j(k-k_0)^j}{(k-i)^{m+5}}. 
\end{equation}
We arrive at the following estimates:
\begin{equation}
\begin{aligned}
& |R(k_0)|= O(|k-i|^{-m-5}) \le c, \quad |[R](\sqrt{i/4t} k +k_0)|=O(|k-i|^{j-m-5}) \le c, \\ 
& \nu = \frac{1}{2 \pi} \log (1+\gamma_1(k_0) \gamma_2(k_0)) \le \frac{1}{2 \pi} \log (1+ \| \gamma_1(k) \gamma_2(k)\|) < \infty.
\end{aligned}
\end{equation}
Therefore,
\begin{equation}
\begin{aligned}
& |k^{2 i \nu}| =|e^{2i \nu (\log|k|-\frac{\pi i}{4})}|=e^{\frac{ \pi \nu}{2}} \le c , \\  
& N \delta = (\sqrt{4t/i})^{-i \nu/2} k^{i \nu} e^{\chi(k_0)}e^{\chi(\sqrt{i/4t} k + k_0) - \chi(k_0)}.
\end{aligned}
\end{equation}
It follows that
\begin{equation}
e^{2 \left[\chi(\sqrt{i/4t} k + k_0) - \chi(k_0) \right] }= 
(N \delta)^{2} k^{-2i \nu} (\sqrt{4t/i})^{i \nu}e^{-2 \chi(k_0)}.
\end{equation}
Since $|(\sqrt{4t/i})^{i \nu} e^{-2 \chi(k_0)} |=1$, $(N \delta)^{2}$ is bounded, and $|k^{-2i \nu}|=e^{-\frac{\pi \nu}{2}}<1$, we conclude
\begin{equation}
\left|e^{2 \left[\chi(\sqrt{i/4t} k + k_0) - \chi(k_0) \right] } \right| \le c.
\end{equation}
Furthermore, we obtain the estimate
\begin{equation}
\begin{aligned}
& \left| e^{- \frac{i \beta k^2}{2} }\left([R]( \sqrt{i/4t}k + k_0 ) - \gamma_1(k_0)\right) \right| \\ 
\le & |k| e^ {-\frac{\beta \re(i k^2)}{2}} \| [R]^{\prime} \|_{\mathscr{L}^{\infty}} \sqrt{i/4t} \le ct^{-1/2},
\end{aligned}
\end{equation}
and
\begin{equation}
\begin{aligned}
& \left| e^{ -\frac{i \beta k^2}{2} } \left( e^{2 \left [
\chi(\sqrt{i/4t} k + k_0)-\chi(k_0)\right ] }-1 \right) \right| \\ 
= & |e^{ -\frac{i \beta k^2}{2} }| \left| 2 \int_0 ^1 e^{2s \left[\chi(\sqrt{i/4t} k + k_0) - \chi(k_0)  \right]} \mathrm{d} s \left[\chi(\sqrt{i/4t} k + k_0) - \chi(k_0) \right]\right| \\ 
\le & \sup_{0 \le s \le 1} \left| e^{2s \left[ \chi(\sqrt{i/4t} k + k_0) - \chi(k_0)  \right]} \right| \left| 2e^{-\frac{i \beta k^2}{2}} \left[ \chi(\sqrt{i/4t} k + k_0) - \chi(k_0)   \right] \right| \\ 
\le & c \left| e^{ -\frac{i \beta k^2}{2} } \left[ \chi(\sqrt{i/4t} k + k_0) - \chi(k_0)   \right]\right|,
\end{aligned}
\end{equation}
for $k \in \Sigma_4,$
\begin{equation}\label{the second part}
\begin{aligned}
& \left| e^{ -\frac{i \beta k^2}{2} } \left[ \chi(\sqrt{i/4t} k + k_0) - \chi(k_0)   \right] \right| \\ 
= & \left| e^{ -\frac{i \beta k^2}{2} }  \frac{1}{2 \pi} \int^{k_0} _{-\infty} \left[ \log(\sqrt{i/4t} k + k_0 -s) - \log(k_0-s) \right]\mathrm{d} \log(1+\gamma_1 \gamma_2)\right| \\ 
=& \left| \frac{e^{ -\frac{i \beta k^2}{2} }}{2 \pi} \int^{k_0} _{-\infty} \log \frac{\sqrt{i/4t} k + k_0 -s}{k_0-s}\mathrm{d} \log(1+\gamma_1 \gamma_2)\right| \\ 
=& \left| \frac{e^{ -\frac{i \beta k^2}{2} }}{2 \pi} \int^{\infty} _{0} \log \frac{\sqrt{i/4t} k + s}{s}\mathrm{d} \log(1+\gamma_1 (k_0 -s) \gamma_2 (k_0 -s))\right| \\  
\le & \left| \frac{e^{ -\frac{i \beta k^2}{2} }}{2 \pi} \int^{\infty} _{1} \log\left(1+\frac{k}{\sqrt{4t/i}s}\right) g(s)\mathrm{d} s\right| + 
\left| \frac{e^{ -\frac{i \beta k^2}{2} }}{2 \pi} g(0) \int^{1} _{0} \log \frac{\sqrt{i/4t} k + s}{s}\mathrm{d} s\right| \\ 
& + \left| \frac{e^{ -\frac{i \beta k^2}{2} }}{2 \pi} \int^{1} _{0} \log\left(1+\frac{k}{\sqrt{4t/i}s}\right) [g(s)-g(0)]\mathrm{d} s\right| ,
\end{aligned}
\end{equation}
where 
\begin{equation}
g(s)=\partial_s \log (1+\gamma_1(k_0 -s) \gamma_2 (k_0 -s)) .
\end{equation}
We note the inequality
\begin{equation}
\left| \log(1+m) \right| \le |m|, \quad \re m \ge 0,
\end{equation}
which implies
\begin{equation}\label{inequality 1}
\left| \log \left( 1+\frac{k}{\sqrt{4t/i}s} \right) \right| \le \left| \frac{k}{\sqrt{4t/i}s} \right|.
\end{equation}
Using \eqref{inequality 1}, the sum of the first and third terms of \eqref{the second part} can be estimated as follows:
\begin{equation}\label{part-13}
\begin{aligned}
 &\left| \frac{e^{ -\frac{i \beta k^2}{2} }}{2 \pi} \int^{\infty} _{1} \log\left(1+\frac{k}{\sqrt{4t/i}s}\right) g(s)\mathrm{d} s\right| + \left| \frac{e^{ -\frac{i \beta k^2}{2} }}{2 \pi} \int^{1} _{0} \log\left(1+\frac{k}{\sqrt{4t/i}s}\right) [g(s)-g(0)]\mathrm{d} s\right|  \\ 
 \le & \left| \frac{e^{ -\frac{i \beta k^2}{2} }}{2 \pi} \right| \int^{\infty} _{1} \left|\frac{k}{\sqrt{4t/i}s}\right| |g(s)|\mathrm{d} s  
 + \left| \frac{e^{ -\frac{i \beta k^2}{2} }}{2 \pi} \right| \int^{1} _{0} \left|\frac{k}{\sqrt{4t/i}}\right| \left| \frac{g(s)-g(0)}{s} \right|\mathrm{d} s \\ 
 \le & \left| \frac{k e^{ -\frac{i \beta k^2}{2} }}{2 \pi} 
 \left| \frac{1}{\sqrt{4t/i}} \right|\int^{\infty}_1 \frac{g}{s} \mathrm{d} s  \right|  
 + \left| \frac{k e^{ -\frac{i \beta k^2}{2} }}{2 \pi} \right| \frac{1}{\sqrt{4t/i}}
 \int_0^1 \| g^{\prime}(s) \|_{\mathscr{L}^{\infty}} \mathrm{d} s \\ 
 \le & c t^{-1/2}.
\end{aligned}
\end{equation}
Next, we estimate the second term in  \eqref{the second part}:
\begin{equation}
\begin{aligned}
&\int^{1} _{0} \log \frac{\sqrt{i/4t} k + s}{s}\mathrm{d} s \\ 
=& \int^{1} _{0} \left[\log (\sqrt{i/4t} k + s)- \log s\right] \mathrm{d} s \\ 
=& \left \{  (\sqrt{i/4t} k + s) \log (\sqrt{i/4t} k + s)-(\sqrt{i/4t} k + s) \right \} \bigg| _0^1 - \left\{ s \log s -s \right\} \bigg| _0^1  \\ 
=& (\sqrt{i/4t} k + 1) \log (\sqrt{i/4t} k + 1) - \sqrt{i/4t} k \log \sqrt{i/4t} k.
\end{aligned}
\end{equation}
As a result, for $k \in \Sigma_4,$
\begin{equation}\label{part-2}
\begin{aligned}
& \left| e^{-\frac{i \beta k^2}{2}} \int^1_0 \log \frac{\sqrt{i/4t} k + s}{s}\mathrm{d} s \right| \\ 
\le & \left| \sqrt{i/4t} \right| k e^{-\frac{i \beta k^2}{2}} \left| (1+ \sqrt{i/4t}k) \right| + \left|k e^{-\frac{i \beta k^2}{2}} \right| \left| \frac{\log \sqrt{i/4t}}{\sqrt{4t/i}} \right| + \left| \frac{e^{-\frac{i \beta k^2}{2}} k \log k  }{\sqrt{4t/i}} \right|  \\ 
\le & c t^{-1/2} (1+ \log t) \le ct^{-1/2} \log t, \quad t \ge 2.
\end{aligned}
\end{equation}
Combining \eqref{part-13} and \eqref{part-2}, we obtain
\begin{equation}
\| [R_1]( \sqrt{i/4t}k + k_0 ) (\delta^1)^{-2} -\gamma_1(k_0) k^{-2i \nu} e^{\frac{ik^2}{2}}  \|_{\mathscr{L}^1 \cap \mathscr{L}^2 \cap \mathscr{L}^{\infty}(\Sigma_3)} \le c t^{-1/2} \log t.
\end{equation}
Analogous estimates hold on $\Sigma_1, \Sigma_2$ and $\Sigma_3$.
\end{proof}
Until now, we can structure $J^{A}(y,t,k)$ with controlled error terms. Let
$J^{A}= (I-\omega^A_{-})^{-1} (I+\omega^A_{+})$, where
\begin{equation}
\begin{aligned}
    \omega^A= \omega^A_{-}= \left\{\begin{aligned}
    &	\begin{pmatrix}
    	0     &
    	\gamma_2(k_0)(\delta^0)^2k^{2i \nu} e^{-\frac{i k^2}{2}} \\
    	0    &  0 \\
        \end{pmatrix}, \quad & k \in \Sigma_1,   \\
    &\begin{pmatrix} 0   &
    	-\frac{ \gamma_2(k_0)}{1+\gamma_1(k_0)\gamma_2(k_0)} (\delta^0)^2k^{2i \nu} e^{-\frac{i k^2}{2}} \\ 0 &  0 \\
    \end{pmatrix} , \quad  & k \in \Sigma_3, \\
    \end{aligned}\right.
\end{aligned}
\end{equation}
\begin{equation}
\begin{aligned}
    \omega^A= \omega^A_{+}= \left\{\begin{aligned}
    &	\begin{pmatrix}
    	0     &
    	0\\
    	-\frac{\gamma_1(k_0)}{1+\gamma_1(k_0)\gamma_2(k_0)}(\delta^0)^{-2}k^{-2i \nu} e^{\frac{i k^2}{2}}    &  0 \\
        \end{pmatrix},  \quad & k \in \Sigma_2,   \\
   &\begin{pmatrix} 0   &
    	0\\ \gamma_1(k_0)(\delta^0)^{-2} k^{-2i \nu} e^{\frac{i k^2}{2}}&  0 \\
    \end{pmatrix} ,   \quad & k \in \Sigma_4. \\
    \end{aligned}\right.
\end{aligned}
\end{equation}

It follows from Lemma \ref{N lesssim} and (3.78) in \cite{DeiftZ} that
\begin{equation}
	\left \| \frac{\omega^{(4)}-\omega^{A}}{(N k)^j}  \right \| _{\mathscr{L}^1(\Sigma_A) \cap \mathscr{L}^2(\Sigma_A) \cap \mathscr{L}^{\infty}(\Sigma_A)}  \lesssim \frac{\log t}{\sqrt{t}}, \quad  j=1,2.
\end{equation}
\begin{lemma}\label{Lemma C w}
As $t \to \infty$, for $j=1,2,$
\begin{equation}\label{C decompose}
	\begin{aligned} 
	& \int_{\Sigma^{(3)}} \frac{((1-C_{\omega^{(3)}})^{-1}I)(\xi) \omega^{(3)}(\xi)}{\xi ^j} \mathrm{d} \xi \\
	= & \sqrt{\frac{i}{4t}}  \int_{\Sigma^{(4)}}\frac{\left((1-C_{\omega^{A}})^{-1}I\right)(\xi) \omega^{A}(\xi)}{(N \xi ) ^j} \mathrm{d} \xi  + O\left(\frac{ \log t}{t}\right). \end{aligned}
\end{equation}
\end{lemma}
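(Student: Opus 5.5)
The plan follows the standard Deift--Zhou comparison argument (cf. Lemma 3.35 and (3.78)--(3.80) in \cite{DeiftZ}), in three steps: change variables by the scaling $N$, replace $\omega^{(4)}$ by $\omega^A$ using the second resolvent identity, and bound each error term with Lemma \ref{N lesssim}.

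\textbf{Step 1: rescale.} Substitute $\xi=k_0+\sqrt{i/(4t)}\,\zeta$ in the left-hand integral. By \eqref{C wA VT} we have $C_{\omega^{(3)}}=N^{-1}C_{\omega^{(4)}}N$, and $NI=I$, so
\[
\big((1-C_{\omega^{(3)}})^{-1}I\big)(\xi)=\big((1-C_{\omega^{(4)}})^{-1}I\big)(\zeta).
\]
Since $\mathrm{d}\xi=\sqrt{i/(4t)}\,\mathrm{d}\zeta$, the left-hand side equals
\[
\sqrt{\tfrac{i}{4t}}\int_{\Sigma^{(4)}}\frac{\big((1-C_{\omega^{(4)}})^{-1}I\big)(\zeta)\,\omega^{(4)}(\zeta)}{(N\zeta)^j}\,\mathrm{d}\zeta ,
\qquad N\zeta:=k_0+\sqrt{i/(4t)}\,\zeta .
\]
Because of the prefactor $\sqrt{i/(4t)}$, the claim reduces to showing that replacing $\omega^{(4)}$ by $\omega^A$ in the rescaled integral costs only $O(\log t/\sqrt t)$. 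Multiplying by $t^{-1/2}$ then gives the stated $O(\log t/t)$.

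\textbf{Step 2: split the difference.} Write $\Delta=\omega^{(4)}-\omega^A$ and $\mu^{(4)}=(1-C_{\omega^{(4)}})^{-1}I$, $\mu^A=(1-C_{\omega^A})^{-1}I$. Then
\[
\mu^{(4)}\omega^{(4)}-\mu^A\omega^A=\mu^{(4)}\Delta+(\mu^{(4)}-\mu^A)\,\omega^A ,
\qquad
\mu^{(4)}-\mu^A=(1-C_{\omega^{(4)}})^{-1}C_{\Delta}\,\mu^A
\]
by the second resolvent identity. Write $\mu^A=I+(1-C_{\omega^A})^{-1}C_{\omega^A}I$, so that only $L^2$ objects are involved apart from the constant $I$.

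\textbf{Step 3: estimate the pieces.} The needed ingredients are as follows.
\begin{itemize}
\item[(a)] $\|\Delta/(N\zeta)^j\|_{L^1\cap L^2\cap L^\infty}\lesssim \log t/\sqrt t$. This is the estimate stated right after Lemma \ref{N lesssim}.
\item[(b)] $\|(1-C_{\omega^A})^{-1}\|_{L^2\to L^2}\lesssim 1$. The jump $\omega^A$ is the explicit parabolic-cylinder model with constant coefficients $\gamma_{1,2}(k_0)$, and $|\delta^0|$ enters only through a conjugation by a constant diagonal matrix. Boundedness then follows from solvability of the model problem, as in Proposition \ref{CR}.
\item[(c)] $\|(1-C_{\omega^{(4)}})^{-1}\|\lesssim 1$. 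This follows from (b), because $\|C_{\omega^{(4)}}-C_{\omega^A}\|\lesssim\|\Delta\|_{L^\infty}\to0$.
\item[(d)] $\omega^A/(N\zeta)^j\in L^1\cap L^2$ uniformly. The Gaussian factor $e^{\pm i\zeta^2/2}$ decays on $\Sigma$ and $|N\zeta|\ge c|k_0|>0$ on the rays, using $k_0\neq0$ in the region considered.
\end{itemize}

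With these, the identity term gives $\int |\Delta|/|N\zeta|^j\lesssim\|\Delta/(N\zeta)^j\|_{L^1}$. The $L^2$ terms are handled by Cauchy--Schwarz, for example
\[
\big|\textstyle\int (\mu^{(4)}-I)\Delta/(N\zeta)^j\big|\le \|\mu^{(4)}-I\|_{L^2}\,\|\Delta/(N\zeta)^j\|_{L^2},
\]
together with the $L^2$ boundedness of $C_\pm$. The term involving $\mu^{(4)}-\mu^A$ is controlled by $\|C_\Delta I\|_{L^2}+\|C_\Delta\|_{L^2\to L^2}\|\mu^A-I\|_{L^2}\lesssim\|\Delta\|_{L^2\cap L^\infty}$, times $\|\omega^A/(N\zeta)^j\|_{L^2}$. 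Every contribution is therefore $O(\log t/\sqrt t)$.

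\textbf{Main obstacle.} The hardest point is the uniformity: the bound (b) on $(1-C_{\omega^A})^{-1}$, and the control of the weights $1/(N\zeta)^j$. These weights must stay bounded, which requires $k_0$ to be bounded away from $0$; otherwise one would need to exploit the vanishing $\gamma_1(k)=O(k^3)$ at the origin. For (b), the first thing to try is to invoke the explicit parabolic-cylinder solution together with the uniform resolvent bounds of \cite{DeiftZ} (Proposition 2.23, Corollary 2.25), exactly as in Proposition \ref{CR}. The estimates are then uniform for $k_0$ in compact subsets of $\{k_0\neq0\}$.
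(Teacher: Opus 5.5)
Your proposal is correct and is essentially the paper's argument. The paper likewise expands the difference of the rescaled integrands with the resolvent identity and bounds it by $O(\log t/\sqrt{t})$, using Lemma \ref{N lesssim} and the weighted $\mathscr{L}^1\cap\mathscr{L}^2\cap\mathscr{L}^\infty$ estimate for $(\omega^{(4)}-\omega^{A})/(Nk)^j$; it then changes variables $\xi=k_0+\sqrt{i/(4t)}\,\zeta$ to obtain the factor $\sqrt{i/(4t)}$. You simply rescale first, and you make explicit the uniform resolvent bounds and the lower bound $|N\zeta|\ge |k_0|/\sqrt{2}$ on the rays, both of which the paper uses only implicitly (``similar to Lemma \ref{N lesssim}'').
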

\begin{proof}
Note that
\begin{equation*}
	\begin{aligned}
	&	\frac{\left((1-C_{\omega^{(4)}})^{-1}I\right)\omega^{(4)}}{(N k)^j}-\frac{\left((1-C_{\omega^{A}})^{-1}I\right)\omega^{A}}{(N k)^j}\\
	=& \frac{(1-C_{\omega^{A}})^{-1}(C_{\omega^{(4)}}-C_{\omega^{A}})I \omega^{A}} {(N k)^j} +
	\frac{\left[	(1-C_{\omega^{A}})^{-1}(C_{\omega^{(4)}}-C_{\omega^{A}})(1-C_{\omega^{A}})^{-1}C_{\omega^{A}}\right] I \omega^{(4)}} {(N k)^j} \\
	& + \frac{\omega^{(4)}-\omega^{A}}{(N k)^j} + (1-C_{\omega^{(4)}})^{-1} C_{\omega^{(4)}} I  \frac{\omega^{(4)}-\omega^{A}}{(N k)^j}.
	\end{aligned}
\end{equation*}
Similar to Lemma \ref{N lesssim}, we obtain
\begin{equation}
	\int_{\Sigma^{(4)}} \frac{\left((1-C_{\omega^{(4)}})^{-1}I\right)(\xi) \omega^{(4)}(\xi)}{(N \xi)^j} \mathrm{d} \xi
	= \int_{\Sigma^{(4)}} \frac{\left((1-C_{\omega^{A}})^{-1}I\right)(\xi) \omega^{A}(\xi)}{(N \xi)^j} \mathrm{d} \xi+ O\left(\frac{ \log t}{\sqrt{t}}\right).
\end{equation}
By a simple change of variable for $j = 1, 2,$ we get
\begin{equation}
	\begin{aligned}
	&	\int_{\Sigma^{(3)}}\frac{((1-C_{\omega^{(3)}})^{-1}I)(\xi) \omega^{(3)}(\xi)}{\xi^j} \mathrm{d} \xi \\
	=& \int_{\Sigma^{(3)}}\frac{N^{-1}(1-C_{\omega^{(4)}})^{-1}N I(\xi)\omega^{(4)}(\xi)}{\xi^j} \mathrm{d} \xi \\
	=& \int_{\Sigma^{(3)}} \frac{(1-C_{\omega^{(4)}})^{-1}I\left((\xi-k_0) \sqrt{\frac{4t}{i}}\right) (N  \omega^{(4)})\left((\xi-k_0) \sqrt{\frac{4t}{i}}\right)}{\xi ^j} \mathrm{d} \xi \\
	=&  \sqrt{\frac{i}{4t}} \int_{\Sigma^{(4)}} \frac{\left((1-C_{\omega^{(4)}})^{-1}I\right)(\xi) \omega^{(4)}(\xi) }{(N\xi)^j}  \mathrm{d} \xi \\
	=&  \sqrt{\frac{i}{4t}} \int_{\Sigma^{(4)}} \frac{((1-C_{\omega^{A}})^{-1}I)(\xi) \omega^{A}(\xi) }{(N\xi)^j}  \mathrm{d} \xi  + O\left(\frac{ \log t}{t}\right).
	\end{aligned}
\end{equation}
\end{proof}

Next, we construct a RH problem based on \eqref{C decompose}. For $k \in \mathbb{C} \setminus \Sigma^{(4)}$, set
\begin{equation}
	M^{A}(y,t,k)=I+ \frac{1}{2 \pi i} \int_{\Sigma^{(4)}} \frac{((1-C_{\omega^{A}})^{-1}I)(\xi)\omega^{A}(\xi )}{\xi -k} \mathrm{d} \xi .
\end{equation}
Then $	M^{A}(y,t,k) $ is the solution of the RH problem
\begin{equation}
	\begin{aligned}\label{RHP M A0}
	\left\{\begin{array}{ll}  	M_+^{A}(y,t,k)= M_-^{A}(y,t,k) J^{A}(y,t,k), & \quad k \in \Sigma^{(4)}, \\
	M^{A}(y,t,k) \to I, & \quad k \to \infty. \end{array}\right.
	\end{aligned}
\end{equation}
Specifically,
\begin{equation}\label{M A0 expansion}
	M^{A}(y,t,k) =I+ \frac{	M_1^{A}(y,t)}{k} + O(k^{-2}),  \quad  k \to \infty.
\end{equation}
Combining the above results, a direct computation yields:
\begin{equation}\label{M A0 + O}
	\begin{aligned}
	&\sqrt{\frac{i}{4t}} \int_{\Sigma^{(4)}} \frac{\left((1-C_{\omega^{A}})^{-1}I\right)(\xi)\omega^{A}(\xi )}{ N \xi}
	\frac{\mathrm{d} \xi }{2 \pi i} \\
	=&  \int_{\Sigma^{(4)}} \frac{\left((1-C_{\omega^{A}})^{-1}I\right)(\xi)\omega^{A}(\xi )}{ \xi+ k_0 \sqrt{\frac{4t}{i}}}
	\frac{\mathrm{d} \xi }{2 \pi i} \\
	=& M^{A}(-k_0 \sqrt{\frac{4t}{i}})-I \\
	=&- \frac{ M_1^{A}(y,t)}{k_0\sqrt{\frac{4t}{i}}} +O(t^{-1}).
	\end{aligned}
\end{equation}
Furthermore, we have
\begin{equation}\label{-M A0 + O}
	\begin{aligned}
	&\sqrt{\frac{i}{4t}} \int_{\Sigma^{(4)}} \frac{\left((1-C_{\omega^{A}})^{-1}I\right)(\xi)\omega^{A}(\xi )}{( N \xi)^2} \frac{\mathrm{d} \xi }{2 \pi i} \\
	=& \sqrt{\frac{4t}{i}}  \int_{\Sigma^{(4)}} \frac{\left((1-C_{\omega^{A}})^{-1}I\right)(\xi)\omega^{A}(\xi )}{ (\xi  +k_0 \sqrt{\frac{4t}{i}})^2}
	\frac{\mathrm{d} \xi }{2 \pi i} \\
	=&  - \sqrt{\frac{4t}{i}}   \frac{\mathrm{d}{M^{A}(k)}}{\mathrm{d} k} \bigg|_{k = k_0 \sqrt{\frac{4t}{i}}}  \\
	=&  \frac{ M_1^{A}(y,t)}{k_0^2\sqrt{\frac{4t}{i}}} +O(t^{-1}).
	\end{aligned}
\end{equation}
From \eqref{tilde u4}, \eqref{tilde v4}, \eqref{M A0 + O}, \eqref{-M A0 + O} and Lemma \ref{Lemma C w}, we deduce
\begin{equation}\label{tilde u6}
\begin{aligned}
	\tilde{u}(y,t)=& \left[I - \frac{M_1^{A}(y,t)}{k_0\sqrt{\frac{4t}{i}}} +O\left(\frac{\log t}{t}\right)	\right]_{22}  \times  \left[ \frac{M_1^{A}(y,t)}{k_0^2\sqrt{\frac{4t}{i} }} + O\left(\frac{\log t}{t}\right)			\right]_{12} \\
	&-\left[I- \frac{M_1^{A}(y,t)}{k_0\sqrt{\frac{4t}{i}}} +O\left(\frac{\log t}{t}\right)	\right]_{12}  \times  \left[\frac{M_1^{A}(y,t)}{k_0^2\sqrt{\frac{4t}{i} }}  + O\left(\frac{\log t}{t}\right)			\right]_{22} \\
	& = \frac{\left[M_1^{A}(y,t)\right]_{12}}{k_0^2\sqrt{\frac{4t}{i} }} + O\left(\frac{\log t}{t}\right)	,
\end{aligned}
\end{equation}

\begin{equation}
	\begin{aligned}\label{tilde v6}
\tilde{v}(y,t)	=& -\left[ I - \frac{M_1^{A}(y,t)}{k_0\sqrt{\frac{4t}{i}}} +O\left(\frac{\log t}{t}\right) \right]_{11} \left[ \frac{M_1^{A}(y,t)}{k_0^2\sqrt{\frac{4t}{i} }} + O\left(\frac{\log t}{t} \right) \right]_{21} \\
	&+\left[ I - \frac{M_1^{A}(y,t)}{k_0\sqrt{\frac{4t}{i}}} +O\left(\frac{\log t}{t}\right) \right]_{21} \left[\frac{M_1^{A}(y,t)}{k_0^2\sqrt{\frac{4t}{i} }} + O\left(\frac{\log t}{t} \right)\right]_{11} \\ 
	&=- \frac{\left[M_1^{A}(y,t)\right]_{21}}{k_0^2\sqrt{\frac{4t}{i} }} + O\left(\frac{\log t}{t}\right).
	\end{aligned}
\end{equation}
\subsection{Solving the model problem}
To obtain $(M_1^{A})_{12}$ explicitly from \eqref{tilde u6}, we consider the following transformation:
\begin{equation}
	\phi(y,t,k)=(\delta^0)^{-\sigma_3} M^{A} (y,t,k) (\delta^0)^{\sigma_3} k^{-i \nu \sigma_3}e^{\frac{1}{4}i k^2 \sigma_3},
\end{equation}
which implies
\begin{equation}
	\phi_+(k)= \phi_-(k) v(k_0), \quad  v(k_0)=e^{-\frac{1}{4}i k ^2 \hat{\sigma}_3} k^{i \nu \hat{\sigma}_3} (\delta^0)^{-\hat{\sigma}_3} J^{A}(y,t,k).
\end{equation}
 Since the jump matrix is constant along each ray $ \Sigma_1, \Sigma_2, \Sigma_3$ and $ \Sigma_4,$ we have
\begin{equation}
	\frac{\mathrm{d} \phi_+(k)}{\mathrm{d} k}-\frac{i}{2} k \sigma_3 \phi_+(k)
	=\left(\frac{\mathrm{d} \phi_- (k)}{\mathrm{d} k} - \frac{i}{2} k \sigma_3 \phi_- (k)\right) v(k_0).
\end{equation}
It follows that $ \left(\frac{\mathrm{d} \phi }{\mathrm{d} k} - \frac{i}{2} k \sigma_3 \phi \right) \phi^{-1}$ has no jump discontinuity on $\Sigma$. Additionally, from the relation between $\phi(k)$ and $M^{A}(k)$, we have
\begin{equation}
	\left(\frac{\mathrm{d} \phi (k)}{\mathrm{d} k} - \frac{i}{2} k \sigma_3 \phi (k)\right)
	\phi^{-1}(k) = O(k^{-1}) + \frac{i}{2} (\delta^0)^{-\sigma_3} [ M_1^{A},\sigma_3] (\delta^0)^{\sigma_3}.
\end{equation}
By Liouville's Theorem, we conclude that
\begin{equation}\label{beta phi}
	\frac{\mathrm{d} \phi (k)}{\mathrm{d} k} -\frac{i}{2} k \sigma_3 \phi(k)= \beta \phi(k),
\end{equation}
where
\begin{equation}
	\beta=\frac{i}{2} (\delta^0)^{-\sigma_3} [ M_1^{A},\sigma_3] (\delta^0)^{\sigma_3}.
\end{equation}
Particularly,
\begin{equation}\label{beta 12}
	\beta_{12}=- i (M_1^{A})_{12} (\delta^0)^{-2}, \quad \beta_{21}= i (M_1^{A})_{21} (\delta^0)^2.
\end{equation}
From \eqref{beta phi} and its differential, we obtain
\begin{align}
	&	\frac{\mathrm{d}^2 \phi_{11}(k)}{\mathrm{d} k^2} +\left(-\frac{i}{2}+\frac{1}{4}k^2- \beta_{12} \beta_{21}\right) \phi_{11}(k)=0, \label{phi 11} \\
	&   \beta_{12} \phi_{21} (k)=  \frac{\mathrm{d} \phi_{11}(k)}{\mathrm{d} k} -\frac{i}{2} k \phi_{11}(k), \label{beta12 phi12}\\
	&	\frac{\mathrm{d}^2 \phi_{22}(k)}{\mathrm{d} k^2} +\left(\frac{i}{2}+\frac{1}{4}k^2- \beta_{12} \beta_{21}\right) \phi_{22}(k)=0, \label{phi 22} \\
	&   \beta_{21} \phi_{12} (k)=  \frac{\mathrm{d} \phi_{22}(k)}{\mathrm{d} k} +\frac{i}{2} k \phi_{22}(k).
	\end{align}
Set $ \phi_{11}(k)=g(e^{-\frac{ \pi i }{4}} k)$  and $\phi_{22}(k)=g(e^{-\frac{3 \pi i }{4}} k) $. Then \eqref{phi 11} and \eqref{phi 22} turn to the Weber’s equation
\begin{equation}\label{Weber}
	\frac{\mathrm{d}^2 g(\zeta )}{\mathrm{d} \zeta ^2}+\left(a+\frac{1}{2}-\frac{\zeta^2}{4}\right) g(\zeta )=0.
\end{equation}
As is well-known, \eqref{Weber} is a second-order ordinary differential equation, which has two linearly independent solutions
$ D_a(\zeta)$ and $D_a(-\zeta)$. Thus, there are constants $c_1$ and $c_2$ such that
\begin{equation*}
	g(\zeta)=c_1D_a(\zeta)+c_2D_{a}(-\zeta),
\end{equation*}
where $D_a(\cdot) $ represents the standard parabolic-cylinder function and satisfies
\begin{align}
	& \frac{\mathrm{d} D_a(\zeta)}{\mathrm{d} \zeta} + \frac{\zeta}{2}  D_a(\zeta) -a  D_{a-1}(\zeta)=0, \label{D a-1}\\
	& D_a(\pm \zeta)= \frac{\Gamma(a+1)e^{\frac{a \pi i}{2}}}{\sqrt{2 \pi }} D_{-a-1}(\pm i\zeta)+
	\frac{\Gamma(a+1)e^{-\frac{a \pi i}{2}}}{\sqrt{2 \pi }} D_{-a-1}(\mp i\zeta). \label{Da pm}
\end{align}
As $\zeta \to \infty, $ we have \cite{ModelRHP}
\begin{equation}\label{Da}
	D_a(\zeta)= \left\{\begin{array}{ll} \zeta^a e^{-\frac{\zeta^2}{4}} (1+O(\zeta^{-2})),  & |\arg \zeta| < \frac{3 \pi }{4},	\\
	\zeta^a e^{-\frac{\zeta^2}{4}} (1+O(\zeta^{-2})) - \frac{\sqrt{2 \pi }}{\Gamma (-a)} e^{a \pi i+ \frac{\zeta^2}{4}}
	\zeta ^{-a-1} (1+O(\zeta^{-2})),   & \frac{ \pi }{4}< \arg \zeta < \frac{5 \pi }{4}, \\
	\zeta^a e^{-\frac{\zeta^2}{4}} (1+O(\zeta^{-2})) - \frac{\sqrt{2 \pi }}{\Gamma (-a)} e^{-a \pi i+ \frac{\zeta^2}{4}}
	\zeta ^{-a-1} (1+O(\zeta^{-2})),   & -\frac{5 \pi }{4}< \arg \zeta < - \frac{ \pi }{4},
	\end{array}\right.
\end{equation}
where $ \Gamma(\cdot)$ is the Gamma function. Choose $a= -i \beta_{12} \beta_{21},$
\begin{align}
	& \phi_{11}(k)=c_1 D_a (e^{-\frac{ \pi i}{4}}k)+c_2 D_a(e^{\frac{3\pi i}{4}}k), \label{phi11 expression} \\
	& \phi_{22}(k)=c_3 D_{-a}(e^{- \frac{ 3\pi i}{4}} k)+c_4 D_{-a}(e^{\frac{ \pi i}{4}}k).
\end{align}
As $\arg k \in (\frac{\pi}{4}, \frac{3\pi}{4})$ and $k \to \infty,$ we achieve
\begin{equation}\label{phi 11 and phi 22}
	\phi_{11}(k) k^{i\nu } e^{-\frac{i k^2}{4}} \to 1, \quad 
	\phi_{22}(k) k^{-i\nu } e^{\frac{i k^2}{4}} \to 1.
\end{equation}
Set $k=\tau e^{\frac{\pi i}{4}}$. We obtain from \eqref{phi 11 and phi 22} that
\begin{equation}\label{phi 11 expression2}
	\begin{aligned}
	\phi_{11}(k)=\phi_{11}(\tau e^{\frac{\pi i}{4}})= &[1+O(\tau ^{-1})](\tau e^{\frac{\pi i}{4}})^{-i \nu } e^{\frac{i}{4}(\tau e^{\frac{\pi i}{4}})^2} \\
	=& [1+O(\tau ^{-1})] \tau^{-i \nu } e^{\frac{\pi \nu}{4} } e^{-\frac{\tau^2}{4}}.
	\end{aligned}
\end{equation}
Since $\arg k \in (\frac{\pi}{4}, \frac{3\pi}{4})$, we have $\arg \tau \in (0, \frac{\pi}{2}) $. From the first equation
of $D_a(\zeta)$ in \eqref{Da}, we obtain
\begin{equation}\label{Da result}
	\begin{aligned}
	&D_a(e^{-\frac{\pi i}{4}} k) =D_a(\tau) =\tau^a e^{-\frac{\tau^2}{4}} [1+O(\tau^{-2})],\\
	&D_a(e^{\frac{3\pi i}{4}} k) = D_a(-\tau) =(-\tau)^a e^{-\frac{\tau^2}{4}} [1+O(\tau^{-2})].		
	\end{aligned}
\end{equation}
Substituting \eqref{Da result} into equation \eqref{phi11 expression} and comparing with \eqref{phi 11 expression2}, we conclude that
\begin{equation}\label{parameter}
	c_1=e^{\frac{\pi \nu}{4}}, \quad  c_2=0, \quad  a=-i\nu .
\end{equation}
Thus,
\begin{equation}
	\phi_{11}(k)= e^{\frac{\pi \nu }{4}} D_a(e^{-\frac{ \pi i}{4}} k),
\end{equation}
then $\nu = \beta_{12} \beta_{21}$. Similarly, setting $ k= \kappa  e^{\frac{3\pi i}{4}}$, we obtain
\begin{equation}
	\phi_{22}(k)= e^{-\frac{3\pi \nu }{4}} D_{-a}(e^{-\frac{ 3\pi i}{4}} k).
\end{equation}
Combining this with \eqref{beta12 phi12} and \eqref{D a-1}, we deduce
\begin{equation}
	\beta_{12} \phi_{21}(k)= e^{\frac{\pi \nu }{4}} e^{-\frac{\pi i}{4}} a D_{a-1}(e^{-\frac{ \pi i}{4}} k).
\end{equation}
For $\arg k \in (-\pi, -\frac{3\pi}{4})\cup (\frac{3\pi}{4}, \pi)$ and $k \to \infty,$ in the same manner, we arrive at
\begin{align}
	& \phi_{11}(k)= e^{-\frac{3 \pi \nu }{4}} D_a(e^{\frac{3 \pi i}{4}} k), \quad
	\phi_{22}(k)= e^{\frac{\pi \nu }{4}} D_{-a}(e^{\frac{ \pi i}{4}} k),  \\
	& \beta_{12} \phi_{21}(k)= e^{-\frac{3 \pi \nu }{4}} e^{\frac{3 \pi i}{4}} a D_{a-1}(e^{\frac{ 3 \pi i}{4}} k).
\end{align}
Along the ray $ \arg k =\frac{3 \pi }{4},$ we have
\begin{equation}
	\phi_+(k)=	\phi_-(k) \begin{pmatrix} 1	&  0 \\
	-\frac{\gamma_1(k_0)}{1+\gamma_1(k_0)\gamma_2(k_0)} &  1	 \end{pmatrix}.
\end{equation}
Considering the (2,1) entry of the above RH problem,
\begin{equation}
	e^{\frac{\pi \nu }{4}} e^{-\frac{ \pi i}{4}} a D_{a-1} (e^{-\frac{\pi i}{4}}k)=
	e^{-\frac{3 \pi \nu}{4}} e^{\frac{ 3\pi i}{4}}  a D_{a-1}(e^{\frac{3 \pi i}{4}} k)  - \frac{\gamma_1(k_0)}{1+\gamma_1(k_0)\gamma_2(k_0)} \beta_{12}  e^{\frac{\pi \nu }{4}} D_{-a} (e^{\frac{  \pi i}{4}} k).
\end{equation}
It follows from \eqref{Da pm} that $ D_{-a} (e^{\frac{ \pi i}{4}} k)$ can be decomposed into the sum of
$D_{a-1}(e^{\frac{ 3 \pi i}{4}} k) $ and $D_{a-1}(e^{-\frac{ \pi i}{4}} k) $. Analyzing the coefficients of these two independent functions yields
\begin{equation}
	\beta_{12}= \frac{\sqrt{2 \pi } (1+\gamma_1(k_0)\gamma_2(k_0))  e^{\frac{3\pi i}{4}} e^{\frac{a \pi i}{2}} e^{-\pi \nu} a}{\gamma_1(k_0)  \Gamma(-a+1) }
	=- \frac{\sqrt{2 \pi } (1+\gamma_1(k_0)\gamma_2(k_0)) e^{-\frac{\pi i}{4}} e^{-\frac{a \pi i}{2}} a}{\gamma_1(k_0)   \Gamma(-a+1) }.
\end{equation}
Noting that $a=-i\nu$ in \eqref{parameter} and the property of Gamma function $\Gamma(a+1)=a\Gamma(a)$, we obtain
\begin{equation}\label{beta_12}
\beta_{12}= \frac{\sqrt{2 \pi } (1+\gamma_1(k_0)\gamma_2(k_0)) e^{-\frac{\pi i}{4}} e^{-\frac{{ \pi \nu}}{2}} }{\gamma_1(k_0)  \Gamma(i\nu)}.
\end{equation}
Following the same procedure, we find
\begin{equation}\label{beta_21}
\beta_{21}= -\frac{\sqrt{2 \pi}e^{\frac{\pi i}{4}}e^{-\frac{\pi \nu}{2}}}{\gamma_2(k_0)\Gamma(-i \nu)},
\end{equation}
which satisfies the relation $\nu=\beta_{12}\beta_{21}$ together with \eqref{beta_12}.

Combining \eqref{nu}, \eqref{Chi}, \eqref{delta 0}, \eqref{beta 12}, \eqref{tilde u6}, \eqref{tilde v6}, \eqref{beta_12} and \eqref{beta_21}, we arrive at the final result
\begin{equation}
\begin{aligned}
	\tilde{u}(y,t)=&\frac{\alpha_u(k_0)}{\sqrt{t}}e^{\Theta}+O\left(\frac{\log t}{t} \right),  \\ 
	\tilde{v}(y,t)=& \frac{\alpha_v(k_0)}{\sqrt{t}}e^{\tilde{\Theta}}+O\left(\frac{\log t}{t} \right), 
	\end{aligned} 
\end{equation}
where
\begin{equation*}
\begin{aligned}
\alpha_u(k_0)=&-\frac{1}{2k_0^2}\sqrt{\frac{\nu \gamma_1(k_0)}{i\gamma_2(k_0)} }, \quad \alpha_v(k_0)= \frac{1}{2k_0^2}\sqrt{\frac{\nu \gamma_1(k_0)}{i\gamma_2(k_0)} },\\
	\Theta =& -\frac{\pi i}{4}+\pi \nu +2k_0^2t- i\arg \Gamma (i \nu) -i\nu \log(4t)- i\arg \frac{\gamma_1(k_0)}{1+\gamma_1(k_0)\gamma_2(k_0)} \\ 
& -\frac{1}{\pi}
	\int_{-\infty}^{k_0} \log \left(\frac{1+\gamma_1(\xi)\gamma_2(\xi)}{1+\gamma_1(k_0)\gamma_2(k_0)}\right) \frac{\mathrm{d} \xi}{\xi -k_0}, \\ 
	\tilde{\Theta} =& \frac{\pi i}{4}-\pi \nu -2k_0^2t- i\arg \Gamma (-i \nu) +i\nu \log(4t)- i\arg \gamma_2(k_0) \\ 
& +\frac{1}{\pi}
	\int_{-\infty}^{k_0} \log \left(\frac{1+\gamma_1(\xi)\gamma_2(\xi)}{1+\gamma_1(k_0)\gamma_2(k_0)}\right) \frac{\mathrm{d} \xi}{\xi -k_0},
\end{aligned}
\end{equation*}
and
\begin{equation*}
\begin{aligned}
	k_0=&\frac{y}{2t}, \quad \nu=\frac{1}{2\pi} \log (1+\gamma_1(k_0)\gamma_2(k_0)).
\end{aligned}
\end{equation*}
With these comprehensive results, Theorem \ref{the result} can be established quickly.\\

\noindent{\bf Financial support.} 
This work is supported by the National Natural Science Foundation of China (Grant Nos. 12471234, 12371253, 12471240) and Science Foundation of Henan Academy of Sciences (Grant No. 20252319002).\\

\noindent{\bf Competing interests.} The authors have no conflicts to disclose.\\

\noindent{\bf Data availability statement.} No data was used for the research described in the article.


\begin{thebibliography}{90}
\bibitem{Heisenberg W}
Heisenberg, W. (1928) Zur Theorie des Ferromagnetismus. {\sl Z. Physik} {\bf 49}, 619-636.

\bibitem{Takhtajan-Heisenberg}
Takhtajan, L. A. (1977) Integration of the continuous Heisenberg spin chain through the inverse scattering method. {\sl Phys. Lett. A} {\bf 64}, 235-237.

\bibitem{ZT-1979}
Zakharov, V. E., Takhtajan, L. A. (1979) Equivalence of the nonlinear Schr\"odinger equation and the equation
of a Heisenberg ferromagnet. {\sl Theor. Math. Phys.} {\bf 38}, 17-23.

\bibitem{Darboux-Heisenberg}
Yersultanova, Z. S., Zhassybayeva, M., Yesmakhanova, K., Nugmanova, G., Myrzakulov, R. (2016) 
Darboux transformation and exact solutions of the integrable Heisenberg ferromagnetic equation with self-consistent potentials. 
{\sl Int. J. Geom. Methods Mod. Phys.} {\bf 13}, 1550134.
 
\bibitem{Zha} Zhang, Y., Nie, X. J., Zha, Q. L. (2014) Rogue wave solutions for the Heisenberg ferromagnet equations. {\sl Chin. Phys. Lett.} {\bf 31}, 060201.

\bibitem{CCHF}
Demontis, F., Ortenzi, G., Sommacal, van der Mee, M. C. (2019) The continuous classical Heisenberg ferromagnet equation with in-plane asymptotic conditions. II. IST and closed-form soliton solutions, Ric. Mat. 68 (2019) 163-178.

\bibitem{IST-Gardner}
Gardner, C. S., Greene, J. M., Kruskal, M. D., Miura, R. M. (1967) 
Method for solving the Korteweg-deVries equation. {\sl Phys. Rev. Lett.} {\bf 19}, 1095-1097.

\bibitem{Zakharov}
Zakharov, V. E., Shabat, A. B. (1972) 
Exact theory of two-dimensional self-focusing and one-dimensional self-modulation of waves in nonlinear media. {\sl Sov. Phys. JETP} {\bf 34}, 62-69.

\bibitem{DeiftZ}
Deift, P., Zhou, X. (1993) 
A steepest descent method for oscillatory Riemann-Hilbert problems. Asymptotics for the MKdV equation. {\sl Ann. Math.} {\bf 137}, 295-368.

\bibitem{Andreiev-ELT2016} 
Andreiev, K., Egorova, I., Lange, T. L., Teschl, G. (2016) 
Rarefaction waves of the Korteweg-de Vries equation via nonlinear steepest descent. {\sl J. Differ. Equ.} {\bf 261}, 5371-5410.

\bibitem{Minakov-2011}
Minakov, A. (2011) Long-time behavior of the solution to the mKdV equation with
step-like initial data. {\sl J. Phys. A} {\bf 44}, 085206.

\bibitem{Liunan-mKdV}
Liu, N., Guo, B. L., Wang, D. S., Wang, Y. F. (2019)  Long-time asymptotic behavior for an extended modified Kortweg-de Vries equation. 
{\sl Commun. Math. Sci.} {\bf 17}, 1877-1913.

\bibitem{Deift-P2011} 
Deift, P., Park, J. (2011)  
Long-time asymptotics for solutions of the NLS equation with a delta potential and even initial data.
{\sl Int. Math. Res. Not.} {\bf 2011}, 5505-5624. 

\bibitem{Biondini-M2017} 
Biondini, G., Mantzavinos, D. (2017) 
Long-time asymptotics for the focusing nonlinear Schr\"odinger equation with nonzero boundary conditions at infinity and asymptotic stage of modulational instability.
{\sl Commun. Pure Appl. Math.} {\bf 70}, 2300-2365.

\bibitem{Boutet-KS2022} 
Boutet de Monvel, A., Karpenko, I., Shepelsky, D. (2022) 
The modified Camassa-Holm equation on a nonzero background: large-time asymptotics for the Cauchy problem.
{\sl Pure Appl. Funct. Anal.} {\bf 7}, 887-914.

\bibitem{Boutet-LS2021} 
Boutet de Monvel, A., Lenells, J., Shepelsky, D. (2021) 
The focusing NLS equation with step-like oscillating background: scenarios of long-time asymptotics. 
{\sl Commun. Math. Phys.} {\bf 383}, 893-952.

\bibitem{Boutet-LS2022} 
Boutet de Monvel, A., Lenells, J., Shepelsky, D. (2022) 
The focusing NLS equation with step-like oscillating background: the genus 3 sector.
{\sl Commun. Math. Phys.} {\bf 390}, 1081-1148.

\bibitem{Cheng-VZ1999} 
Cheng, P., Venakides, S., Zhou, X. (1999)   
Long-time asymptotics for the pure radiation solution of the sine-Gordon equation.
{\sl Commun. Partial Differ. Equ.} {\bf 24}, 1195-1262.

\bibitem{Kitaev-V1999} 
Kitaev, A. V.,  Vartanian, A. H. (1999) 
Asymptotics of solutions to the modified nonlinear Schr\"odinger equation: solitons on a nonvanishing continuous background.
{\sl SIAM J. Math. Anal.} {\bf 30}, 787-832.

\bibitem{Boutet-LS2019} 
Boutet de Monvel, A., Lenells, J.,D. Shepelsky, D. (2019)
Long-time asymptotics for the Degasperis-Procesi equation on the half-line.
{\sl Ann. Inst. Fourier} {\bf 69}, 171-230.

\bibitem{Boutet-S2014} 
Boutet de Monvel, A., D. Shepelsky, D. (2014)  
The Ostrovsky-Vakhnenko equation: a Riemann-Hilbert approach.
{\sl C. R. Math. Acad. Sci. Paris} {\bf 352}, 189-195.

\bibitem{Boutet-SZ2016} 
Boutet de Monvel, A., Shepelsky, D., Zielinski, L. (2016) 
A Riemann-Hilbert approach for the Novikov equation.
{\sl SIGMA Symmetry Integrability Geom. Methods Appl.} {\bf 12}, 095.

\bibitem{Geng-L2018} 
Geng, X. G., Liu, H. (2018)  
The nonlinear steepest descent method to long-time asymptotics of the coupled nonlinear Schr\"odinger equation.
{\sl J. Nonlinear Sci.} {\bf 28}, 739-763.

\bibitem{Geng-WC2021} 
Geng, X. G., Wang, K. D., Chen, M. M. (2021) 
Long-time asymptotics for the spin-1 Gross-Pitaevskii equation.
{\sl Commun. Math. Phys.} {\bf 382}, 585-611.

\bibitem{Geng-WC2022} 
Geng, X. G., Wang, K. D., Chen, M. M. (2022)  
The Hermitian symmetric space Fokas-Lenells equation: spectral analysis and long-time asymptotics.
{\sl IMA J. Appl. Math.} {\bf 87}, 852-905.

\bibitem{Liu-GX2018} 
Liu, H.,  Geng, X. G., Xue, B. (2018) 
The Deift-Zhou steepest descent method to long-time asymptotics for the Sasa-Satsuma equation.
{sl J. Differ. Equ.} {\bf 265}, 5984-6008.

\bibitem{Hirota-F}
Huang, L., Xu, J., Fan, E. G. (2015) Long-time asymptotic for the Hirota equation via nonlinear steepest descent method. 
{\sl Nonlinear Anal. Real World Appl.} {\bf 26}, 229-262.

\bibitem{coupled Hirota}
Liu, N., Guo, B.L. (2021) Long-time asymptotics for the initial-boundary value problem of coupled Hirota equation on the half-line.
{\sl Sci. China Math.} {\bf 64}, 81-110.
\bibitem{TianSF}
Tian, S. F., Zhang, T. T. (2018) 
Long-time asymptotic behavior for the Gerdjikov-Ivanov type of derivative nonlinear Schr\"odinger equation with time-periodic boundary condition. {\sl Proc. Amer. Math. Soc.} {\bf 146}, 1713-1729.
\bibitem{GuoBL}
Guo, B. L., Liu, N., (2019) The Gerdjikov‐Ivanov-type derivative nonlinear Schr\"odinger equation: Long‐time dynamics of nonzero boundary conditions. {\sl Math. Methods Appl. Sci.} {\bf 42}, 4839-4861.
\bibitem{ZQMW-GI}
Zheng, F. Z., Qin, Z. Y., Mu, G., Wang, T. Y. (2025)
Long-time asymptotics for the combined nonlinear Schr\"odinger and Gerdjikov-Ivanov equation.
{\sl East Asian J. Appl. Math.} {\bf 15}, 163-184.
\bibitem{discrete-mKdV}
Chen, M., Fan, E. G. (2020) Long-time asymptotic behavior for the discrete defocusing mKdV equation.  
{\sl J. Nonlinear Sci.} {\bf 30}, 953-990.
\bibitem{VanishingLemma}
Ablowitz, M.J., A.S. Fokas, A. S. (2023) 
 Complex Variables: Introduction and Applications, 2nd ed.,
Cambridge University Press, Cambridge.

\bibitem{BC}
Beals, R., Coifman, R. R. (1984) 
Scattering and inverse scattering for first order systems.
{\sl Commun. Pure Appl. Math.} {\bf 37}, 39-90.

\bibitem{ModelRHP}
Whittaker, E. T., Watson, G. N. (1927) A Course of Modern Analysis,
Cambridge University Press, Cambridge.

\end{thebibliography}
\end{document}